\newcommand{\pt}{{\cal PT}}
\newcommand{\re}{\mathrm{Re}}
\newcommand{\im}{\mathrm{Im}}
\newcommand{\overbar}[1]{\mkern 1.5mu\overline{\mkern-1.5mu#1\mkern-1.5mu}\mkern 1.5mu}
\newtheorem{theorem}{Theorem}[section]
\newtheorem{lemma}[theorem]{Lemma}
\newtheorem{proposition}[theorem]{Proposition}
\newtheorem{corollary}[theorem]{Corollary}
\newtheorem{remark}{Remark}[section]
\begin{document}

\title{\bf Existence and stability of $\pt$-symmetric vortices \\
in nonlinear two-dimensional square lattices}

\author{Haitao Xu$^{1}$, P.G. Kevrekidis$^{1}$ and Dmitry E. Pelinovsky$^{2,3}$ \\
{\small $^{1}$ Department of Mathematics and Statistics, University of
Massachusetts, Amherst, MA 01003-9305, USA} \\
{\small $^{2}$ Department of Mathematics, McMaster
University, Hamilton, Ontario, Canada, L8S 4K1} \\
{\small $^{3}$ Department of Applied Mathematics, Nizhny Novgorod State Technical University, Nizhny Novgorod, Russia } }

\date{\today}
\maketitle

\begin{abstract}
Vortices symmetric with respect to simultaneous parity and time reversing transformations
are considered on the square lattice in the framework of the discrete nonlinear Schr\"{o}dinger
equation. The existence and stability of vortex configurations is analyzed in the limit
of weak coupling between the lattice sites, when predictions on the elementary cell of a square
lattice (i.e., a single square) can be extended to a large (yet finite) array of lattice cells.
Our analytical predictions are found to be in good agreement with numerical computations.
\end{abstract}

\section{Introduction}

Networks of coupled nonlinear oscillators with balanced gains and losses have been
considered recently in the context of nonlinear $\pt$-symmetric lattices. Among many other questions, attention has been paid to issues of
linear and nonlinear stability of constant equilibrium states \cite{pel1} and spatially distributed steady states \cite{pel2,pel3} in such systems.
More generally, the study of solitary waves in such $\pt$-symmetric
lattices~\cite{dmitriev} has garnered considerable attention over the years,
as can be inferred also from a recent comprehensive review
on the subject~\cite{yura}. A significant recent boost to
the relevant interest has been offered by the experimental observation
of optical solitons in lattice settings~\cite{miri}.
Many of the relevant theoretical notions have been developed also
in continuum systems with periodic potentials in both
scalar~\cite{mussli1,mussli2} and vector~\cite{kartash} settings.

In higher dimensions, the number of studies of $\pt$-symmetric
lattices and the coherent structures that they support
is considerably more limited.
Nonlinear states bifurcating out of linear (point spectrum)
modes of a potential and their stability have been studied~\cite{jianke}
and so have gap solitons~\cite{zhu}. However, an understanding
of fundamentally topological states such as vortices and their
existence and stability properties is still an active theme of study. The few
studies addressing these topological structures
have been chiefly numerical in
nature~\cite{malomchina,leykam,guenther}.
It is, thus, the aim of the present study to explore a two-dimensional
square lattice setting and to provide an understanding of the existence and
stability properties
of the stationary states it can support, placing a particular emphasis on
the vortical structures.

We will be particularly interested in the following
$\pt$-symmetric model of the discrete nonlinear Schr\"{o}dinger (dNLS) type~\cite{dnlsbook},
\begin{equation}
\label{pt-dnls}
i \frac{d \psi_j}{d t} + (\Delta \psi)_j +|\psi_j|^2 \psi_j = i \gamma_j \psi_j,
\end{equation}
where $\psi_j \in \mathbb{C}$ depends on the lattice site $j \in \mathbb{Z}^2$ and
the time variable $t \in \mathbb{R}$ (in optics, this corresponds
to the spatial propagation direction),
$(\Delta \psi)_j$ denotes the discrete Laplacian operator at the $j$-th site of the square lattice,
and the distribution of the parameter values $\gamma_j \in \mathbb{R}$ for gains
or losses is supposed to be $\pt$-symmetric. The dNLS equation
is a prototypical model for the study of optical waveguide
arrays~\cite{moti}, and the principal setting in which $\pt$-symmetry
was first developed experimentally (in the context of few waveguides,
such as the dimer setting~\cite{kip}).

The $\pt$-symmetry holds if
the distribution of $\gamma_j$ is odd with respect to reflections
of the lattice sites in $\mathbb{Z}^2$ about a selected center or line
of symmetry. In particular, we will consider two
natural symmetric configurations shown in Figure \ref{vortex1}. The left panel
shows the symmetry about a vertical line located on the equal distance
between two vertical arrays of lattice sites. The right panel shows the symmetry
about a center point in the elementary
cell of the square lattice.
\begin{figure}[!htbp]
\centering
\begin{tabular}{c}
\includegraphics[width=12cm]{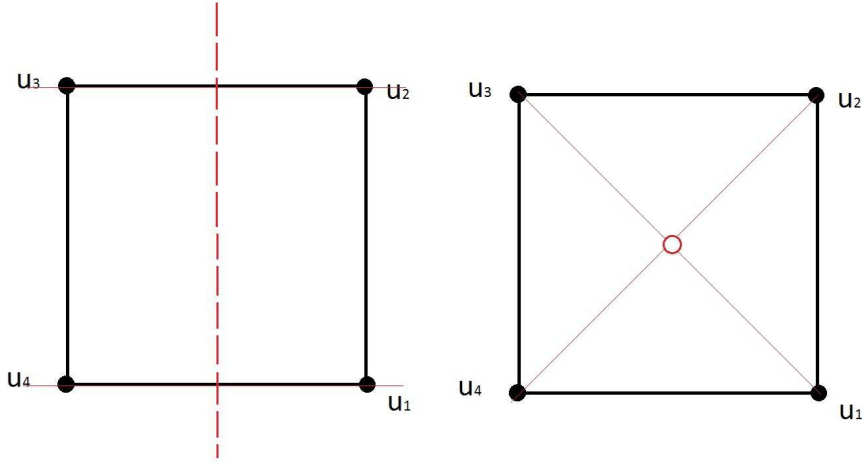}
\end{tabular}
\caption{Schematic $\pt$-symmetry for the elementary cell of the square lattice.}
\label{vortex1}
\end{figure}

In order to enable the analytical consideration of the existence and stability of vortices
in the $\pt$-symmetric dNLS equation (\ref{pt-dnls}), we will consider steady states in the limit of
large energy \cite{pel2,pel3}. This enables a formalism of the so-called anti-continuum limit
in analysis of steady states in nonlinear lattices. In particular, we
set $\psi_j(t) = \varphi_j(t) e^{iEt}$ and introduce the scaling
\begin{equation}
\label{transformation}
E = \epsilon^{-1}, \quad \varphi_j(t) = u_j(\tau) \epsilon^{-1/2}, \quad \tau = t \epsilon^{-1}.
\end{equation}
As a result of this transformation, the $\pt$-symmetric dNLS equation (\ref{pt-dnls}) can be rewritten in the equivalent form
\begin{equation}
\label{pt-dnls-equiv}
i \frac{d u_j}{d \tau} - u_j + \epsilon (\Delta u)_j + |u_j|^2 u_j = i \epsilon \gamma_j u_j,
\end{equation}
where the parameter $\epsilon$ is small in the limit of large energy $E$.
Moreover, if $\epsilon \to 0^+$, then $E \to +\infty$, whereas if $ \epsilon \to 0^-$, then $E \to -\infty$.
Setting $\epsilon = 0$ yields the system
of uncoupled conservative nonlinear oscillators, therefore, small values of $\epsilon$ can be considered by
the perturbation theory from the uncoupled conservative limit.
In light of this transformation, the analysis will
follow our previous work on existence
and stability of vortices in conservative lattices of the dNLS type \cite{kev1} (see also applications in
\cite{kev2,kev3}). In what follows, we apply the continuation technique to obtain definite conclusions
on vortices in the $\pt$-symmetric dNLS equation (\ref{pt-dnls-equiv}).

We will focus on the basic vortex configuration, for which the excited oscillators
are only supported on the elementary cell shown on Figure \ref{vortex1}.
In this case, the definite conclusions on existence of $\pt$-symmetric vortices
can already be extracted from studies of the $\pt$-symmetric dNLS equation (\ref{pt-dnls-equiv})
on four sites only, see Section \ref{section-four-sites}.
This is the so-called plaquette setting in \cite{guenther}.
With the $\pt$-symmetry in hand and appropriate (e.g. Dirichlet)
boundary conditions on the square lattice truncated symmetrically in a suitable square domain,
one can easily upgrade these conclusions for the full dNLS equation (\ref{pt-dnls-equiv}),
see Section \ref{section-full}. The existence results
remain valid in the infinite square lattice, thanks to the choice of the sequence spaces
such as $\ell^2(\mathbb{Z}^2)$.

Stability of $\pt$-symmetric vortices on the four sites can be analyzed in the framework
of the Lyapunov--Schmidt reduction method, see Section \ref{section-stability}.
However, one needs to be more careful to study stability of the localized steady states
on large square lattices because the zero equilibrium may become spectrally unstable
in the lattices with spatially extended gains and losses \cite{lett1,lett2}.
Stable configurations depend sensitively on the way gains and losses compensate each other,
especially if the two-dimensional square lattice is truncated to a finite size.
These aspects are discussed in Section \ref{section-stability-extended}.
Finally, Section~\ref{conclu} provides some conclusions, as well as
offers an outlook towards future work.

\section{Existence of $\pt$-symmetric vortices in the elementary cell}
\label{section-four-sites}

Let us consider the elementary cell of the square lattice shown on Figure \ref{vortex1}. We will enumerate the four corner sites
in the counterclockwise order with the first site to lie at the bottom right. By the construction,
the configuration has a cyclic symmetry with respect to the shift along the elementary cell.

Looking for the steady-state
solutions $u_j(\tau) = \phi_j e^{-2i \epsilon \tau}$, where the exponential factor removes the diagonal
part of the Laplacian operator $\Delta$ connecting each of the three lattice sites in the elementary cell,
we rewrite the $\pt$-symmetric dNLS equation (\ref{pt-dnls-equiv}) in the explicit form
\begin{eqnarray}
\label{eqn0_1}
\left\{ \begin{array}{l}
(1-|\phi_1|^2)\phi_1 - \epsilon( \phi_2 + \phi_4 - i\gamma_1 \phi_1) = 0, \\
(1-|\phi_2|^2)\phi_2 - \epsilon( \phi_1 + \phi_3 - i\gamma_2 \phi_2) = 0, \\
(1-|\phi_3|^2)\phi_3 - \epsilon( \phi_2 + \phi_4 - i\gamma_3 \phi_3) = 0, \\
(1-|\phi_4|^2)\phi_4 - \epsilon( \phi_1 + \phi_3 - i\gamma_4 \phi_4) = 0. \end{array} \right.
\end{eqnarray}

In the following, we consider two types of $\pt$-symmetric configurations
for the gain and loss parameters. These two configurations
correspond to the left and right panels of Figure \ref{vortex1}.
\begin{itemize}
\item[(S1)] Symmetry about the vertical line: $\gamma_1=-\gamma_4$ and $\gamma_2=-\gamma_3$;
\item[(S2)] Symmetry about the center: $\gamma_1=-\gamma_3$ and $\gamma_2=-\gamma_4$.
\end{itemize}
Besides the cyclic symmetry, one can also flip each of the two configurations
about the vertical or horizontal axes of symmetries. In addition, for the configuration
shown on the right panel of Figure \ref{vortex1}, one can also flip the configuration
about the center of symmetry, either between the first and third sites or between the second and fourth sites.

The $\pt$-symmetric stationary
states that we explore correspond to particular reductions of the system of
algebraic equations (\ref{eqn0_1}), namely,
\begin{itemize}
\item[(S1)] Symmetry about the vertical line: $\phi_1=\bar{\phi}_4$ and $\phi_2=\bar{\phi}_3$;
\item[(S2)] Symmetry about the center: $\phi_1=\bar{\phi}_3$ and $\phi_2=\bar{\phi}_4$.
\end{itemize}
Due to the symmetry conditions, the system of algebraic equations (\ref{eqn0_1})
reduces to two equations for $\phi_1$ and $\phi_2$ only. We shall classify all possible solutions separately
for the two different symmetries. We also note the symmetry of solutions with respect to changes in the sign of
$\{ \phi_j \}_{1 \leq j \leq 4}$, $\{ \gamma_j \}_{1 \leq j \leq 4}$, and $\epsilon$.
\begin{remark}
\label{remark1_1}
If $\{\phi_j\}_{1\leq j\leq4}$ solves the system (\ref{eqn0_1}) with $\{\gamma_j\}_{1\leq j\leq4}$ and $\epsilon$,
then $\{\overbar{\phi}_j\}_{1\leq j\leq4}$ solves the same system with $\{-\gamma_j\}_{1\leq j\leq4}$ and $\epsilon$.
\end{remark}

\begin{remark}
\label{remark1_2}
If $\{\phi_j\}_{1\leq j\leq4}$ solves the system (\ref{eqn0_1}) with $\{\gamma_j\}_{1\leq j\leq4}$ and $\epsilon$,
then $\{ (-1)^j \overbar{\phi}_j\}_{1\leq j\leq4}$ solves the same system with $\{\gamma_j\}_{1\leq j\leq4}$ and $-\epsilon$.
\end{remark}

\begin{remark}
\label{remark1_3}
If $\{\phi_j\}_{1\leq j\leq4}$ solves the system (\ref{eqn0_1}) with $\{\gamma_j\}_{1\leq j\leq4}$ and $\epsilon$,
then $\{-\phi_j\}_{1\leq j\leq4}$ solves the same system with $\{\gamma_j\}_{1\leq j\leq4}$ and $\epsilon$.
\end{remark}

As is known from the previous work \cite{kev1}, if gains and losses are absent, that is, if $\gamma_j = 0$ for all $j$,
then the solutions of the system (\ref{eqn0_1}) are classified into two groups:
\begin{itemize}
\item discrete solitons if $\arg(\phi_j) = \theta_0 \; {\rm mod}(\pi)$ for all $j$;
\item discrete vortices, otherwise.
\end{itemize}
Persistence of discrete solitons is well known for the $\pt$-symmetric networks \cite{pel2,pel3},
whereas persistence of discrete vortices has not been theoretically
established in the
literature. The term ``persistence"
refers to the unique continuation of the limiting configuration at $\epsilon = 0$ with respect to the small
parameter $\epsilon$. The gain and loss parameters are considered to be fixed in this continuation.

\subsection{Symmetry about the vertical line (S1)}

Under conditions $\gamma_1=-\gamma_4$, $\gamma_2=-\gamma_3$, $\phi_1=\bar{\phi}_4$, and
$\phi_2=\bar{\phi}_3$, the system (\ref{eqn0_1}) reduces to two algebraic equations:
\begin{eqnarray}
\label{eqn1_1}
\left\{ \begin{array}{l}
f_1 := (1-|\phi_1|^2)\phi_1 - \epsilon( \bar{\phi}_1 + \phi_2 - i\gamma_1 \phi_1 ) = 0, \\
f_2 := (1-|\phi_2|^2)\phi_2 - \epsilon( \bar{\phi}_2 + \phi_1 - i\gamma_2 \phi_2 ) = 0. \end{array} \right.
\end{eqnarray}
In general, it is not easy to solve the system (\ref{eqn1_1}) for any given $\gamma_1$, $\gamma_2$ and $\epsilon$.
However, branches of solutions can be classified through continuation from the limiting case $\epsilon=0$
to an open set $\mathcal{O}(0)$ of the $\epsilon$ values that contains $0$. Simplifying the general approach
described in \cite{kev1} for the $\pt$-symmetric vortex configurations, we obtain the following result.

\begin{lemma}
\label{lemma2_1}
Consider the general solution of the system (\ref{eqn1_1}) at $\epsilon = 0$ in the form:
\begin{equation}
\label{limit-vortex}
\phi_j^{(\epsilon = 0)}(\theta_j) = e^{i \theta_j}, \quad \theta_j \in \mathbb{T} := \mathbb{R} / (2 \pi \mathbb{Z}).
\end{equation}
For every $\gamma_1$ and $\gamma_2$, there exists a $C^{\infty}$ function
$\boldsymbol{h}(\boldsymbol{\theta},\epsilon) : \mathbb{T}^2 \times \mathbb{R} \to \mathbb{R}^2$
such that there exists a unique solution $\boldsymbol{\phi} \in \mathbb{C}^2$
to the system (\ref{eqn1_1}) near $\boldsymbol{\phi}^{(\epsilon = 0)}(\boldsymbol{\theta}) \in \mathbb{C}^2$
for every $\epsilon \in \mathcal{O}(0)$ if and only if there exists a unique solution
$\boldsymbol{\theta} \in \mathbb{T}^2$ of the system $\boldsymbol{h}(\boldsymbol{\theta},\epsilon)=0$ for
every $\epsilon \in \mathcal{O}(0)$.
\end{lemma}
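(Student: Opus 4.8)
The plan is to set up a Lyapunov--Schmidt decomposition tailored to the gauge structure of the nonlinearity. At $\epsilon = 0$ the system (\ref{eqn1_1}) decouples into $(1-|\phi_j|^2)\phi_j = 0$, whose nonzero solutions form the two circles $\phi_j = e^{i\theta_j}$, $\theta_j \in \mathbb{T}$. I would introduce polar-type coordinates adapted to this family: writing $\boldsymbol{\phi} = (\phi_1,\phi_2)$, decompose each equation $f_j = 0$ into components along the tangent direction $i e^{i\theta_j}$ (the ``phase'' direction, along which the $\epsilon = 0$ solution is degenerate) and along the complementary radial direction $e^{i\theta_j}$. First I would fix $\boldsymbol\theta \in \mathbb{T}^2$ and solve for the radial corrections. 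Concretely, set $\phi_j = e^{i\theta_j}(1 + \rho_j + i \sigma_j)$ with real $\rho_j, \sigma_j$, or more invariantly split $\mathbb{C}^2 = X \oplus Y$ where $Y = \mathrm{span}\{ i e^{i\theta_1} \} \oplus \mathrm{span}\{ i e^{i\theta_2} \}$ is the two-dimensional kernel of the linearization at $\epsilon = 0$ and $X$ its complement. Projecting the system onto $X$ gives, for each $\epsilon$ near $0$ and each $\boldsymbol\theta$, a nondegenerate problem because the linearization restricted to $X$ is invertible (its eigenvalue in the radial direction is $-2$, coming from $\partial_{\rho}[(1-\rho^2)\rho]|_{\rho=1}$).

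Next I would apply the implicit function theorem to the $X$-projection. The map $\mathbb{C}^2 \times \mathbb{T}^2 \times \mathbb{R} \ni (\boldsymbol\phi, \boldsymbol\theta, \epsilon) \mapsto \Pi_X \boldsymbol f(\boldsymbol\phi, \epsilon)$ is smooth (polynomial in $\boldsymbol\phi$ and $\overline{\boldsymbol\phi}$, linear in $\epsilon$, smooth in $\boldsymbol\theta$), vanishes at $(\boldsymbol\phi^{(\epsilon=0)}(\boldsymbol\theta), \boldsymbol\theta, 0)$, and has invertible Fréchet derivative in the $X$-direction there. Hence there is a neighborhood $\mathcal{O}(0)$ of $\epsilon = 0$ and a $C^\infty$ function $\boldsymbol\phi = \boldsymbol\Phi(\boldsymbol\theta, \epsilon)$, with $\boldsymbol\Phi(\boldsymbol\theta, 0) = \boldsymbol\phi^{(\epsilon=0)}(\boldsymbol\theta)$, solving the $X$-projection uniquely near the limiting configuration, uniformly over the compact set $\mathbb{T}^2$ (so $\mathcal{O}(0)$ can be chosen independent of $\boldsymbol\theta$). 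Substituting this back into the remaining $Y$-projection defines
\begin{equation}
\label{eqn-bifurcation-h}
\boldsymbol h(\boldsymbol\theta, \epsilon) := \Pi_Y \boldsymbol f\big(\boldsymbol\Phi(\boldsymbol\theta,\epsilon), \epsilon\big) \in \mathbb{R}^2,
\end{equation}
identifying $Y \cong \mathbb{R}^2$ via the basis $\{ i e^{i\theta_1}, i e^{i\theta_2}\}$; this $\boldsymbol h$ is $C^\infty$ on $\mathbb{T}^2 \times \mathbb{R}$. By construction, a full solution $\boldsymbol\phi \in \mathbb{C}^2$ of (\ref{eqn1_1}) near $\boldsymbol\phi^{(\epsilon=0)}(\boldsymbol\theta)$ exists for $\epsilon \in \mathcal{O}(0)$ precisely when $\boldsymbol h(\boldsymbol\theta, \epsilon) = 0$, and uniqueness of the solution $\boldsymbol\phi$ (given $\boldsymbol\theta$) is automatic from the IFT while the correspondence $\boldsymbol\phi \leftrightarrow \boldsymbol\theta$ is a bijection near the limiting circle family; this yields the stated ``if and only if.''

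The point requiring the most care is the claim that $\boldsymbol h$ takes values in $\mathbb{R}^2$ rather than $\mathbb{C}^2$, i.e.\ that the $Y$-projection of each equation is automatically real in the chosen trivialization. This is where the specific structure of the reduced system (\ref{eqn1_1}) enters: the equations $f_j = 0$ have a discrete conjugation-type symmetry (reflected in Remark~\ref{remark1_1}) which, combined with the reality of the $(1-|\phi|^2)\phi$ term and the $\phi_1=\bar\phi_4$, $\phi_2=\bar\phi_3$ reduction already imposed, forces the tangential residuals to be real once the radial equations are satisfied — equivalently, the two real equations in the radial direction and the global phase constraint already consume all but two real degrees of freedom. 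I would verify this by writing out $\langle f_j, i e^{i\theta_j}\rangle$ explicitly and checking that the terms not proportional to a real combination cancel on the solution manifold $\boldsymbol\Phi$. The remaining assertions — smoothness and the uniform choice of $\mathcal{O}(0)$ — are then routine consequences of the smoothness of the data and compactness of $\mathbb{T}^2$, exactly as in the conservative case treated in \cite{kev1}.
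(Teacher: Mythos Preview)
Your approach is essentially the paper's: both split each equation into radial and angular components, solve the radial part by the implicit function theorem using the nondegenerate eigenvalue $-2$ at $r_j=1$, and define $\boldsymbol{h}$ as the remaining angular residual. The paper does this concretely via polar coordinates $\phi_j = r_j e^{i\theta_j}$, setting $g_j := \mathrm{Re}(f_j e^{-i\theta_j})$ and $h_j := \mathrm{Im}(f_j e^{-i\theta_j})$, so the reality of $\boldsymbol{h}$ is immediate from its definition as an imaginary part---your final paragraph's appeal to Remark~\ref{remark1_1} and the $\pt$-reduction is unnecessary.
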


\begin{proof}
Representing the unknown solution with $\phi_j = r_j e^{i \theta_j}$, where $r_j \in \mathbb{R}^+$ and $\theta_j \in \mathbb{T}$,
we separate the real and imaginary parts in the form $g_j := {\rm Re}(f_j e^{-i \theta_j})$ and
$h_j := {\rm Im}(f_j e^{-i \theta_j})$. For convenience, we write the explicit expressions:
\begin{eqnarray}
\label{eqn1_2}
\left\{ \begin{array}{l}
g_1 := (1-r_1^2) r_1 - \epsilon \left( r_1 \cos(2 \theta_1) + r_2 \cos(\theta_2 - \theta_1)\right), \\
g_2 := (1-r_2^2) r_2 - \epsilon \left( r_2 \cos(2 \theta_2) + r_1 \cos(\theta_1 - \theta_2)\right) \end{array} \right.
\end{eqnarray}
and
\begin{eqnarray}
\label{eqn1_3}
\left\{ \begin{array}{l}
h_1 := \epsilon  \left( r_1 \sin(2 \theta_1) - r_2 \sin(\theta_2 - \theta_1) + \gamma_1 r_1 \right), \\
h_2 := \epsilon \left( r_2 \sin(2 \theta_2) - r_1 \sin(\theta_1 - \theta_2) + \gamma_2 r_2 \right). \end{array} \right.
\end{eqnarray}
It is clear that $\boldsymbol{\phi}$ is a root of $\boldsymbol{f}$ if and only if
$(\boldsymbol{r},\boldsymbol{\theta}) \in \mathbb{R}^2 \times \mathbb{T}^2$
is a root of $(\boldsymbol{g},\boldsymbol{h}) \in \mathbb{R}^2 \times \mathbb{R}^2$.
Moreover, $(\boldsymbol{g},\boldsymbol{h})$ is smooth both in $(\boldsymbol{r},\boldsymbol{\theta})$ and $\epsilon$.

For $\epsilon = 0$, we pick the solution with $\boldsymbol{r} = {\bf 1}$ and
$\boldsymbol{\theta} \in \mathbb{T}^2$ arbitrary, as per the explicit expression (\ref{limit-vortex}).
Since $\boldsymbol{g}$ is smooth in $\boldsymbol{r}$,
$\boldsymbol{\theta}$, and $\epsilon$, whereas the Jacobian $\partial_{\boldsymbol{u}} \boldsymbol{g}$
at $\boldsymbol{r} = {\bf 1}$ and $\epsilon = 0$ is invertible, the Implicit Function Theorem for smooth vector functions
applies. From this theorem, we deduce the existence of a unique $\boldsymbol{r} \in \mathbb{R}^2$ near
$\boldsymbol{1} \in \mathbb{R}^2$
for every $\boldsymbol{\theta} \in \mathbb{T}^2$ and $\epsilon \in \mathbb{R}$ sufficiently small,
such that the mapping $(\boldsymbol{\theta},\epsilon) \mapsto \boldsymbol{r}$  is smooth
and $\| \boldsymbol{r} - \boldsymbol{1} \| \leq C |\epsilon|$ for an $\epsilon$-independent constant $C > 0$.

Substituting the smooth mapping $(\boldsymbol{\theta},\epsilon) \mapsto \boldsymbol{r}$
into the definition of $\boldsymbol{h}$, we obtain the smooth function
$\boldsymbol{h}(\boldsymbol{\theta},\epsilon) : \mathbb{T}^2 \times \mathbb{R} \to \mathbb{R}^2$,
the root of which yields the assertion of the lemma.
\end{proof}

Lemma \ref{lemma2_1} represents the first step of the Lyapunov--Schmidt reduction algorithm, namely,
a reduction of the original system (\ref{eqn1_1}) to the bifurcation equation for the root of
a smooth function $\boldsymbol{h}(\boldsymbol{\theta},\epsilon) : \mathbb{T}^2 \times \mathbb{R} \to \mathbb{R}^2$,
defined from the system (\ref{eqn1_2}) and (\ref{eqn1_3}). The following lemma
represents the second step of the Lyapunov--Schmidt reduction algorithm, namely,
a solution of the bifurcation equation in the same limit of small $\epsilon$.

\begin{lemma}
\label{lemma2_2}
Denote $\boldsymbol{H}(\boldsymbol{\theta}) = \lim_{\epsilon \to 0} \epsilon^{-1} \boldsymbol{h}(\boldsymbol{\theta},\epsilon)$
and the corresponding Jacobian matrix $\mathcal{N}(\boldsymbol{\theta}) = \partial_{\boldsymbol{\theta}} \boldsymbol{H}(\boldsymbol{\theta})$.
Assume that $\boldsymbol{\theta}^{(\epsilon = 0)} \in \mathbb{T}^2$ is a root of $\boldsymbol{H}$
such that $\mathcal{N}(\boldsymbol{\theta}^{(\epsilon = 0)})$ is invertible.
Then, there exists a unique root $\boldsymbol{\theta} \in \mathbb{T}^2$
of $\boldsymbol{h}(\boldsymbol{\theta},\epsilon)$ near $\boldsymbol{\theta}^{(\epsilon = 0)}$
for every $\epsilon \in \mathcal{O}(0)$ such that the mapping $\epsilon \mapsto \boldsymbol{\theta}$  is smooth
and $\| \boldsymbol{\theta} - \boldsymbol{\theta}^{(\epsilon = 0)} \| \leq C |\epsilon|$
for an $\epsilon$-independent positive constant $C$.
\end{lemma}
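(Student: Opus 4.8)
The plan is to resolve the degeneracy that blocks a direct application of the Implicit Function Theorem to $\boldsymbol{h}$. Since every component of $\boldsymbol{h}$ in (\ref{eqn1_3}) carries an explicit prefactor $\epsilon$ (even after substituting the smooth map $(\boldsymbol{\theta},\epsilon)\mapsto\boldsymbol{r}$ furnished by Lemma \ref{lemma2_1}), one has $\boldsymbol{h}(\boldsymbol{\theta},0)\equiv 0$ on all of $\mathbb{T}^2$, so $\partial_{\boldsymbol{\theta}}\boldsymbol{h}$ vanishes identically at $\epsilon=0$ and the bifurcation equation is degenerate at the anti-continuum limit. The standard Lyapunov--Schmidt remedy is to factor out the vanishing prefactor. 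First I would write $\boldsymbol{h}(\boldsymbol{\theta},\epsilon)=\epsilon\,\tilde{\boldsymbol{h}}(\boldsymbol{\theta},\epsilon)$, where, using $r_j = r_j(\boldsymbol{\theta},\epsilon)$ from Lemma \ref{lemma2_1}, the reduced function is
\begin{equation*}
\tilde h_1 = r_1 \sin(2\theta_1) - r_2\sin(\theta_2-\theta_1) + \gamma_1 r_1, \qquad \tilde h_2 = r_2 \sin(2\theta_2) - r_1\sin(\theta_1-\theta_2) + \gamma_2 r_2,
\end{equation*}
and is therefore $C^{\infty}$ in $(\boldsymbol{\theta},\epsilon)\in\mathbb{T}^2\times\mathbb{R}$ (equivalently, Hadamard's lemma applied to the smooth $\boldsymbol{h}$ vanishing at $\epsilon=0$ yields such a smooth $\tilde{\boldsymbol{h}}$). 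Because $r_j(\boldsymbol{\theta},0)=1$ by Lemma \ref{lemma2_1}, one reads off $\tilde{\boldsymbol{h}}(\boldsymbol{\theta},0)=\boldsymbol{H}(\boldsymbol{\theta})$, consistent with the definition $\boldsymbol{H}(\boldsymbol{\theta})=\lim_{\epsilon\to0}\epsilon^{-1}\boldsymbol{h}(\boldsymbol{\theta},\epsilon)$.

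Next I would observe that for $\epsilon\neq 0$ the equation $\boldsymbol{h}(\boldsymbol{\theta},\epsilon)=0$ is equivalent to the non-degenerate equation $\tilde{\boldsymbol{h}}(\boldsymbol{\theta},\epsilon)=0$, which at $\epsilon=0$ reduces to $\boldsymbol{H}(\boldsymbol{\theta})=0$. By hypothesis this has the root $\boldsymbol{\theta}^{(\epsilon=0)}$ with invertible derivative $\mathcal{N}(\boldsymbol{\theta}^{(\epsilon=0)})=\partial_{\boldsymbol{\theta}}\tilde{\boldsymbol{h}}(\boldsymbol{\theta}^{(\epsilon=0)},0)$. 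The Implicit Function Theorem for smooth maps, applied to $\tilde{\boldsymbol{h}}$ at $(\boldsymbol{\theta}^{(\epsilon=0)},0)$, then produces (shrinking the set $\mathcal{O}(0)$ of Lemma \ref{lemma2_1} if necessary) a unique $C^{\infty}$ map $\epsilon\mapsto\boldsymbol{\theta}(\epsilon)$ with $\boldsymbol{\theta}(0)=\boldsymbol{\theta}^{(\epsilon=0)}$ and $\tilde{\boldsymbol{h}}(\boldsymbol{\theta}(\epsilon),\epsilon)=0$, the root being unique in a fixed neighborhood of $\boldsymbol{\theta}^{(\epsilon=0)}$ uniformly for $\epsilon\in\mathcal{O}(0)$. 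For $\epsilon\in\mathcal{O}(0)\setminus\{0\}$ this $\boldsymbol{\theta}(\epsilon)$ is the desired unique root of $\boldsymbol{h}(\cdot,\epsilon)$ near $\boldsymbol{\theta}^{(\epsilon=0)}$, and the curve extends smoothly through $\epsilon=0$.

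Finally, the Lipschitz bound is immediate: since $\epsilon\mapsto\boldsymbol{\theta}(\epsilon)$ is $C^1$ on the closure of a slightly smaller interval around $0$ and $\boldsymbol{\theta}(0)=\boldsymbol{\theta}^{(\epsilon=0)}$, the mean value inequality gives $\|\boldsymbol{\theta}(\epsilon)-\boldsymbol{\theta}^{(\epsilon=0)}\|\le C|\epsilon|$ with $C=\sup\|\boldsymbol{\theta}'\|<\infty$. The only genuinely delicate point is the first one — recognizing that the bifurcation function is identically degenerate at $\epsilon=0$, so that the correct object to feed into the Implicit Function Theorem is the rescaled function $\tilde{\boldsymbol{h}}=\epsilon^{-1}\boldsymbol{h}$ rather than $\boldsymbol{h}$ itself; once the smoothness of $\tilde{\boldsymbol{h}}$ and the identity $\tilde{\boldsymbol{h}}(\cdot,0)=\boldsymbol{H}$ are established, the rest is a textbook application of the Implicit Function Theorem. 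One should also state uniqueness carefully, since $\boldsymbol{h}(\cdot,0)\equiv0$ makes \emph{every} $\boldsymbol{\theta}$ trivially a root at $\epsilon=0$; uniqueness is asserted for $\epsilon\neq0$, together with the existence of the single smooth continuation passing through $\epsilon=0$.
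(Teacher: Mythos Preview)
Your proof is correct and follows essentially the same approach as the paper: the paper's proof is a terse two-sentence appeal to the Implicit Function Theorem after noting that the definition $\boldsymbol{H}=\lim_{\epsilon\to0}\epsilon^{-1}\boldsymbol{h}$ relies on the explicit factor of $\epsilon$ in (\ref{eqn1_3}), and you have simply made the rescaling $\tilde{\boldsymbol{h}}=\epsilon^{-1}\boldsymbol{h}$ and its smoothness explicit. Your added remarks about the degeneracy of $\boldsymbol{h}(\cdot,0)$ and the careful statement of uniqueness for $\epsilon\neq 0$ are correct and fill in details the paper leaves to the reader.
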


\begin{proof}
The particular form in the definition of $\boldsymbol{H}$ relies on the explicit definition (\ref{eqn1_3}).
The assertion of the lemma follows from the Implicit Function Theorem for smooth vector functions.
\end{proof}

\begin{corollary}
\label{cor-2-1}
Under conditions of Lemmas \ref{lemma2_1} and \ref{lemma2_2}, there exists
a unique solution $\boldsymbol{\phi} \in \mathbb{C}^2$
to the system (\ref{eqn1_1}) near $\boldsymbol{\phi}^{(\epsilon = 0)}(\boldsymbol{\theta}^{(\epsilon = 0)}) \in \mathbb{C}^2$
for every $\epsilon \in \mathcal{O}(0)$
such that the mapping $\epsilon \mapsto \boldsymbol{\phi}$  is smooth
and $\| \boldsymbol{\phi} - \boldsymbol{\phi}^{(\epsilon = 0)}(\boldsymbol{\theta}^{(\epsilon = 0)}) \| \leq C |\epsilon|$
for an $\epsilon$-independent positive constant $C$.
\end{corollary}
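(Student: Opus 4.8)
The plan is simply to compose the two smooth maps produced in Lemmas \ref{lemma2_1} and \ref{lemma2_2} and then pass from polar coordinates $(\boldsymbol{r},\boldsymbol{\theta})$ back to Cartesian coordinates $\boldsymbol{\phi}$. First I would invoke Lemma \ref{lemma2_2} to obtain the smooth map $\epsilon \mapsto \boldsymbol{\theta}(\epsilon) \in \mathbb{T}^2$, defined for $\epsilon \in \mathcal{O}(0)$, with $\boldsymbol{h}(\boldsymbol{\theta}(\epsilon),\epsilon) = 0$, $\boldsymbol{\theta}(0) = \boldsymbol{\theta}^{(\epsilon=0)}$, and $\|\boldsymbol{\theta}(\epsilon) - \boldsymbol{\theta}^{(\epsilon=0)}\| \le C|\epsilon|$. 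Substituting this into the smooth map $(\boldsymbol{\theta},\epsilon) \mapsto \boldsymbol{r}(\boldsymbol{\theta},\epsilon)$ from Lemma \ref{lemma2_1} gives a smooth $\mathbb{R}^2$-valued function $\boldsymbol{r}(\epsilon) := \boldsymbol{r}(\boldsymbol{\theta}(\epsilon),\epsilon)$ with $\boldsymbol{r}(0) = \boldsymbol{1}$ and, directly from the estimate in Lemma \ref{lemma2_1}, $\|\boldsymbol{r}(\epsilon) - \boldsymbol{1}\| \le C|\epsilon|$ (shrinking $\mathcal{O}(0)$ if needed so both statements apply). Defining $\phi_j(\epsilon) := r_j(\epsilon)\, e^{i\theta_j(\epsilon)}$ then yields the candidate solution, and smoothness of $\epsilon \mapsto \boldsymbol{\phi}(\epsilon)$ is immediate since it is a composition of smooth maps (the polar-to-Cartesian map being smooth).

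That $\boldsymbol{\phi}(\epsilon)$ actually solves (\ref{eqn1_1}) is recorded inside the proof of Lemma \ref{lemma2_1}: by construction $\boldsymbol{g}(\boldsymbol{r}(\epsilon),\boldsymbol{\theta}(\epsilon),\epsilon) = 0$, and $\boldsymbol{h}(\boldsymbol{\theta}(\epsilon),\epsilon) = 0$ by Lemma \ref{lemma2_2}; since $g_j = \mathrm{Re}(f_j e^{-i\theta_j})$, $h_j = \mathrm{Im}(f_j e^{-i\theta_j})$ and $\phi_j(\epsilon) \neq 0$ for small $\epsilon$, this forces $f_j(\boldsymbol{\phi}(\epsilon)) = 0$. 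For the $O(|\epsilon|)$ bound I would split
\[
\phi_j(\epsilon) - e^{i\theta_j^{(\epsilon=0)}} = \big(r_j(\epsilon) - 1\big) e^{i\theta_j(\epsilon)} + \big(e^{i\theta_j(\epsilon)} - e^{i\theta_j^{(\epsilon=0)}}\big),
\]
bound the first term by $|r_j(\epsilon)-1| \le C|\epsilon|$ and the second by $|\theta_j(\epsilon)-\theta_j^{(\epsilon=0)}| \le C|\epsilon|$ using that $t\mapsto e^{it}$ is $1$-Lipschitz, then sum over $j$.

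For uniqueness I would run the argument backwards. Any solution $\widetilde{\boldsymbol{\phi}}$ of (\ref{eqn1_1}) sufficiently close to $\boldsymbol{\phi}^{(\epsilon=0)}(\boldsymbol{\theta}^{(\epsilon=0)})$ has all components bounded away from zero, hence admits a unique polar representation $\widetilde\phi_j = \widetilde r_j e^{i\widetilde\theta_j}$ with $(\widetilde{\boldsymbol{r}},\widetilde{\boldsymbol{\theta}})$ close to $(\boldsymbol{1},\boldsymbol{\theta}^{(\epsilon=0)})$; the uniqueness clause of Lemma \ref{lemma2_1} forces $\widetilde{\boldsymbol{r}} = \boldsymbol{r}(\widetilde{\boldsymbol{\theta}},\epsilon)$, so $\boldsymbol{h}(\widetilde{\boldsymbol{\theta}},\epsilon)=0$, and then the uniqueness clause of Lemma \ref{lemma2_2} forces $\widetilde{\boldsymbol{\theta}} = \boldsymbol{\theta}(\epsilon)$, whence $\widetilde{\boldsymbol{\phi}} = \boldsymbol{\phi}(\epsilon)$.

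I do not anticipate a genuine obstacle: the corollary is pure bookkeeping assembling the two-step Lyapunov--Schmidt reduction. The one point deserving care is the translation between the Cartesian description of solutions in $\mathbb{C}^2$ and the polar description in $\mathbb{R}^2\times\mathbb{T}^2$ --- one must note that ``near $\boldsymbol{\phi}^{(\epsilon=0)}(\boldsymbol{\theta}^{(\epsilon=0)})$'' corresponds faithfully to ``near $(\boldsymbol{1},\boldsymbol{\theta}^{(\epsilon=0)})$'' precisely because the base point has no vanishing component, so the polar map is a local diffeomorphism there. Everything else reduces to composition of smooth maps and the triangle inequality.
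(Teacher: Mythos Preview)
Your proposal is correct and follows exactly the approach the paper intends: the paper's own proof consists of the single sentence ``The proof is just an application of the two-step Lyapunov--Schmidt reduction method,'' and what you have written is precisely that application spelled out in detail. The care you take with the polar-to-Cartesian diffeomorphism and the uniqueness argument is appropriate but not strictly required at the paper's level of exposition.
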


\begin{proof}
The proof is just an application of the two-step Lyapunov--Schmidt reduction method.
\end{proof}

In order to classify all possible solutions of the algebraic system (\ref{eqn1_1}) for small $\epsilon$,
according to the combined result of Lemmas \ref{lemma2_1} and \ref{lemma2_2},
we write $\boldsymbol{H}(\boldsymbol{\theta})$ and $\mathcal{N}(\boldsymbol{\theta})$ explicitly as:
\begin{eqnarray}
\label{eqn2_H}
\boldsymbol{H}(\boldsymbol{\theta}) = \left[
\begin{matrix} \sin(2 \theta_1) - \sin(\theta_2 - \theta_1) + \gamma_1 \\
\sin(2 \theta_2) - \sin(\theta_1 - \theta_2) + \gamma_2 \end{matrix} \right]
\end{eqnarray}
and
\begin{align}
\label{eqn2_N}
\mathcal{N}(\boldsymbol{\theta}) =
\left[
    \begin{matrix}
        2\cos(2 \theta_1) + \cos(\theta_2 - \theta_1) & -\cos(\theta_2 - \theta_1) \\
        -\cos(\theta_1 - \theta_2) & \cos(\theta_1 - \theta_2) + 2 \cos(2 \theta_2)
    \end{matrix}
\right].
\end{align}

Let us simplify the computations in the particular case $\gamma_1 = -\gamma_2 = \gamma$.
In this case, the system $\boldsymbol{H}(\boldsymbol{\theta}) = \boldsymbol{0}$ is equivalent to the system
\begin{eqnarray}
\label{eqn1_4}
\left\{ \begin{array}{l} \sin(2 \theta_1) + \sin(2 \theta_2) = 0, \\
\sin(2 \theta_1) - \sin(\theta_2 - \theta_1) + \gamma = 0. \end{array} \right.
\end{eqnarray}
The following list represents all families of solutions of the system (\ref{eqn1_4}),
which are uniquely continued to the solution of the system (\ref{eqn1_1})
for $\epsilon \neq 0$, according to the result of Corollary \ref{cor-2-1}.

\begin{itemize}
\item[(1-1)] Solving the first equation of system (\ref{eqn1_4}) with $2 \theta_2 = 2 \theta_1 + \pi$ and
the second equation with $\sin(2 \theta_1) = 1-\gamma$, we obtain a solution for $\gamma\in(0,2)$.
Two branches exist for $\theta_1 = \frac{1}{2}\arcsin(1-\gamma)$ and $\theta_1 =\frac{\pi}{2}-\frac{1}{2}\arcsin(1-\gamma)$,
which are denoted by (1-1-a) and (1-1-b), respectively. However, the branch (1-1-b)
is obtained from the branch (1-1-a) by using symmetries in Remarks \ref{remark1_1} and \ref{remark1_3}
as well as by flipping the configuration on the left panel of Figure \ref{vortex1} about the horizontal axis.
Therefore, it is sufficient to consider the branch (1-1-a) only. The Jacobian matrix in (\ref{eqn2_N}) is given by
\begin{equation}
\label{Jacobian-1}
2 \cos(2 \theta_1) \left[ \begin{matrix} 1 & 0 \\ 0 & -1 \end{matrix} \right]
\end{equation}
and it is invertible if $\cos(2 \theta_1) \neq 0$, that is, if $\gamma\neq 0, 2$.
In the limit $\gamma \to 0$, the solution $(\phi_1,\phi_2,\phi_3,\phi_4)$
along the branches (1-1-a) and (1-1-b) transforms to the limiting solution
$\left(e^{\frac{i \pi}{4}},e^{\frac{3\pi i}{4}},e^{\frac{5 \pi i}{4}},e^{\frac{7\pi i}{4}}\right)$,
which is the discrete vortex of charge one, according to the terminology in \cite{kev1}.
No vortex of the negative charge one exists for $\gamma \in (0,2)$.

\item[(1-2)] Solving the first equation of system (\ref{eqn1_4}) with $2 \theta_2 = 2 \theta_1 - \pi$ and
the second equation with $\sin(2 \theta_1) = -1-\gamma$, we obtain a solution for $\gamma\in(-2,0)$.
Two branches exist for $\theta_1 = -\frac{1}{2}\arcsin(1+\gamma)$ and $\theta_1 = -\frac{\pi}{2} + \frac{1}{2}\arcsin(1+\gamma)$,
which are denoted by (1-2-a)  and (1-2-b), respectively. Since the branches (1-1-a) and (1-1-b)
are connected to the branches (1-2-a) and (1-2-b) by Remark \ref{remark1_1},
it is again sufficient to limit our consideration by branch (1-1-a) for $\gamma \in (0,2)$.
In the limit $\gamma \to 0$, the solution $(\phi_1,\phi_2,\phi_3,\phi_4)$
along the branches (1-2-a) and (1-2-b) transforms to the limiting solution
$\left(e^{-\frac{i \pi}{4}},e^{-\frac{3\pi i}{4}},e^{-\frac{5 \pi i}{4}},e^{-\frac{7\pi i}{4}}\right)$
which is the discrete vortex of the negative charge one. No vortex of the positive charge one
exists for $\gamma \in (-2,0)$.

\item[(1-3)] Solving the first equation of system (\ref{eqn1_4}) with $2 \theta_2 = -2 \theta_1$ and
the second equation with $\sin(2 \theta_1) = -\frac{\gamma}{2}$, we obtain a solution for $\gamma\in(-2,2)$.
Two branches exist for $\theta_1 = -\frac{1}{2}\arcsin\left(\frac{\gamma}{2}\right)$ and
$\theta_1 =\frac{\pi}{2}+\frac{1}{2}\arcsin\left(\frac{\gamma}{2}\right)$,
which are denoted by (1-3-a) and (1-3-b), respectively.
The Jacobian matrix in (\ref{eqn2_N}) is given by
\begin{equation}
\label{Jacobian-2}
\cos(2 \theta_1) \left[ \begin{matrix} 3 & -1 \\ -1 & 3 \end{matrix} \right],
\end{equation}
which is invertible if $\cos(2 \theta_1) \neq 0$, that is, if $\gamma \neq \pm 2$.
In the limit $\gamma \to 0$, the solution $(\phi_1,\phi_2,\phi_3,\phi_4)$
along the branches (1-3-a) and (1-3-b) transforms to the limiting solutions
$(1,1,1,1)$ and $i (1,-1,1,-1)$, which correspond to discrete solitons, according to the terminology in \cite{kev1}.

\item[(1-4)] Solving the first equation of system (\ref{eqn1_4}) with $2 \theta_2 = -2 \theta_1 \pm 2 \pi$,
we obtain the constraint $\gamma = 0$ from the second equation.
Therefore, no solutions exist in this choice if $\gamma \neq 0$.
\end{itemize}

We conclude that the $\pt$-symmetry about the vertical line supports both vortex and soliton configurations.

\subsection{Symmetry about the center (S2)}

Under conditions $\gamma_1=-\gamma_3$, $\gamma_2=-\gamma_4$, $\phi_1 = \bar{\phi}_3$, and
$\phi_2 = \bar{\phi}_4$, the system (\ref{eqn0_1}) reduces to two algebraic equations:
\begin{eqnarray}
\label{eqn3_1}
\left\{ \begin{array}{l}
f_1 := (1-|\phi_1|^2)\phi_1 - \epsilon( \bar{\phi}_2 + \phi_2 - i\gamma_1 \phi_1 ) = 0, \\
f_2 := (1-|\phi_2|^2)\phi_2 - \epsilon( \bar{\phi}_1 + \phi_1 - i\gamma_2 \phi_2 ) = 0. \end{array} \right.
\end{eqnarray}
The system (\ref{eqn3_1}) is only slightly different from the system (\ref{eqn1_1}).
Therefore, Lemmas \ref{lemma2_1} and \ref{lemma2_2} hold for the system (\ref{eqn3_1})
and the question of persistence of vortex configurations symmetric about the center
can be solved with the two-step Lyapunov--Schmidt reduction algorithm. For explicit computations
of the persistence analysis, we obtain the explicit expressions for
$\boldsymbol{H}(\boldsymbol{\theta})$ and $\mathcal{N}(\boldsymbol{\theta})$ in Lemma \ref{lemma2_2}:
\begin{eqnarray}
\label{eqn3_H}
\boldsymbol{H}(\boldsymbol{\theta}) = \left[
\begin{matrix} \sin(\theta_2 + \theta_1) - \sin(\theta_2 - \theta_1) + \gamma_1 \\
\sin(\theta_1 + \theta_2) - \sin(\theta_1 - \theta_2) + \gamma_2 \end{matrix} \right]
\end{eqnarray}
and
\begin{align}
\label{eqn3_N}
\mathcal{N}(\boldsymbol{\theta}) =
\left[
    \begin{matrix}
        \cos(\theta_2 + \theta_1) + \cos(\theta_2 - \theta_1) & \cos(\theta_2 + \theta_1) -\cos(\theta_2 - \theta_1) \\
        \cos(\theta_1 + \theta_2) -\cos(\theta_1 - \theta_2) & \cos(\theta_1 + \theta_2) + \cos(\theta_1 - \theta_2)
    \end{matrix}
\right].
\end{align}

Let us now consider the $\pt$-symmetric network with $\gamma_1 = -\gamma_2 = \gamma$.
Therefore, we rewrite the system $\boldsymbol{H}(\boldsymbol{\theta}) = \boldsymbol{0}$
in the equivalent form:
\begin{eqnarray}
\label{eqn3_4}
\left\{ \begin{array}{l} \sin(\theta_1 + \theta_2) = 0, \\
\sin(\theta_1 - \theta_2) + \gamma = 0. \end{array} \right.
\end{eqnarray}
Note in passing that the system (\ref{eqn3_1}) admits the exact solution in the polar
form $\phi_{j} = r_j e^{i \theta_j}$, $j = 1,2$ with
$$
r_1 = r_2 = \sqrt{1 - \epsilon \left( \cos(\theta_1 + \theta_2) + \cos(\theta_1 - \theta_2) \right)},
$$
where $\theta_1$ and $\theta_2$ are given by the roots of the system (\ref{eqn3_4}). The
Lyapunov--Schmidt reduction algorithm in Lemmas \ref{lemma2_1} and \ref{lemma2_2}
guarantees that these exact solutions are unique in the neighborhood of the limiting solution (\ref{limit-vortex}).

The following list represents all families of solutions of the system (\ref{eqn3_4}),
which are uniquely continued to the solution of the system (\ref{eqn3_1})
for $\epsilon \neq 0$, according to the result of Corollary \ref{cor-2-1}.

\begin{itemize}
\item[(2-1)] $\theta_2 = -\theta_1$ and $\sin(2\theta_1)=-\gamma$
with two branches $\theta_1 = -\frac{1}{2} \arcsin(\gamma)$ and
$\theta_1 = \frac{\pi}{2} + \frac{1}{2} \arcsin(\gamma)$ labeled as (2-1-a) and (2-1-b).
The two branches exist for $\gamma \in (-1,1)$. The Jacobian matrix in
(\ref{eqn3_N}) is given by
\begin{equation}
\label{Jacobian-3}
2 \left[ \begin{matrix} \cos^2(\theta_1) & \sin^2(\theta_1) \\ \sin^2(\theta_1) & \cos^2(\theta_1) \end{matrix} \right]
\end{equation}
and it is invertible if $\cos(2 \theta_1) \neq 0$, that is, if $\gamma\neq \pm 1$.
In the limit $\gamma \to 0$, the solution $(\phi_1,\phi_2,\phi_3,\phi_4)$
along the branches (2-1-a) and (2-1-b) transforms to the limiting solutions $(1,1,1,1)$ and $i(1,-1,-1,1)$,
which correspond to discrete solitons.

\item[(2-2)] $\theta_2 = -\theta_1 \pm \pi$ and $\sin(2\theta_1) = \gamma$
with two branches $\theta_1 = \frac{1}{2} \arcsin(\gamma)$ and
$\theta_1 = \frac{\pi}{2} - \frac{1}{2} \arcsin(\gamma)$ labeled as (2-2-a) and (2-2-b).
These two branches also exist for $\gamma \in (-1,1)$.
The Jacobian matrix in (\ref{eqn3_N}) is given by
\begin{equation}
\label{Jacobian-4}
-2 \left[ \begin{matrix} \cos^2(\theta_1) & \sin^2(\theta_1) \\ \sin^2(\theta_1) & \cos^2(\theta_1) \end{matrix} \right],
\end{equation}
which is invertible if $\cos(2 \theta_1) \neq 0$, that is, if $\gamma\neq \pm 1$.
In the limit $\gamma \to 0$, the solution $(\phi_1,\phi_2,\phi_3,\phi_4)$ along the branches (2-2-a) and (2-2-b)
transforms to the limiting solutions $(1,-1,1,-1)$ and $i(1,1,-1,-1)$,
which again correspond to discrete solitons. The family (2-2) is related to the family (2-1) by Remark \ref{remark1_2}.
\end{itemize}

If we consider the $\pt$-symmetric network with $\gamma_1 = \gamma_2 = \gamma$, this case
is reduced to the network with $\gamma = -\gamma_2 = \gamma$ by flipping the second and fourth
sites about the center of symmetry in the configuration on the right panel of Figure \ref{vortex1}.
Therefore, we do not need to consider this case separately.

We conclude that the $\pt$-symmetry about the center supports only discrete soliton
configurations in the limit $\gamma \to 0$. No vortex configurations persist with respect
to $\epsilon$ if $\gamma \neq 0$.

\section{Existence of $\pt$-symmetric vortices in truncated lattice}
\label{section-full}

We shall now consider the $\pt$-symmetric dNLS equation (\ref{pt-dnls-equiv})
on the square lattice truncated symmetrically with suitable boundary conditions.

For the steady-state solutions $u_j(\tau) = \phi_j e^{-4 i \epsilon \tau}$, we obtain the stationary
$\pt$-symmetric dNLS equation in the form
\begin{eqnarray}
\label{eqn0_2}
(1-|\phi_j|^2)\phi_{j,k} - \epsilon ( \phi_{j+1,k} + \phi_{j-1,k} + \phi_{j,k+1}
+ \phi_{j, k-1} - i \gamma_{j,k} \phi_{j,k}) = 0, \quad (j,k) \in \mathbb{Z}^2.
\end{eqnarray}
In the limit of $\epsilon \to 0$, we are still looking for
the limiting configurations supported on four sites of the elementary cell:
\begin{equation}
\label{limiting-config-lattice}
\phi_{j,k}^{(\epsilon = 0)}(\theta_{j,k}) = e^{i \theta_{j,k}}, \quad (j,k) \in S := \left\{ (1,0); (1,1); (0,1); (0,0) \right\},
\end{equation}
where $\theta_{j,k} \in \mathbb{T}$ for $(j,k) \in S$, whereas $\phi_{j,k}^{(\epsilon = 0)} = 0$
for $(j,k) \in S^* := \mathbb{Z}^2 \backslash S$.

Computations in Section \ref{section-four-sites} remain valid on the unbounded square lattice,
because the results of Lemmas \ref{lemma2_1} and \ref{lemma2_2} are obtained on the set $S$ in the first order
in $\epsilon$, where no contributions come from the empty sites in the set $S^*$.

If the square lattice is truncated, then the truncated square lattice must satisfy
the following requirements for persistence of the $\pt$-symmetric configurations:
\begin{itemize}
\item the elementary cell $S$ must be central in the symmetric extension of the lattice,
\item the distribution of gains and losses in $\{ \gamma_{j,k} \}$
must be anti-symmetric with respect to the selected symmetry on Figure \ref{vortex1},
\item the boundary conditions must be consistent with the $\pt$-symmetry constraints on $\{ \phi_{j,k} \}$.
\end{itemize}
The periodic boundary conditions may not be consistent with the $\pt$-symmetry constraints
because of the jump in the complex phases. On the other hand,
Dirichlet conditions at the fixed ends are consistent with the $\pt$-symmetry constraints.

\begin{figure}[htbp]
\begin{tabular}{cc}
\includegraphics[width=7cm]{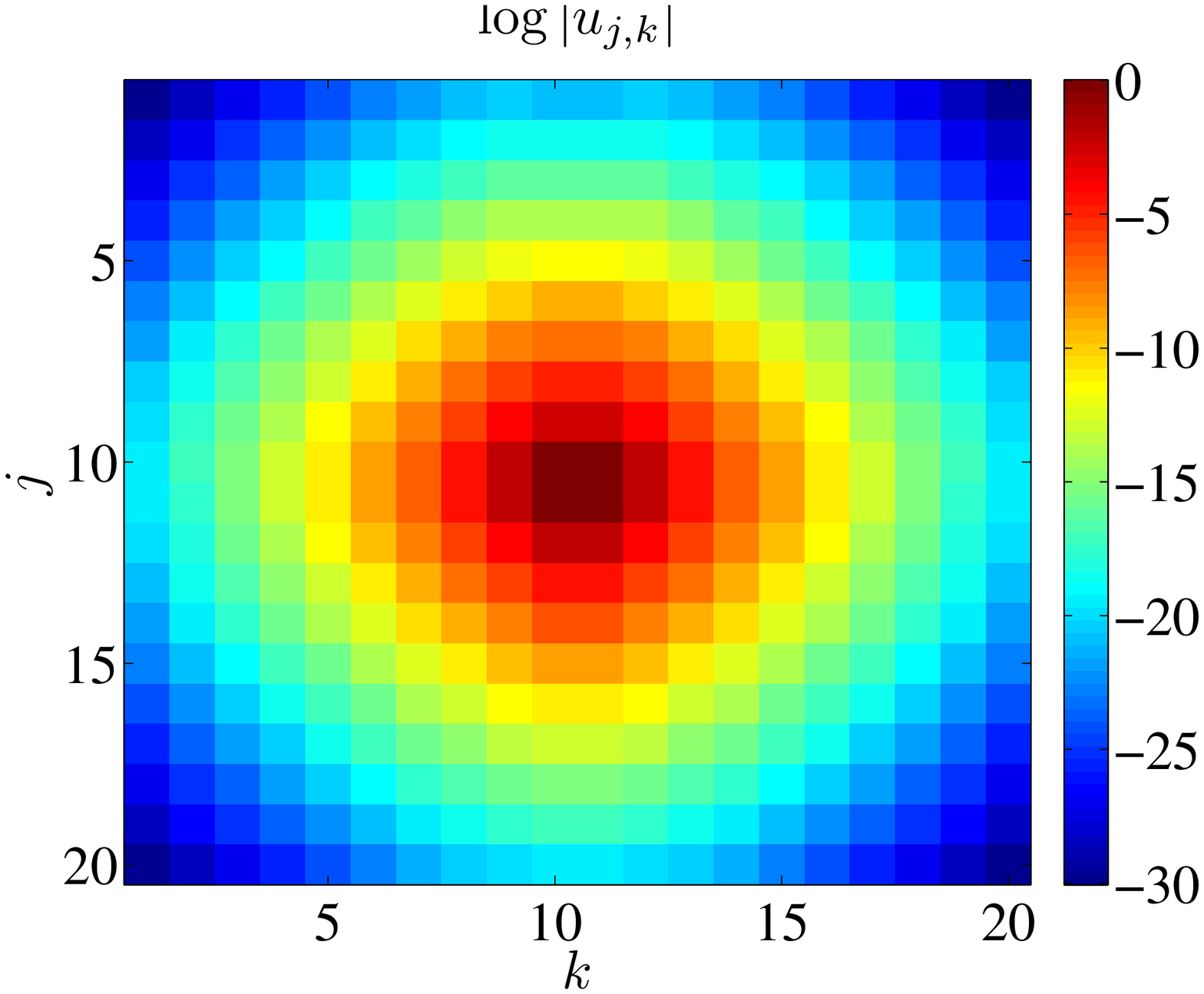}
\includegraphics[width=7cm]{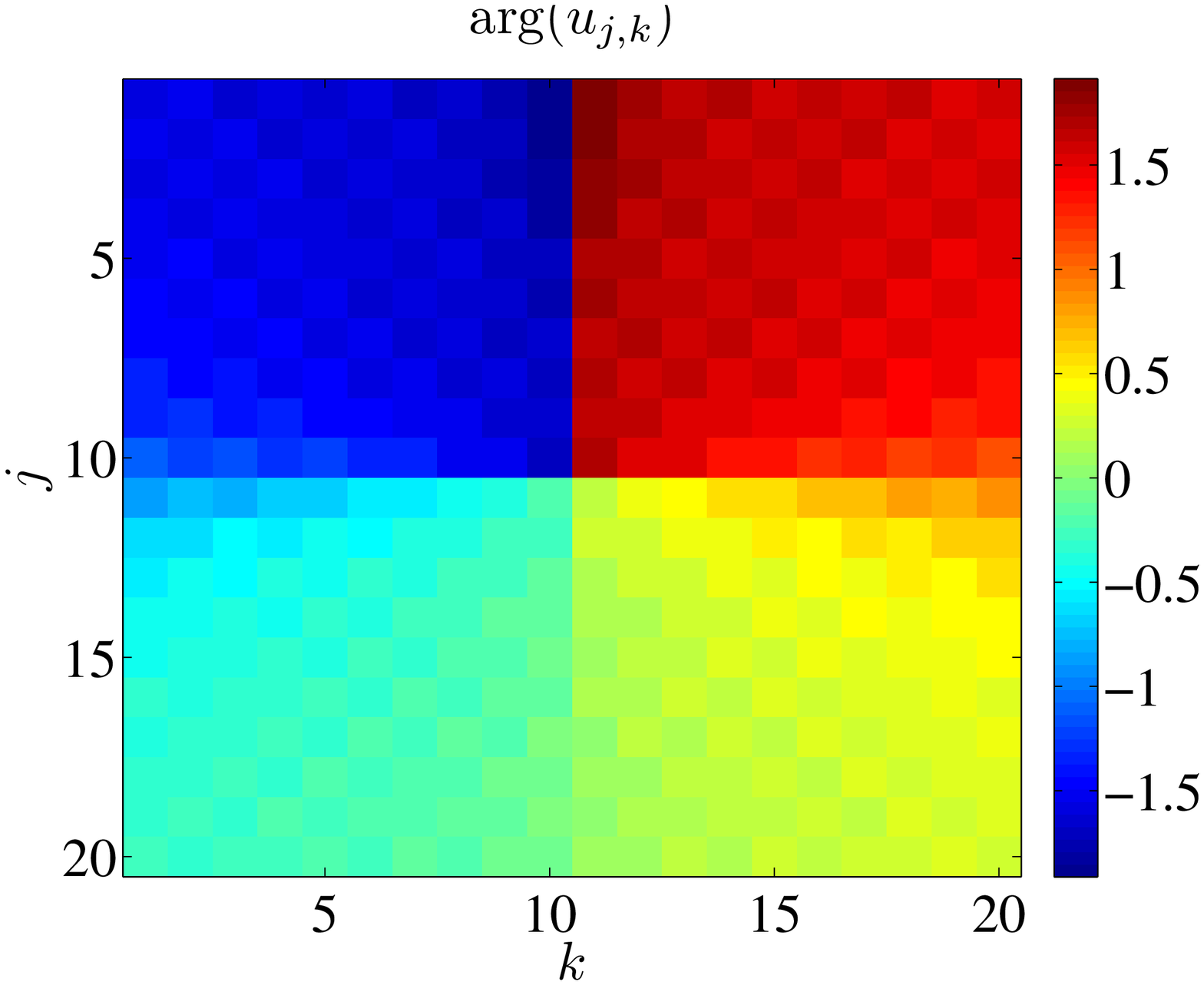}\\
\includegraphics[width=7cm]{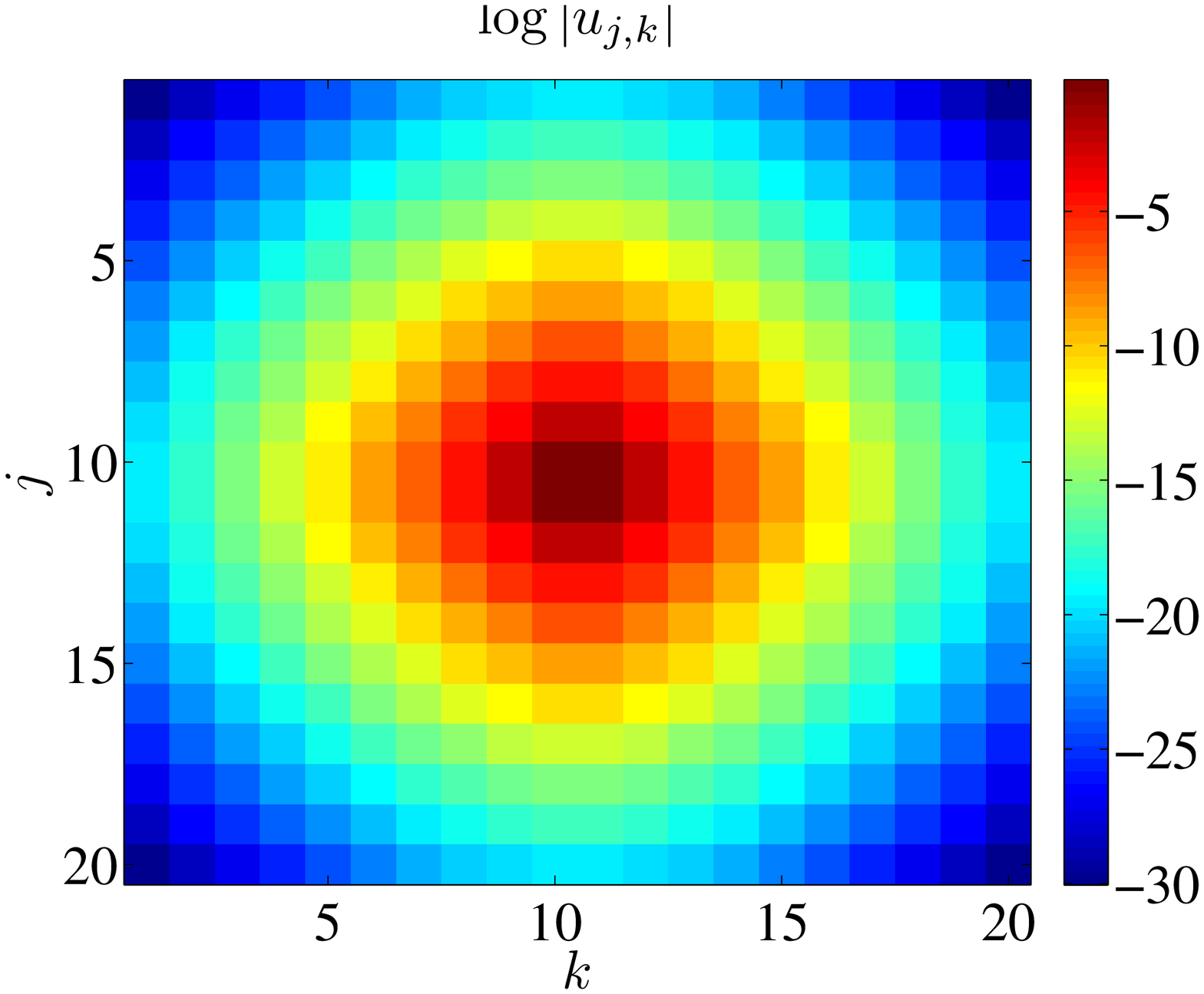}
\includegraphics[width=7cm]{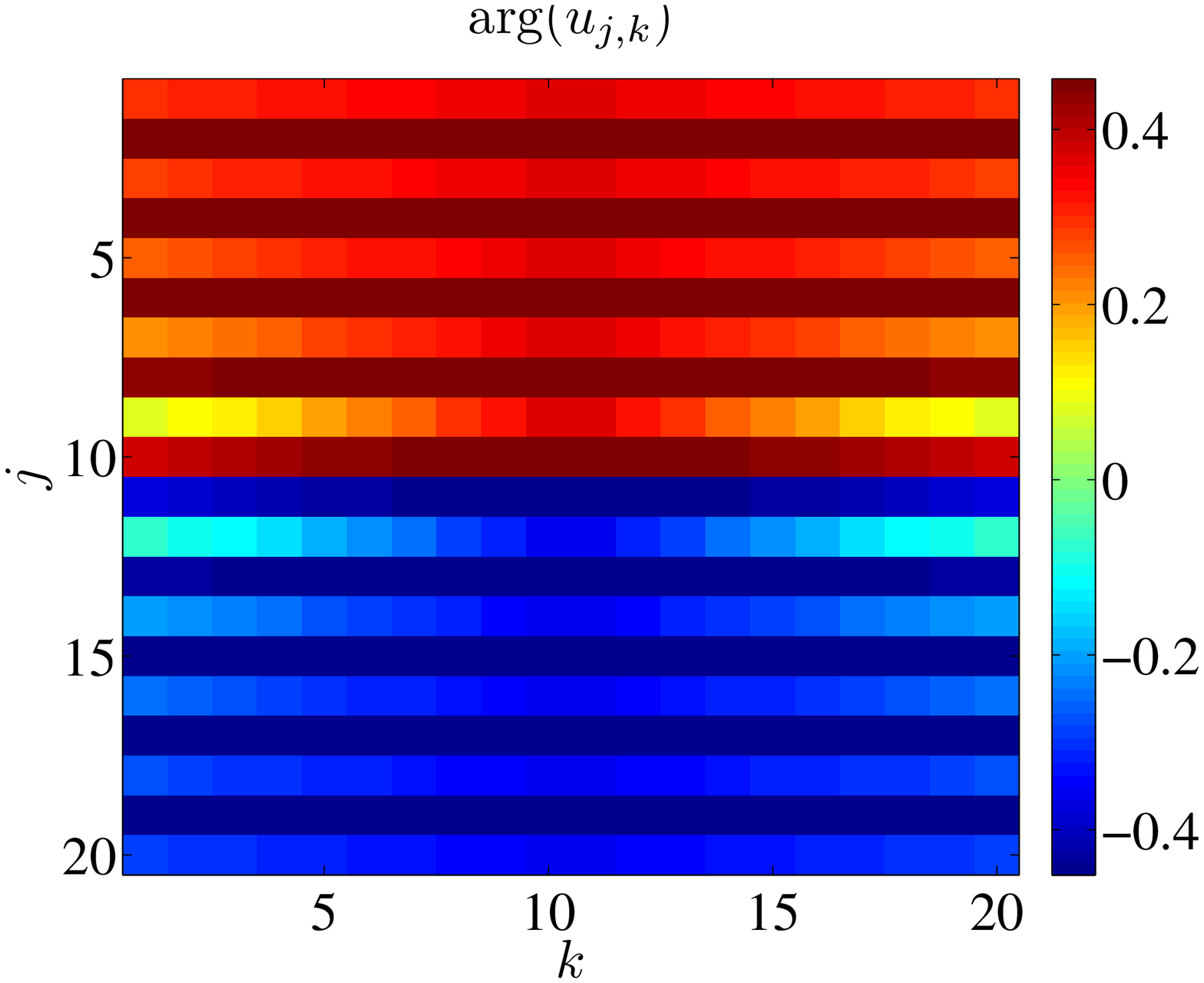}
\end{tabular}
\caption{The top panels show a continuation of the branch (1-1-a) on the $20$-by-$20$ square lattice
with $\gamma_1=-\gamma_2=0.7$ and $\epsilon=0.1$. The bottom panels show a continuation of the branch
(2-1-a) with $\gamma_1=-\gamma_2=0.8$ and $\epsilon=0.1$. The
left panels show the logarithm of the solution's modulus, while
the right panels show the corresponding phase.}
\label{fig3_1}
\end{figure}

In Figure \ref{fig3_1}, we show continuations of the $\pt$-symmetric solutions
from the branches (1-1-a) and (2-1-a) obtained on the elementary cell $S$ in Section \ref{section-four-sites}.
The relevant configurations are now computed on the $20$-by-$20$ square lattice
truncated symmetrically with zero boundary conditions. The requirements listed above are satisfied
for the truncated square lattice. The left panels of Figure \ref{fig3_1} illustrate the
logarithm of the modulus, while the right panels show the corresponding phase.

At $\epsilon=0$, the phases of the solution on $S$ in (1-1-a) are
$\{ \theta_1, \theta_1+\frac{\pi}{2}, -\theta_1-\frac{\pi}{2}, -\theta_1 \}$,
where $\theta_1 =\frac{1}{2}\arcsin(1-\gamma)$. Therefore, the configuration
represents the continuation over $\gamma$ of the
discrete vortex of charge one, which corresponds to $\theta_1 = \pi/4$ at $\gamma = 0$.
This interpretation is confirmed by the
surface plot for the argument of complex amplitudes, which shows a $2\pi$-change over a discrete contour
surrounding the vortex location.
On the other hand, the solution on $S$ in (2-1-a) has phases
$\{ \theta_1, -\theta_1, -\theta_1, \theta_1 \}$, where $\theta_1 =-\frac{1}{2}\arcsin(\gamma)$.
Therefore, this configuration represents the
continuation over $\gamma$ of the discrete soliton,
which corresponds to $\theta_1 = 0$ at $\gamma = 0$. This is again confirmed by
the surface plot for the argument of complex amplitudes.

Since $\epsilon = 0.1$ is small, we can observe  in Figure \ref{fig3_1} that the solutions
are still close to the limiting solutions of $\epsilon=0$ and the amplitudes
are large chiefly
at the four central sites of the set $S$. However, amplitudes of the sites in the set $S^*$
are nonzero for $\epsilon = 0.1$ but still small (at most $\mathcal{O}(\epsilon)$ on the sites
of $S^*$ adjacent to the sites of $S$). At the same time, the
phases of the amplitudes on the sites of $S$ do not change much in parameter $\epsilon$
and still feature a clearly discernible discrete vortex with charge one in the continuation of the branch (1-1-a)
and a discrete soliton in the continuation of the branch (2-1-a).

\section{Stability of $\pt$-symmetric configurations in the cell}
\label{section-stability}

We address the $\pt$-symmetric configurations in the elementary cell consisting of the four sites.
Persistence of these configurations in small parameter $\epsilon$ is obtained
in Section \ref{section-four-sites}. In what follows, we consider stability of the $\pt$-symmetric configurations
in the sense of spectral stability.

Let $\boldsymbol{\phi} := \{ \phi_j \}_{1 \leq j \leq 4}$ be a stationary solution of the system (\ref{eqn0_1}).
If it is $\pt$-symmetric, there exists a $4$-by-$4$ matrix $P$ such that $\bar{\boldsymbol{\phi}} =
P \boldsymbol{\phi}$. For the two $\pt$-symmetries considered in Section \ref{section-four-sites},
we list the matrix $P$:
\begin{equation}
\label{symmetry-matrix}
\mbox{\rm (S1)} \quad P =
\left(
    \begin{array}{cccc}
        0 & 0 & 0 & 1 \\
        0 & 0 & 1 & 0 \\
        0 & 1 & 0 & 0 \\
        1 & 0 & 0 & 0
    \end{array}
\right), \quad \quad \quad
\mbox{\rm (S2)} \quad P =
\left(
    \begin{array}{cccc}
        0 & 0 & 1 & 0 \\
        0 & 0 & 0 & 1 \\
        1 & 0 & 0 & 0 \\
        0 & 1 & 0 & 0
    \end{array}
\right),
\end{equation}
Adding a perturbation to the stationary $\pt$-symmetric solution, we consider solutions
of the $\pt$-symmetric dNLS equation (\ref{pt-dnls-equiv}) in the form
$$
u_j(t) = \left( \phi_j + \delta \left[ e^{\lambda \tau} v_j + e^{\bar{\lambda} \tau}\overbar{w_j} \right] \right) e^{-2 i \epsilon \tau},
$$
where $\delta$ is the perturbation amplitude, $\lambda \in \mathbb{C}$ is the spectral parameter,
and $(\boldsymbol{v},\boldsymbol{w}) := \{ (v_j,w_j) \}_{1 \leq j \leq 4}$ represents an eigenvector of the spectral stability problem.
After the linearization of the $\pt$-symmetric dNLS equation (\ref{pt-dnls-equiv}), we obtain the spectral
stability problem in the form
\begin{eqnarray}
\label{eqn4_1}
\left\{ \begin{array}{l}
i \lambda v_j = (1+i\epsilon\gamma_j-2|\phi_j|^2)v_j -(\phi_j)^2 w_j -\epsilon(v_{j-1}+v_{j+1}), \\
- i\lambda w_j = -\overbar{\phi}_j^2 v_j + (1-i\epsilon\gamma_j-2|\phi_j|^2) w_j -\epsilon(w_{j-1}+w_{j+1}). \end{array} \right.
\end{eqnarray}
The eigenvalue problem (\ref{eqn4_1}) can be written in the matrix form
\begin{eqnarray}
\label{ev_matrix2}
i \lambda \sigma \boldsymbol{\xi} = \left( \mathcal{H} + i \epsilon \mathcal{G} \right) \boldsymbol{\xi},
\end{eqnarray}
where $\boldsymbol{\xi}$ consists of blocks of $(v_j,w_j)^T$, $\sigma$ consists of blocks of Pauli matrices
$\sigma_3 = {\rm diag}(1,-1)$, $\mathcal{G}$ consists of blocks of $\gamma_j \sigma_3$, and $\mathcal{H}$
is the Hermitian matrix consisting of the blocks of
\begin{align}
\label{eqn0_H}
\mathcal{H}_j=
\left(
    \begin{array}{cc}
        1-2|\phi_j|^2 & -(\phi_j)^2\\
        -(\overbar{\phi_j})^2 & 1-2|\phi_j|^2
    \end{array}
\right)
-\epsilon(s_{+1}+s_{-1})
\left(
    \begin{array}{cc}
        1 & 0\\
        0 & 1
    \end{array}
\right),
\end{align}
where $s_{j}$ stands for the shift operator such that $s_j \phi_k = \phi_{j+k}$.

\begin{remark}
\label{remark3_1}
If $\lambda$ is an eigenvalue of the spectral problem (\ref{eqn4_1}) with the eigenvector
$({\bf v},{\bf w})$, then $\bar{\lambda}$ is another eigenvalue
of the same problem (\ref{eqn4_1}) with the eigenvector $(\bar{\bf w},\bar{\bf v})$.
Therefore, eigenvalues $\lambda$ are symmetric about the real axis.
\end{remark}

\begin{remark}
\label{remark3_2}
Assume that $\boldsymbol{\phi}$ is $\pt$-symmetric, so that $\bar{\boldsymbol{\phi}} =
P \boldsymbol{\phi}$ for $P$ given by (\ref{symmetry-matrix}).
If $\lambda$ is an eigenvalue of the spectral problem (\ref{eqn4_1}) with the eigenvector
$({\bf v},{\bf w})$, then $-\bar{\lambda}$ is another eigenvalue
of the same problem (\ref{eqn4_1}) with the eigenvector $(P \bar{\bf v},P \bar{\bf w})$.
Therefore, eigenvalues $\lambda$ are symmetric about the imaginary axis.
\end{remark}

In order to study stability of the stationary solutions in the limit of small $\epsilon$,
we adopt the stability results obtained in \cite{kev1}. Along this way, it is easier
to work with a stationary solution $\boldsymbol{\phi}$ without using the property of
$\pt$-symmetry. Nevertheless,
it is true that $\phi_3$ and $\phi_4$ are expressed from $\phi_1$ and $\phi_2$ by using
the matrix $P$ given by (\ref{symmetry-matrix}). With the account of the relevant symmetry,
Corollary \ref{cor-2-1} implies that the stationary solution can be expressed in the form
\begin{equation}
\label{eqn0_2_1}
\phi_j = e^{i \theta_j^{(0)}} \left[ 1 + \epsilon r_j^{(1)}  +
i \epsilon \left( \theta_j^{(1)} - \theta_j^{(0)} \right)  +
\mathcal{O}(\epsilon^2) \right], \quad 1 \leq j \leq 4,
\end{equation}
where $\{ \theta_j^{(0)} \}_{1 \leq j \leq 4}$ are determined from simple roots of
the vector function $\boldsymbol{H}$, $\{ \theta_j^{(1)} \}_{1 \leq j \leq 4}$ are found
from persistence analysis in Lemma \ref{lemma2_2}, and $\{ r_j^{(1)} \}_{1 \leq j \leq 4}$
are found from persistence analysis in Lemma \ref{lemma2_1}. After elementary computations,
we obtain the explicit expression
\begin{equation}
\label{eqn0_2_2}
r_j^{(1)} = -\frac{1}{2} \left[ \cos(\theta_j^{(0)} - \theta_{j-1}^{(0)}) +
\cos(\theta_j^{(0)} - \theta_{j+1}^{(0)}) \right],
\end{equation}
where the cyclic boundary conditions for $\{ \theta_j^{(0)}\}_{1 \leq j \leq 4}$
are assumed. For convenience of our presentation, we drop the superscripts in
writing $\theta_j^{(0)}$.

Let $\mathcal{M}$ be a $4$-by-$4$ matrix satisfying
\begin{eqnarray}
\label{eqn0_M}
\mathcal{M}_{j,k}=
\left\{ \begin{array}{cr}
\cos(\theta_j -\theta_{j-1})+\cos(\theta_j-\theta_{j+1}), &k=j \\
-\cos(\theta_j -\theta_{j-1}), &k=j-1 \\
-\cos(\theta_j - \theta_{j+1}), &k=j+1 \\
0 & {\rm otherwise}. \end{array} \right.
\end{eqnarray}
Due to the gauge invariance of the original dNLS equation (\ref{pt-dnls-equiv}),
$\mathcal{M}$ always has a zero eigenvalue with eigenvector $(1,1,1,1)^T$.
The other three eigenvalues of $\mathcal{M}$ may be nonzero. As is established
in \cite{kev1}, the nonzero eigenvalues of the matrix $\mathcal{M}$ are related to
small eigenvalues of the matrix operator $\mathcal{H}$ of the order of $\mathcal{O}(\epsilon)$.
In order to render
the stability analysis herein self-contained, we review the statement and the proof of this result.

\begin{lemma}
\label{lemma1_1}
Let $\mu_j$ be a nonzero eigenvalue of the matrix $\mathcal{M}$.
Then, for sufficiently small $\epsilon \in \mathcal{O}(0)$,
the matrix operator $\mathcal{H}$ has a small nonzero eigenvalue $\nu_j$ such that
\begin{eqnarray}
\label{ev_lemma1}
\nu_j = \mu_j \epsilon + \mathcal{O}(\epsilon^2).
\end{eqnarray}
\end{lemma}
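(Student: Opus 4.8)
The plan is to view the Hermitian matrix $\mathcal{H}=\mathcal{H}(\epsilon)$ as a smooth (by Corollary~\ref{cor-2-1}, even $C^\infty$) perturbation of $\mathcal{H}_0:=\mathcal{H}|_{\epsilon=0}$ and to locate its small eigenvalues by a Lyapunov--Schmidt reduction onto $\ker\mathcal{H}_0$, exactly as in \cite{kev1}. First I would analyze $\mathcal{H}_0$. Since $\phi_j^{(\epsilon=0)}=e^{i\theta_j}$ has modulus one, $\mathcal{H}_0$ is block diagonal with $2\times 2$ blocks whose diagonal entries equal $1-2=-1$ and whose off-diagonal entries are $-e^{\pm 2i\theta_j}$; each such block has trace $-2$ and determinant $0$, hence eigenvalues $0$ and $-2$. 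Therefore $\mathcal{H}_0$ has a four-dimensional kernel $X_0=\mathrm{span}\{\boldsymbol{e}_1,\dots,\boldsymbol{e}_4\}$, where $\boldsymbol{e}_j$ is supported on the $j$-th block with value $(e^{i\theta_j},-e^{-i\theta_j})^T$ (the gauge direction $\partial_\alpha e^{i\alpha}\phi_j|_{\alpha=0}$ written in the $(v,w)$ variables), the remaining four eigenvalues all equal $-2$, and $\mathcal{H}_0|_{X_0^\perp}$ is boundedly invertible. The vectors $\{\boldsymbol{e}_j\}$ are mutually orthogonal with $\|\boldsymbol{e}_j\|^2=2$.

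Next I would perform the reduction. Write $\mathcal{H}(\epsilon)=\mathcal{H}_0+\epsilon\mathcal{H}_1+\mathcal{O}(\epsilon^2)$ with $\mathcal{H}_1:=\partial_\epsilon\mathcal{H}|_{\epsilon=0}$, let $P_0$ be the orthogonal projection onto $X_0$, decompose an eigenvector $\boldsymbol{\xi}=\boldsymbol{a}+\boldsymbol{b}$ with $\boldsymbol{a}=P_0\boldsymbol{\xi}\in X_0$ and $\boldsymbol{b}\in X_0^\perp$, and look for an eigenvalue of the form $\nu=\epsilon\Lambda$. Projecting $\mathcal{H}\boldsymbol{\xi}=\nu\boldsymbol{\xi}$ with $I-P_0$ and using $\mathcal{H}_0\boldsymbol{a}=0$ gives $\mathcal{H}_0\boldsymbol{b}=\epsilon\Lambda\boldsymbol{b}-\epsilon(I-P_0)\mathcal{H}_1(\boldsymbol{a}+\boldsymbol{b})+\mathcal{O}(\epsilon^2)$, which by the Implicit Function Theorem (invertibility of $\mathcal{H}_0|_{X_0^\perp}$) is solvable for a smooth $\boldsymbol{b}=\boldsymbol{b}(\boldsymbol{a},\Lambda,\epsilon)=-\epsilon(\mathcal{H}_0|_{X_0^\perp})^{-1}(I-P_0)\mathcal{H}_1\boldsymbol{a}+\mathcal{O}(\epsilon^2)\|\boldsymbol{a}\|$. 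Substituting into the $P_0$-projection and dividing by $\epsilon$ yields the reduced four-dimensional eigenvalue problem $P_0\mathcal{H}_1\boldsymbol{a}+\mathcal{O}(\epsilon)=\Lambda\boldsymbol{a}$, so to leading order $\Lambda$ ranges over the eigenvalues of the matrix of $P_0\mathcal{H}_1|_{X_0}$ in the basis $\{\boldsymbol{e}_j\}$, whose $(j,k)$ entry is $\tfrac12\langle\boldsymbol{e}_j,\mathcal{H}_1\boldsymbol{e}_k\rangle$ because the $\boldsymbol{e}_j$ are orthogonal with squared norm $2$.

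The final step is to compute these matrix elements and identify the reduced matrix. The perturbation $\mathcal{H}_1$ has two pieces: the nearest-neighbour coupling $-(s_{+1}+s_{-1})\,\mathrm{diag}(1,1)$ from (\ref{eqn0_H}), and the block-diagonal contribution obtained by differentiating $1-2|\phi_j|^2$ and $-(\phi_j)^2$ in $\epsilon$ via the expansion (\ref{eqn0_2_1})--(\ref{eqn0_2_2}). For $k=j\pm 1$ only the coupling piece contributes, since the block-diagonal part of $\mathcal{H}_1\boldsymbol{e}_k$ is supported on block $k$, and one finds $\langle\boldsymbol{e}_j,\mathcal{H}_1\boldsymbol{e}_k\rangle=-2\cos(\theta_j-\theta_k)$; for $k=j$ the coupling piece is orthogonal to $\boldsymbol{e}_j$ and a short computation shows that the phase corrections $\theta_j^{(1)}$ cancel while only $r_j^{(1)}$ survives, giving $\langle\boldsymbol{e}_j,\mathcal{H}_1\boldsymbol{e}_j\rangle=-4r_j^{(1)}=2\big[\cos(\theta_j-\theta_{j-1})+\cos(\theta_j-\theta_{j+1})\big]$ by (\ref{eqn0_2_2}); all other entries vanish. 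Dividing by $\|\boldsymbol{e}_j\|^2=2$ reproduces exactly the matrix $\mathcal{M}$ of (\ref{eqn0_M}), so $\Lambda=\mu_j+\mathcal{O}(\epsilon)$ and hence $\nu_j=\mu_j\epsilon+\mathcal{O}(\epsilon^2)$; when $\mu_j\neq 0$ this branch is nonzero for $\epsilon\in\mathcal{O}(0)$ small, which is the claim. The zero eigenvalue of $\mathcal{M}$ with eigenvector $(1,1,1,1)^T$ corresponds to the exactly-zero, gauge-induced eigenvalue of $\mathcal{H}$ and is excluded by hypothesis; for a repeated $\mu_j$ the $\mathcal{O}(\epsilon^2)$ estimate is to be read per invariant subspace, as in \cite{kev1}.

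I expect the main obstacle to be purely computational: verifying carefully that in the diagonal matrix element the imaginary phase corrections $\theta_j^{(1)}$ drop out and that the surviving $-4r_j^{(1)}$ matches $2\mathcal{M}_{j,j}$ through the identity (\ref{eqn0_2_2}); the rest is the standard self-adjoint Lyapunov--Schmidt reduction already employed elsewhere in the paper.
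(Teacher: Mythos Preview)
Your proposal is correct and follows essentially the same approach as the paper: expand $\mathcal{H}=\mathcal{H}^{(0)}+\epsilon\mathcal{H}^{(1)}+\mathcal{O}(\epsilon^2)$, identify $\ker\mathcal{H}^{(0)}=\mathrm{span}\{\boldsymbol{e}_j\}$ with $\boldsymbol{e}_j=(e^{i\theta_j},-e^{-i\theta_j})$, and project the first-order equation onto this kernel to obtain the reduced problem $\mathcal{M}\boldsymbol{c}=\nu_j^{(1)}\boldsymbol{c}$ with $\mathcal{M}_{j,k}=\tfrac12\langle\boldsymbol{e}_j,\mathcal{H}^{(1)}\boldsymbol{e}_k\rangle$. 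Your version is in fact slightly more detailed than the paper's in two respects: you carry out the Lyapunov--Schmidt step via the Implicit Function Theorem on $X_0^\perp$ rather than by formal power series, and you explicitly verify that the phase corrections $\theta_j^{(1)}$ drop out of the diagonal matrix element (the paper's displayed $\mathcal{H}^{(1)}$ simply omits them), but these are refinements of the same argument rather than a different route.
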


\begin{proof}
We consider the expansion $\mathcal{H}=\mathcal{H}^{(0)}+\epsilon\mathcal{H}^{(1)}+O(\epsilon^2)$,
where $\mathcal{H}^{(0)}$ consists of the blocks
\begin{align}
\label{eqn0_H0}
\mathcal{H}_j^{(0)}=
\left(
    \begin{array}{cc}
        -1 & -e^{2i\theta_j}\\
        -e^{-2i\theta_j} & -1
    \end{array}
\right)
\end{align}
Each block has a one-dimensional kernel spanned by the vector $( e^{i\theta_j}, -e^{-i\theta_j} )$. Let
$\boldsymbol{e_j}$ be the corresponding eigenvector of $\mathcal{H}^{(0)}$ for the zero eigenvalue.
Therefore, we have
$$
\ker(\mathcal{H}^{(0)}) = {\rm span}\{\boldsymbol{e_j}\}_{1\leq j\leq 4}.
$$

By regular perturbation theory, we are looking for the small eigenvalue $\nu_j$ and eigenvector $\boldsymbol{\eta}$
of the Hermitian matrix operator $\mathcal{H}$ for small $\epsilon \in \mathcal{O}(0)$ in the form
$$
\nu_j = \epsilon \nu_j^{(1)} + \mathcal{O}(\epsilon^2), \quad
\boldsymbol{\eta} = \boldsymbol{\eta}^{(0)} + \epsilon\boldsymbol{\eta}^{(1)} + \mathcal{O}(\epsilon^2),
$$
where $\boldsymbol{\eta}^{(0)}=\sum_{j=1}^4 c_j \boldsymbol{e_j}$ and $\{ c_j \}_{1 \leq j \leq 4}$ are to be determined.
At the first order of $\mathcal{O}(\epsilon)$, we obtain the linear inhomogeneous system
\begin{eqnarray}
\label{eqn_lemma1_1_d1}
\mathcal{H}^{(0)}\boldsymbol{\eta}^{(1)} + \mathcal{H}^{(1)}\boldsymbol{\eta}^{(0)}  =
\nu_j^{(1)}\boldsymbol{\eta}^{(0)},
\end{eqnarray}
where $\mathcal{H}^{(1)}$ consists of the blocks
\begin{eqnarray}
\label{eqn0_H1}
\mathcal{H}_j^{(1)} =
-2 r_j^{(1)} \left(
    \begin{array}{cc}
        2   & e^{2 i \theta_j} \\
        e^{-2 i \theta_j} & 2
    \end{array}
\right)
-(s_{+1}+s_{-1})
\left(
    \begin{array}{cc}
        1 & 0\\
        0 & 1
    \end{array}
\right).
\end{eqnarray}
where the expansion (\ref{eqn0_2_1}) has been used. Projection of the linear inhomogeneous equation
(\ref{eqn_lemma1_1_d1}) to $\ker(\mathcal{H}^{(0)})$ gives the $4$-by-$4$ matrix eigenvalue problem
\begin{align}
\label{eqn_lemma1_1_B}
\mathcal{M} \boldsymbol{c}=\nu_j^{(1)}\boldsymbol{c},
\end{align}
where $\mathcal{M}_{j,k} = \frac{1}{2} \langle \boldsymbol{e_j}, \mathcal{H}^{(1)}\boldsymbol{e_k} \rangle$
is found to coincide with the one given by (\ref{eqn0_M}) thanks to the explicit expressions (\ref{eqn0_2_2}).
\end{proof}

We will now prove that the small nonzero eigenvalues of $\mathcal{H}$ for small nonzero
$\epsilon$ determine the small eigenvalues in the spectral stability problem (\ref{ev_matrix2}).
The following lemma follows the approach of \cite{kev1} but incorporates the additional
term $i \epsilon \mathcal{G}$ due to the $\pt$-symmetric gain and loss terms.

\begin{lemma}
\label{lemma1_2}
Let $\mu_j$ be a nonzero eigenvalue of the matrix $\mathcal{M}$.
Then, for sufficiently small $\epsilon \in \mathcal{O}(0)$,
the spectral problem (\ref{ev_matrix2}) has a small nonzero eigenvalue $\lambda_j$ such that
\begin{eqnarray}
\label{ev_lemma2}
\lambda_j^2 = 2 \mu_j \epsilon + \mathcal{O}(\epsilon^2).
\end{eqnarray}
\end{lemma}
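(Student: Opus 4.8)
The plan is to convert the generalized (pencil) eigenvalue problem (\ref{ev_matrix2}) into an ordinary one for the \emph{square} of the linearization matrix, to exploit that this square degenerates completely at $\epsilon = 0$, and then to read off the leading behaviour from $\mathcal{M}$ by regular perturbation theory. Since $\sigma^2 = I$, (\ref{ev_matrix2}) is equivalent to $i\lambda\boldsymbol{\xi} = \mathcal{A}(\epsilon)\boldsymbol{\xi}$ with $\mathcal{A}(\epsilon) := \sigma(\mathcal{H} + i\epsilon\mathcal{G})$, and applying $\mathcal{A}(\epsilon)$ once more gives $-\lambda^2\boldsymbol{\xi} = \mathcal{B}(\epsilon)\boldsymbol{\xi}$, where $\mathcal{B}(\epsilon) := \mathcal{A}(\epsilon)^2$. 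By Corollary \ref{cor-2-1} the solution $\boldsymbol{\phi}(\epsilon)$, hence $\mathcal{H}(\epsilon)$ and $\mathcal{B}(\epsilon)$, depends smoothly (indeed real-analytically) on $\epsilon$. Thus every eigenvalue $\lambda$ of (\ref{ev_matrix2}) yields the eigenvalue $-\lambda^2$ of $\mathcal{B}(\epsilon)$; conversely, if $-\lambda^2 \in \mathrm{spec}\,\mathcal{B}(\epsilon)$ then $(\mathcal{A}(\epsilon) - i\lambda)(\mathcal{A}(\epsilon)+i\lambda)$ is singular, so $\pm\lambda$ is an eigenvalue of (\ref{ev_matrix2}) with the same square. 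It therefore suffices to study $\mathcal{B}(\epsilon)$.

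The first key point is that $\mathcal{B}(0) = (\sigma\mathcal{H}^{(0)})^2 = 0$: blockwise $\sigma\mathcal{H}^{(0)}$ is the direct sum of the matrices $\sigma_3\mathcal{H}^{(0)}_j$, each of which has zero trace and zero determinant, hence is a nilpotent $2$-by-$2$ matrix of index two. Equivalently, $\mathcal{H}^{(0)}$ has the four-dimensional kernel $V := \mathrm{span}\{\boldsymbol{e_j}\}$ of Lemma \ref{lemma1_1}, while $\sigma\mathcal{H}^{(0)}$ has an eight-dimensional generalized kernel consisting of four Jordan blocks of size two — which is precisely why the eigenvalues bifurcating from $\lambda = 0$ scale like $\sqrt{\epsilon}$ and why $\lambda^2$, not $\lambda$, is the right variable. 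Consequently $\widetilde{\mathcal{B}}(\epsilon) := \epsilon^{-1}\mathcal{B}(\epsilon)$ extends to a smooth matrix function near $\epsilon = 0$ with $\widetilde{\mathcal{B}}(0) = \mathcal{B}'(0) = \mathcal{A}'(0)\mathcal{A}(0) + \mathcal{A}(0)\mathcal{A}'(0)$, where $\mathcal{A}(0) = \sigma\mathcal{H}^{(0)}$ and $\mathcal{A}'(0) = \sigma(\mathcal{H}^{(1)} + i\mathcal{G})$ with $\mathcal{H}^{(1)}$ as in (\ref{eqn0_H1}). On $V$ this acts by $\boldsymbol{e_k} \mapsto \mathcal{A}(0)\mathcal{A}'(0)\boldsymbol{e_k} = \sigma\mathcal{H}^{(0)}\sigma(\mathcal{H}^{(1)} + i\mathcal{G})\boldsymbol{e_k}$, and the gain--loss term drops out of this expression because $\sigma(i\mathcal{G})\boldsymbol{e_k}$ is proportional to $\boldsymbol{e_k}\in\ker\mathcal{H}^{(0)}$; expanding what remains with the explicit $\boldsymbol{e_j} = (e^{i\theta_j}, -e^{-i\theta_j})$ and the formula (\ref{eqn0_2_2}) for $r_j^{(1)}$ collapses, after elementary trigonometry, to $\mathcal{B}'(0)|_V = -2\mathcal{M}$ with $\mathcal{M}$ the matrix (\ref{eqn0_M}). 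In particular the eigenvalues of $\mathcal{M}$, scaled by $-2$, belong to $\mathrm{spec}\,\widetilde{\mathcal{B}}(0)$.

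To extract the sharp order I would use the spectral symmetry. By Remarks \ref{remark3_1} and \ref{remark3_2} the spectrum of (\ref{ev_matrix2}) is symmetric about both axes, hence invariant under $\lambda\mapsto-\lambda$; therefore the eigenvalues of $\mathcal{A}(\epsilon)$ occur in $\pm$ pairs and those of $\mathcal{B}(\epsilon)$, and of $\widetilde{\mathcal{B}}(\epsilon)$, occur in equal pairs, so $\det(\widetilde{\mathcal{B}}(\epsilon) - tI) = R(t,\epsilon)^2$ for a monic polynomial $R(\cdot,\epsilon)$ of degree four whose coefficients — obtained by extracting the monic square root of a monic perfect square — depend smoothly on $\epsilon$. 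Its roots at $\epsilon = 0$ are the eigenvalues of $\mathcal{M}$ multiplied by $-2$; if these are simple (equivalently, $\mu_j$ is a simple nonzero eigenvalue of $\mathcal{M}$), then the corresponding root $\beta_j(\epsilon)$ of $R(\cdot,\epsilon)$ is a smooth function with $\beta_j(0) = -2\mu_j$. Pulling this back through Step 1, the spectral problem (\ref{ev_matrix2}) has an eigenvalue $\lambda_j$ with $\lambda_j^2 = -\epsilon\,\beta_j(\epsilon) = 2\mu_j\epsilon + \mathcal{O}(\epsilon^2)$, which is nonzero for small $\epsilon\neq0$ (since $\mu_j\neq0$) and of size $\mathcal{O}(\sqrt{|\epsilon|})$; this is exactly (\ref{ev_lemma2}).

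The technical heart of the argument — and the step I expect to be the main obstacle — is the explicit identification $\mathcal{B}'(0)|_V = -2\mathcal{M}$, together with careful bookkeeping of the defective structure at $\epsilon = 0$ (the four size-two Jordan blocks of $\sigma\mathcal{H}^{(0)}$, and the fact that $\mathcal{B}'(0)$ and its quotient action each contribute the eigenvalues of $-2\mathcal{M}$). If $\mathcal{M}$ has a repeated nonzero eigenvalue one must instead follow the corresponding group of eigenvalues of $\widetilde{\mathcal{B}}(\epsilon)$, which still yields (\ref{ev_lemma2}). A completely equivalent route is to run the two-step Lyapunov--Schmidt reduction of Section \ref{section-stability} directly on (\ref{ev_matrix2}) by projecting onto $V$: the same matrix $\mathcal{M}$ appears, with the gain--loss term contributing nothing at leading order because $\langle\boldsymbol{e_k},\mathcal{G}\boldsymbol{e_j}\rangle = \langle\boldsymbol{e_k},\sigma\boldsymbol{e_j}\rangle = 0$, but upgrading the remainder from $\mathcal{O}(\epsilon^{3/2})$ to $\mathcal{O}(\epsilon^2)$ there still requires the same $\lambda\mapsto-\lambda$ symmetry, which is transparent in the squared-operator formulation above.
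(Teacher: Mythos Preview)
Your argument is correct and takes a genuinely different route from the paper's. The paper proceeds by a direct Puiseux--type expansion $\lambda_j=\epsilon^{1/2}\lambda_j^{(1)}+\epsilon\lambda_j^{(2)}+\mathcal{O}(\epsilon^{3/2})$, $\boldsymbol{\xi}=\boldsymbol{\xi}^{(0)}+\epsilon^{1/2}\boldsymbol{\xi}^{(1)}+\cdots$, solves the $\mathcal{O}(\epsilon^{1/2})$ level using the generalized eigenvectors $\hat{\boldsymbol{e}}_j$, and at $\mathcal{O}(\epsilon)$ projects onto $\ker\mathcal{H}^{(0)}$ to obtain $\mathcal{M}\boldsymbol{c}=\tfrac12(\lambda_j^{(1)})^2\boldsymbol{c}$; the gain--loss term disappears there because $\langle\boldsymbol{e}_j,\mathcal{G}\boldsymbol{e}_k\rangle=0$. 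You instead square the linearization, observe that $(\sigma\mathcal{H}^{(0)})^2=0$ so that $\widetilde{\mathcal{B}}(\epsilon)=\epsilon^{-1}\mathcal{A}(\epsilon)^2$ is smooth, and identify $\widetilde{\mathcal{B}}(0)$ as block upper--triangular in the basis $\{\boldsymbol{e}_j,\hat{\boldsymbol{e}}_j\}$ with both diagonal blocks equal to $-2\mathcal{M}$; the gain--loss term drops for the same structural reason, $\sigma(i\mathcal{G})\boldsymbol{e}_k\in\ker\mathcal{H}^{(0)}$. Both routes land on the same reduced matrix $\mathcal{M}$.

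What your formulation buys is the sharp remainder. The paper's expansion gives $\lambda_j^2=\epsilon(\lambda_j^{(1)})^2+2\epsilon^{3/2}\lambda_j^{(1)}\lambda_j^{(2)}+\mathcal{O}(\epsilon^2)$, so the stated $\mathcal{O}(\epsilon^2)$ in (\ref{ev_lemma2}) is not immediate without further input (namely $\lambda_j^{(2)}=0$, which follows from the $\lambda\mapsto-\lambda$ symmetry of Remarks~\ref{remark3_1}--\ref{remark3_2}). Your use of that symmetry to write $\det(\widetilde{\mathcal{B}}(\epsilon)-tI)=R(t,\epsilon)^2$ with $R$ smooth in $\epsilon$ makes the $\mathcal{O}(\epsilon^2)$ remainder automatic whenever $\mu_j$ is a simple eigenvalue of $\mathcal{M}$. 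The trade--off is that the paper's projection argument is more elementary and does not appeal to the $\pt$--symmetry of $\boldsymbol{\phi}$, while yours needs Remark~\ref{remark3_2} (hence the $\pt$ constraint) to get the perfect--square structure. In the repeated--eigenvalue case both arguments leave the same residual ambiguity at higher order, exactly as the paper later acknowledges when discussing the possible splitting of the double eigenvalues in branches (1--3).
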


\begin{proof}
We recall that $\ker(\mathcal{H}^{(0)}) = {\rm span}\{\boldsymbol{e_j}\}_{1\leq j\leq 4}$.
Let $\hat{\boldsymbol{e_j}} = \sigma \boldsymbol{e_j}$. Then, $\mathcal{H}^{(0)} \hat{\boldsymbol{e_j}} = -2\hat{\boldsymbol{e_j}}$.
Since the operator $\sigma \mathcal{H}^{(0)}$ is not self-adjoint, the zero eigenvalue of $\mathcal{H}^{(0)}$ of geometric multiplicity
$4$ may become a defective zero eigenvalue of $\mathcal{H}^{(0)}$ of higher algebraic multiplicity.
As is well-known \cite{kev1}, the zero eigenvalue of $\mathcal{H}^{(0)}$ has algebraic multiplicity $8$ with
$$
\ker(\sigma \mathcal{H}^{(0)}) = {\rm span}\{\boldsymbol{e_j}\}_{1\leq j\leq 4} \quad \mbox{\rm and} \quad
\ker((\sigma \mathcal{H}^{(0)})^2) = {\rm span}\{\boldsymbol{e_j}, \hat{\boldsymbol{e_j}}\}_{1\leq j\leq 4}.
$$

By the regular perturbation theory for two-dimensional Jordan blocks,
we are looking for the small eigenvalue $\lambda_j$ and eigenvector $\boldsymbol{\xi}$
of the spectral problem (\ref{ev_matrix2}) for small $\epsilon \in \mathcal{O}(0)$ in the form
$$
\lambda_j = \epsilon^{1/2} \lambda_j^{(1)} + \epsilon \lambda_j^{(2)} + \mathcal{O}(\epsilon^{3/2}), \quad
\boldsymbol{\xi} = \boldsymbol{\xi}^{(0)} + \epsilon^{1/2} \boldsymbol{\xi}^{(1)}
+ \epsilon \boldsymbol{\xi}^{(2)} + \mathcal{O}(\epsilon^{3/2}),
$$
where $\boldsymbol{\xi}^{(0)}=\sum_{j=1}^4 c_j \boldsymbol{e_j}$ and $\{ c_j \}_{1 \leq j \leq 4}$ are to be determined.
At the first order of $\mathcal{O}(\epsilon^{1/2})$, we obtain the linear inhomogeneous system
\begin{eqnarray}
\label{eqn_lemma1_2_d1}
\mathcal{H}^{(0)} \boldsymbol{\xi}^{(1)} = i\lambda_j^{(1)} \sigma \boldsymbol{\xi}^{(0)}.
\end{eqnarray}
Since $\sigma\mathcal{H}^{(0)}\boldsymbol{\hat{e}_j}=-2\boldsymbol{e_j}$,
we obtain the explicit solution of the linear inhomogeneous equation (\ref{eqn_lemma1_2_d1}) in the form
$$
\boldsymbol{\xi}^{(1)}=-\frac{i\lambda_j^{(1)}}{2}\sum_{j=1}^{4} c_j \boldsymbol{\hat{e}_j}.
$$
At the second order of $\mathcal{O}(\epsilon)$, we obtain the linear inhomogeneous system
\begin{eqnarray}
\label{eqn_lemma1_2_d2}
\mathcal{H}^{(0)}\boldsymbol{\xi}^{(2)} + \mathcal{H}^{(1)} \boldsymbol{\xi}^{(0)}
+ i \mathcal{G} \boldsymbol{\xi}^{(0)} = i\lambda_j^{(1)}\sigma\boldsymbol{\xi}^{(1)} + i\lambda_j^{(2)}\sigma\boldsymbol{\xi}^{(0)}.
\end{eqnarray}
Projection of the linear inhomogeneous equation
(\ref{eqn_lemma1_2_d2}) to $\ker(\mathcal{H}^{(0)})$ gives the $4$-by-$4$ matrix eigenvalue problem
\begin{align}
\label{eqn_lemma1_2_B2}
\mathcal{M} \boldsymbol{c}=\frac{1}{2} (\lambda_j^{(1)})^2 \boldsymbol{c},
\end{align}
since $\langle \boldsymbol{e_j}, \mathcal{G} \boldsymbol{e_k} \rangle = 0$ for every $j,k$.
Thus, the additional term $i \epsilon \mathcal{G}$ due to the $\pt$-symmetric gain and loss terms
does not contribute to the leading order of the nonzero eigenvalues $\lambda_j$,
which split according to the asymptotic expansion (\ref{ev_lemma2}). Note that the
relevant eigenvalues still depend on the gain--loss parameter
$\gamma$, due to the dependence of the parameters $\{ \theta_j \}_{1 \leq j \leq 4}$ on $\gamma$.
\end{proof}

If the stationary solution $\{ \phi_j \}_{1 \leq j \leq 4}$ satisfies the $\pt$-symmetry,
that is, $\bar{\boldsymbol{\phi}} = P \boldsymbol{\phi}$ with $P$ given by (\ref{symmetry-matrix}),
then the $4$-by-$4$ matrix $\mathcal{M}$ given by (\ref{eqn0_M}) has additional symmetry and can be folded
into two $2$-by-$2$ matrices. One of these two matrices must have nonzero eigenvalues for the
$\pt$-symmetric configurations because the configuration persists with respect to the small parameter $\epsilon$
by Corollary \ref{cor-2-1}. The other matrix must have a zero eigenvalue due to the gauge invariance
of the system of stationary equations \eqref{eqn0_1}. Since the persistence analysis depends
on the $\pt$-symmetry and different solution
branches have been identified in each case, we continue separately for the two kinds of the $\pt$-symmetry on the elementary cell
studied in Section \ref{section-four-sites}.

\subsection{Symmetry about the vertical line (S1)}

Under conditions $\gamma_1 = -\gamma_4$ and $\gamma_2 = -\gamma_3$, we consider the $\pt$-symmetric
configuration in the form $\phi_1=\bar{\phi}_4$ and $\phi_2=\bar{\phi}_3$.
Thus, we have $\theta_1 = -\theta_4$ and $\theta_2 = -\theta_3$, after which
the $4$-by-$4$ matrix $\mathcal{M}$ given by (\ref{eqn0_M}) can be written in the
explicit form:
\begin{equation*}
\mathcal{M}=
\left(
    \begin{array}{cccc}
        a + b & -b & 0 & -a \\
        -b & b + c & -c & 0 \\
        0 & -c & b + c & -b \\
        -a & 0 & -b & a + b
    \end{array}
\right),
\end{equation*}
where $a = \cos(2\theta_1)$, $b = \cos(\theta_1 -\theta_2)$, and $c = \cos(2\theta_2)$.
Using the transformation matrix
\begin{equation*}
T = \left(
    \begin{array}{cccc}
        1 & 0 & 0 & 1 \\
        0 & 1 & 1 & 0 \\
        0 & 1 & -1 & 0 \\
        1 & 0 & 0 & -1
    \end{array}
\right),
\end{equation*}
which generalizes the matrix $P$ given by (\ref{symmetry-matrix}) for (S1), we obtain the block-diagonalized
form of the matrix $M$ after a similarity transformation:
\begin{equation*}
T^{-1} M T = \left(
    \begin{array}{cccc}
        b & -b & 0 & 0 \\
        -b & b & 0 & 0 \\
        0 & 0 & b+2c & -b \\
        0 & 0 & -b & 2a + b
    \end{array}
\right).
\end{equation*}
Note that the first block is given by a singular matrix, whereas the second block coincides
with the Jacobian matrix $\mathcal{N}(\boldsymbol{\theta})$ given by (\ref{eqn2_N}).

The following list summarizes the stability features of the irreducible branches (1-1-a), (1-3-a), and (1-3-b)
among solutions of the system (\ref{eqn1_4}), which corresponds to the particular case $\gamma_1 = -\gamma_2 = \gamma$.

\begin{itemize}
\item[(1-1-a)] For the solution with $2 \theta_2 = 2 \theta_1 + \pi$ and
$\sin(2 \theta_1) = 1-\gamma$, we obtain $a = -c = \cos(2 \theta_1)$ and $b = 0$.
Therefore, the matrix $\mathcal{M}$ has a double zero eigenvalue
with eigenvectors $(1,0,0,1)^T, (0,1,1,0)^T$ and a pair of simple nonzero eigenvalues
$\mu_{\pm} = \pm 2 \cos(2\theta_1)$ with eigenvector $(-1,0,0,1)^T$ and $(0,-1,1,0)^T$.

By Lemma~\ref{lemma1_2}, the spectral stability problem (\ref{ev_matrix2}) has two pairs of small
nonzero eigenvalues $\lambda$: one pair of $\lambda$ is purely real near $\pm \sqrt{|4\epsilon \cos(2\theta_1)|}$
and another pair of $\lambda$ is purely imaginary near $\pm i\sqrt{|4\epsilon \cos(2\theta_1)|}$ as $\epsilon\to 0$.
Therefore, the branch (1-1-a) corresponding to the vortex configurations is spectrally unstable.
This instability disappears in the limit of $\gamma \rightarrow 0$, as
$\theta_1 \rightarrow \pi/4$ in this limit, and the dependence
of the relevant eigenvalue emerges at a higher order
in $\epsilon$, with the eigenvalue being imaginary~\cite{kev1}.

Figure \ref{fig1_1} shows comparisons between the eigenvalues approximated using the
first-order reductions in Lemma \ref{lemma1_2} and those computed numerically for the branch (1-1-a)
with $\epsilon > 0$. Note that one of the pair of real eigenvalues (second thickest solid blue line) appears beyond
the first-order reduction of Lemma \ref{lemma1_2} (see \cite{kev1} for further details).

\begin{figure}[!htbp]
\begin{tabular}{cc}
\includegraphics[width=7cm]{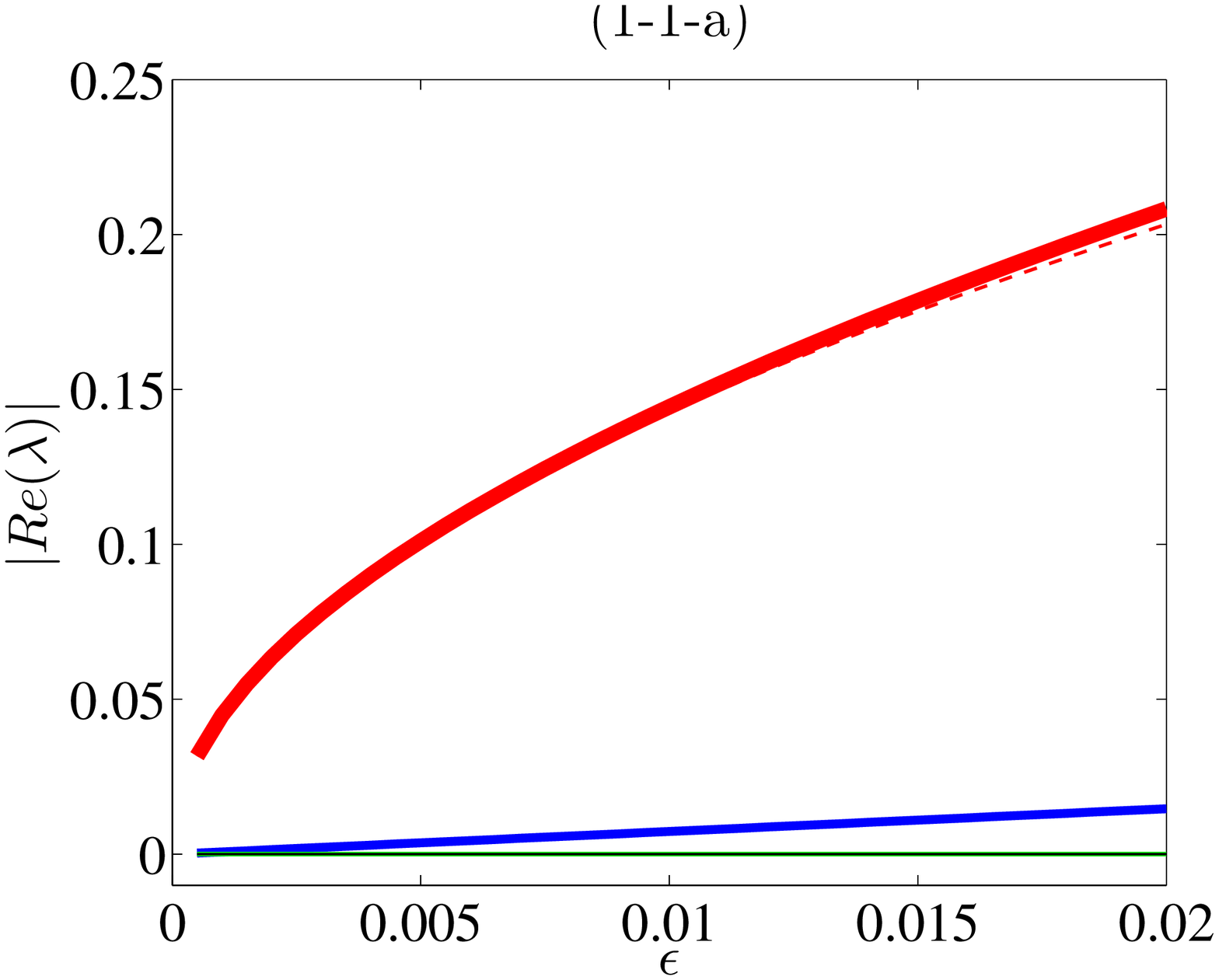}
\includegraphics[width=7cm]{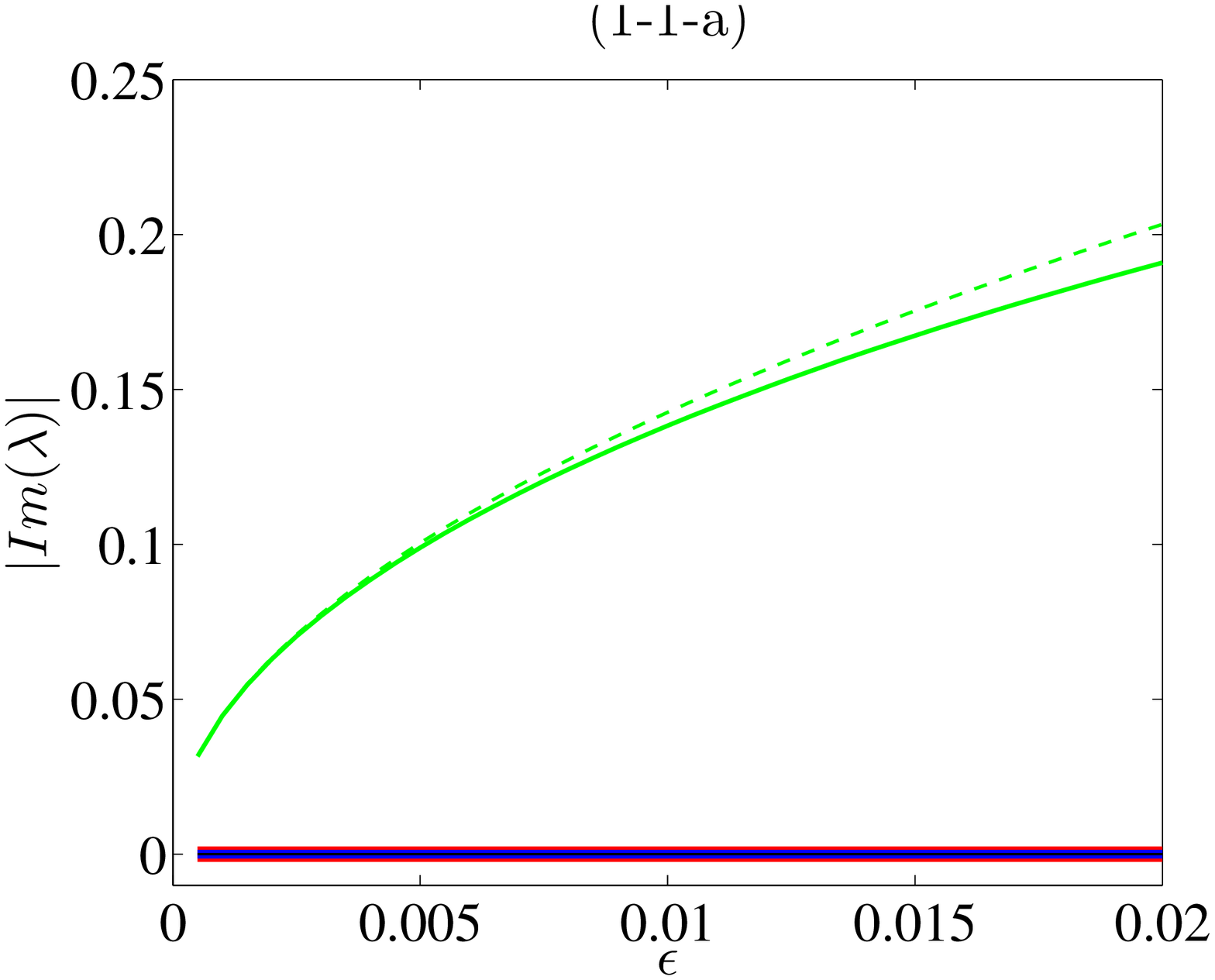}
\end{tabular}
\caption{The real (left) and imaginary (right) parts of eigenvalues $\lambda$
as functions of $\epsilon$ at $\gamma=1+\frac{\sqrt{3}}{2}$ for the branch (1-1-a)
in comparison with the analytically approximated eigenvalues (dash lines). In the left panel, two thinner (green and black) lines are zero and the small real eigenvalue (blue) appears beyond the leading-order asymptotic theory; while in the right panel, all lines but the third thickest (red, blue and black) are identically zero. }
\label{fig1_1}
\end{figure}

\item[(1-3)] For the solution with $2 \theta_2 = -2 \theta_1$ and
$\sin(2 \theta_1) = -\frac{\gamma}{2}$, we obtain $a = b = c = \cos(2 \theta_1)$.
Therefore, the matrix $\mathcal{M}$ has
a simple zero eigenvalue $\mu_1 = 0$ with eigenvector $(1,1,1,1)^T$,
a double nonzero eigenvalue $\mu_2 = \mu_3 = 2 \cos(2\theta_1)$
with eigenvectors $(i,-1,-i,1)^T$ and $(-i,-1,i,1)^T$, and a simple nonzero eigenvalue
$\mu_4 = 4 \cos(2 \theta_1)$ with eigenvector $(-1,1,-1,1)^T$.

Now we can distinguish between the branches (1-3-a) and (1-3-b)
which correspond to $\theta_1 = -\frac{1}{2}\arcsin\left(\frac{\gamma}{2}\right)$ and
$\theta_1 =\frac{\pi}{2}+\frac{1}{2}\arcsin\left(\frac{\gamma}{2}\right)$ respectively.

By Lemma~\ref{lemma1_2}, the spectral stability problem (\ref{ev_matrix2}) for the branch (1-3-a)
has only pairs of real eigenvalues $\lambda$, moreover, the pair of double real eigenvalues in the first-order
reduction may split to a complex quartet beyond the first-order reduction. Independently of
the outcome of this splitting, the branch (1-3-a) is spectrally unstable.

On the other hand, the spectral stability problem (\ref{ev_matrix2}) for the branch (1-3-b)
has only pairs of imaginary eigenvalues $\lambda$ in the first-order reduction in $\epsilon$.
However, one pair of imaginary eigenvalues $\lambda$ is double and may split to a complex quartet beyond
the first-order reduction. If this splitting actually occurs, the branch (1-3-b) is spectrally unstable as well.

Figures \ref{fig1_3} and \ref{fig1_3b} illustrate
the comparisons between numerical eigenvalues and approximated eigenvalues
for the branches (1-3-a) and (1-3-b) respectively. We can see that the pair of double
real eigenvalues $\lambda$ of the spectral stability problem (\ref{ev_matrix2}) for the branch (1-3-a) splits
into two pairs of simple real eigenvalues, hence the complex quartets do not appear as a result
of this splitting. On the other hand, the pair
of double imaginary eigenvalues $\lambda$ of the spectral stability problem (\ref{ev_matrix2}) for the branch (1-3-b)
does split into a quartet of complex eigenvalues, which results in the
(weak) spectral instability of the branch (1-3-b).

\begin{figure}[!htbp]
\centering
\begin{tabular}{c}
\includegraphics[width=7cm]{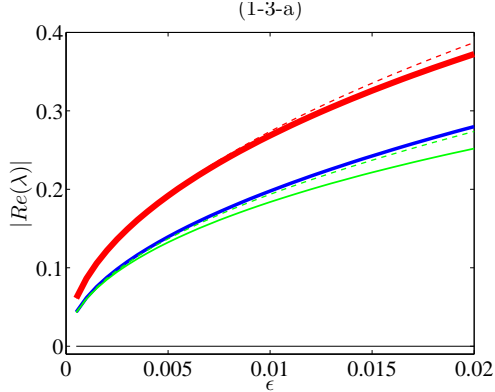}
\end{tabular}
\caption{The real parts of eigenvalues $\lambda$
as functions of $\epsilon$ at $\gamma = -0.7$ for the branch (1-3-a) in comparison with
the analytically approximated eigenvalues (dash lines). The imaginary parts are identically zero.
The double real eigenvalue
splits beyond the leading-order asymptotic theory.}
\label{fig1_3}
\end{figure}

\begin{figure}[!htbp]
\begin{tabular}{ccc}
\includegraphics[width=7cm]{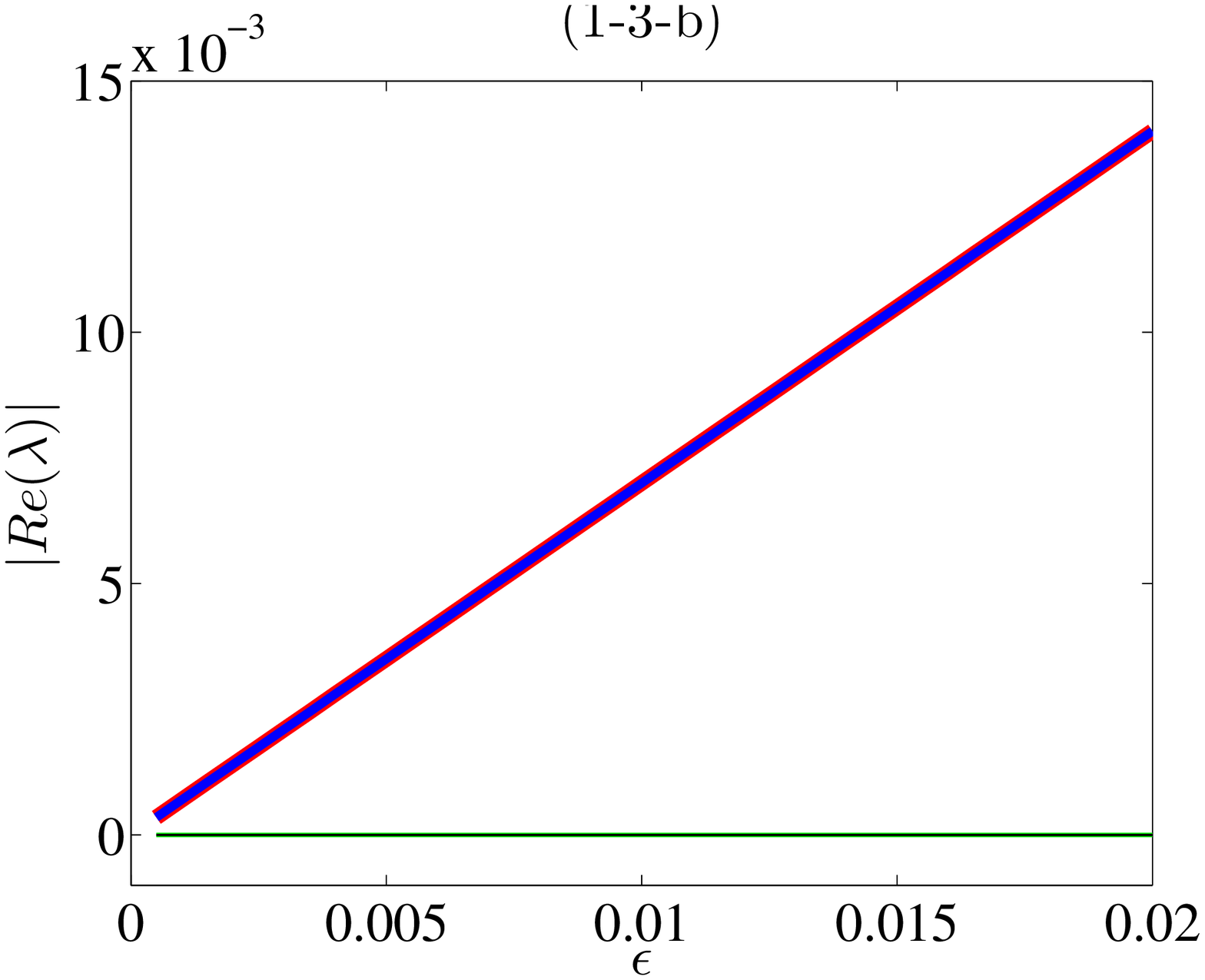}
\includegraphics[width=7cm]{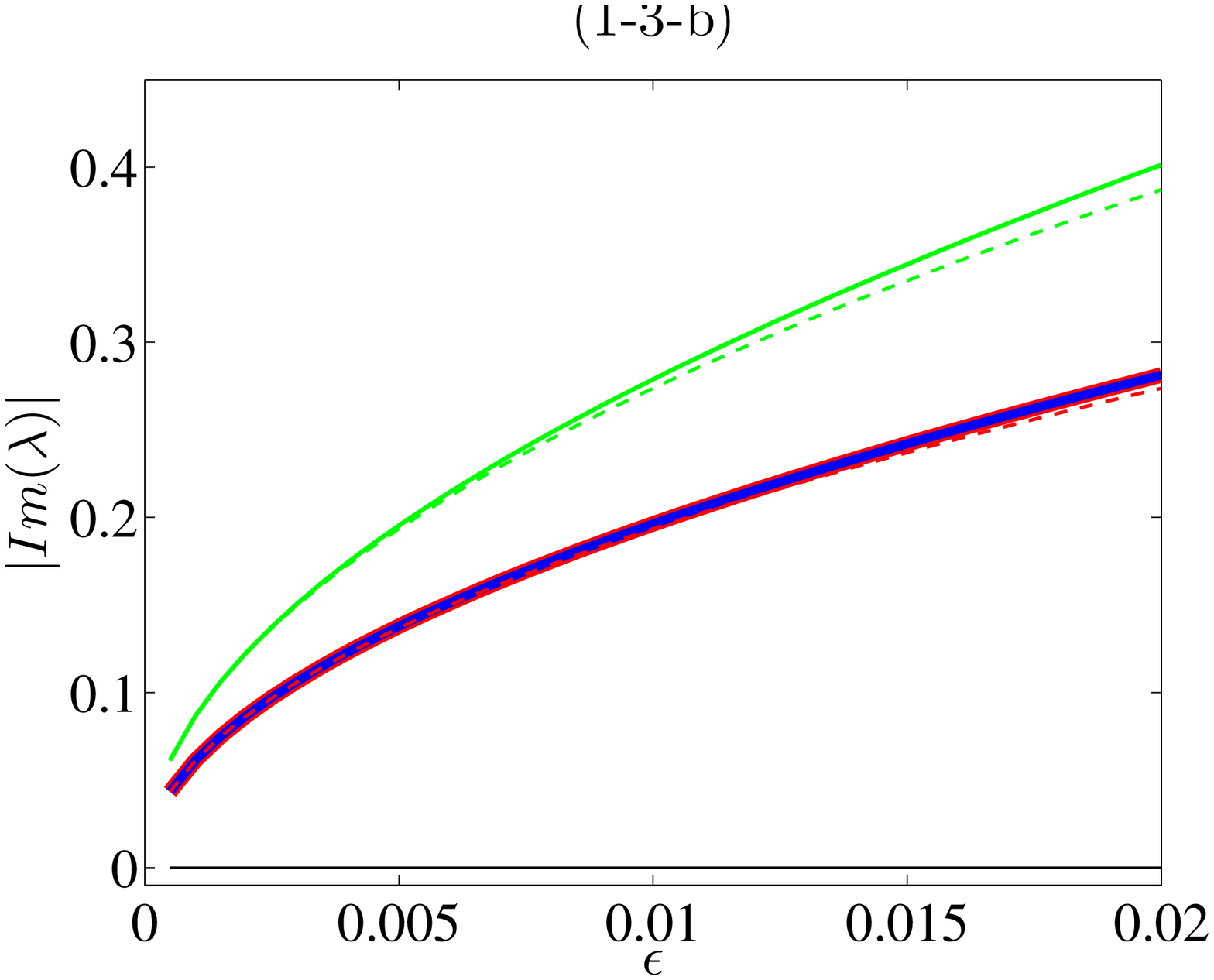}
\end{tabular}
\caption{The real (left) and imaginary (right) parts of eigenvalues $\lambda$
as functions of $\epsilon$ at $\gamma = -0.7$ for the branch (1-3-b)
in comparison with the analytically
approximated eigenvalues (dash lines). In the left panel, two thicker (red and blue) solid lines as well as two thinner (black and green)
solid lines are identical. Similarly, the two thicker (red and blue) solid lines are identical in the right panel. These two sets of coincident
lines correspond to the complex eigenvalue quartet which
emerges in this case, destabilizing the discrete soliton configuration.
Splitting of the double imaginary eigenvalues is beyond the leading-order asymptotic theory.}
\label{fig1_3b}
\end{figure}
\end{itemize}

\subsection{Symmetry about the center (S2)}

Under conditions $\gamma_1 = -\gamma_3$ and $\gamma_2 = -\gamma_4$, we consider the $\pt$-symmetric
configuration in the form $\phi_1=\bar{\phi}_3$ and $\phi_2=\bar{\phi}_4$.
Thus, we have $\theta_1 = -\theta_3$ and $\theta_2 = -\theta_4$, after which
the $4$-by-$4$ matrix $\mathcal{M}$ given by (\ref{eqn0_M}) can be written in the
explicit form:
\begin{equation*}
\mathcal{M}=
\left(
    \begin{array}{cccc}
        a + b & -b & 0 & -a \\
        -b & a + b & -a & 0 \\
        0 & -a & a+b & -b \\
        -a & 0 & -b & a+b
    \end{array}
\right),
\end{equation*}
where $a = \cos(\theta_1+\theta_2)$ and $b = \cos(\theta_1 -\theta_2)$.
Using the transformation matrix
\begin{equation*}
T = \left(
    \begin{array}{cccc}
        1 & 0 & 1 & 0 \\
        0 & 1 & 0 & 1 \\
        1 & 0 & -1 & 0 \\
        0 & 1 & 0 & -1
    \end{array}
\right),
\end{equation*}
which generalizes the matrix $P$ given by (\ref{symmetry-matrix}) for (S2)
we diagonalize the matrix $M$ into two blocks after a similarity transformation:
\begin{equation*}
T^{-1} M T = \left(
    \begin{array}{cccc}
        a+b & -a-b & 0 & 0 \\
        -a-b & a+b & 0 & 0 \\
        0 & 0 & a + b & a-b \\
        0 & 0 & a-b & a + b
    \end{array}
\right).
\end{equation*}
Note that the first block is given by a singular matrix, whereas the second block coincides
with the Jacobian matrix $\mathcal{N}(\boldsymbol{\theta})$ given by (\ref{eqn3_N}).

The following list summarizes stability of the branches (2-1) and (2-2) among the solutions of the system (\ref{eqn3_4}),
which corresponds to the particular case $\gamma_1 = -\gamma_2 = \gamma$.

\begin{itemize}
\item[(2-1)] For the solution with $\theta_2 = -\theta_1$ and $\sin(2\theta_1)=-\gamma$,
we obtain $a = 1$ and $b = \cos(2\theta_1)$. Therefore,
the matrix $\mathcal{M}$ has zero eigenvalue $\mu_1 = 0$ with eigenvector $(1,1,1,1)^T$
and three simple nonzero eigenvalues $\mu_2 = 2$ with eigenvector $(-1,-1,1,1)^T$,
$\mu_3 = 2 \cos((2\theta_1)$ with eigenvector $(1,-1,-1,1)^T$,
and $\mu_4 = 2 + 2 \cos(2\theta_1)$ with eigenvector $(-1,1,-1,1)^T$.

For $\epsilon > 0$, the spectral problem (\ref{ev_matrix2}) has at least one pair of
real eigenvalues $\lambda$ near $\pm\sqrt{4\epsilon}$ so that the stationary solutions are
spectrally unstable for both branches (2-1-a) and (2-1-b).
The numerical results shown in Figures \ref{fig2_1} and \ref{fig2_1b}
illustrate the validity of the first-order approximations for the eigenvalues
of the spectral stability problem (\ref{ev_matrix2}).

\begin{figure}[!htbp]
\centering
\begin{tabular}{ccc}
\includegraphics[width=7cm]{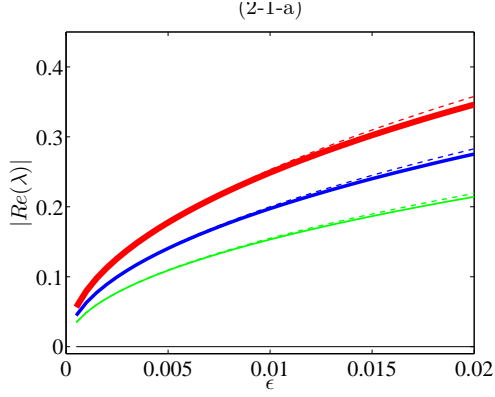}
\end{tabular}
\caption{The real parts of eigenvalues $\lambda$ as functions of $\epsilon$ at $\gamma = 0.8$ for the branch (2-1-a) in comparison with the
analytically approximated eigenvalues (dash lines). The imaginary parts are identically zero, i.e., all
three nonzero eigenvalue pairs are real.}
\label{fig2_1}
\end{figure}

\begin{figure}[!htbp]
\begin{tabular}{ccc}
\includegraphics[width=7cm]{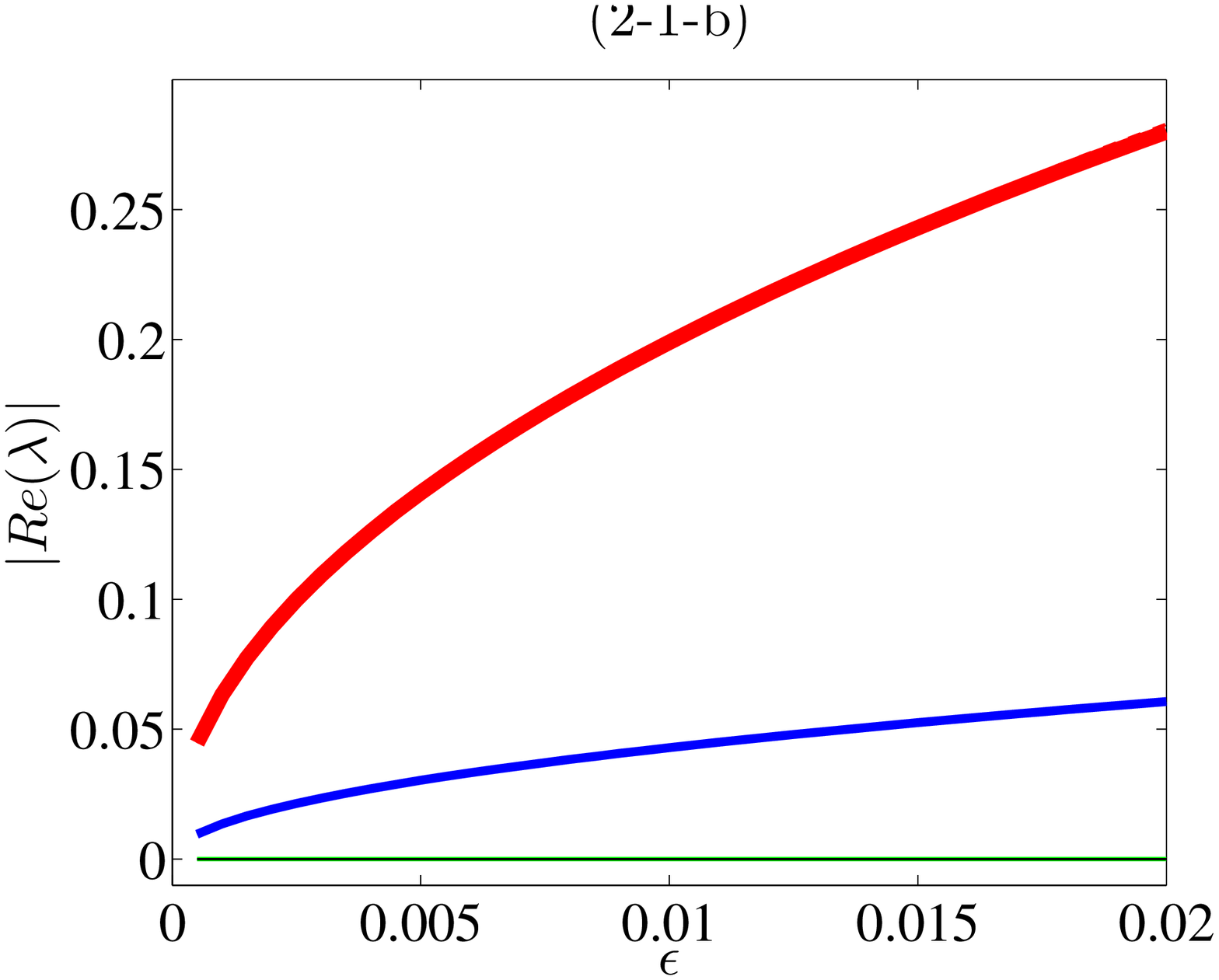}
\includegraphics[width=7cm]{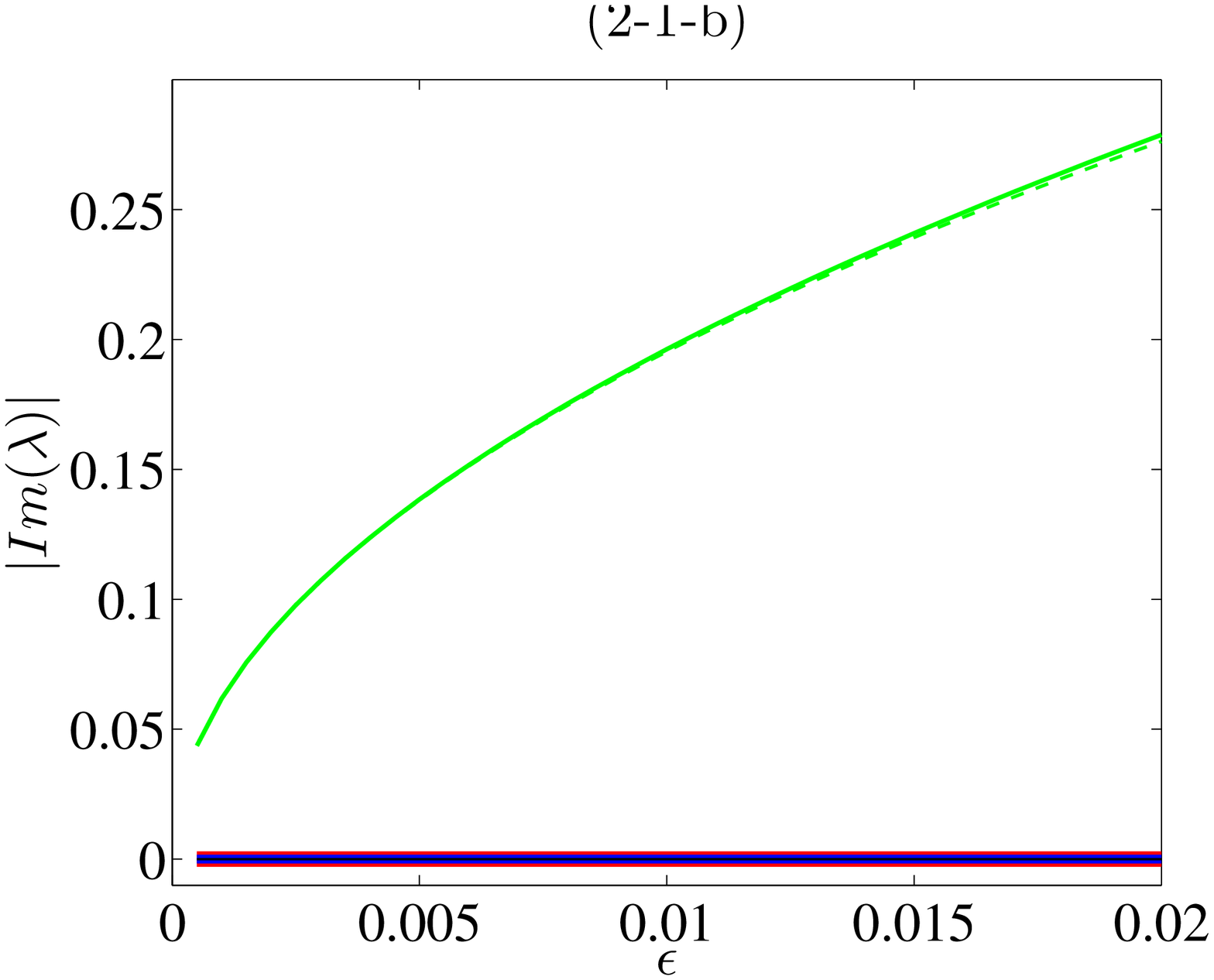}
\end{tabular}
\caption{The real and imaginary parts of eigenvalues $\lambda$
as functions of $\epsilon$ at $\gamma = -0.3$ for the branch (2-1-b) in comparison with
the analytically approximated eigenvalues (dash lines). In the left panel two thinner (green and black) solid lines are zero while in the right panel all solid lines but the third thickest (red, blue and black) are zero. Here, two
of the relevant eigenvalue pairs are found to be real, while the other
is imaginary, in good agreement with the theoretical predictions. In both panels, the solid lines and dash lines look almost identical.}
\label{fig2_1b}
\end{figure}

\item[(2-2)] For the solution with $\theta_2 = -\theta_1 \pm \pi$ and $\sin(2\theta_1) = \gamma$,
we obtain $a = -1$ and $b = -\cos(2 \theta_1)$. Therefore,
the matrix $\mathcal{M}$ has zero eigenvalue $\mu_1 = 0$ with eigenvector $(1,1,1,1)^T$
and three simple nonzero eigenvalues $\mu_2 = -2$, $\mu_3 = -2 \cos(2\theta_1)$,
and $\mu_4 = 2 + 2 \cos(2\theta_1)$. These eigenvalues are opposite to those in the case (2-1),
because the family (2-2) is related to the family (2-1) by Remark \ref{remark1_2}. Consequently,
the stability analysis of the family (2-2) for $\epsilon > 0$ corresponds to
the stability analysis of the family (2-1) for $\epsilon < 0$.

For the branch (2-2-a), the spectral stability problem (\ref{ev_matrix2}) with $\epsilon > 0$
has three pairs of simple purely imaginary eigenvalues $\lambda$ near $\pm 2i\sqrt{\epsilon}$, $\pm 2 i\sqrt{\epsilon \cos(2\theta_1)}$,
and $\pm 2 i\sqrt{\epsilon (1+\cos(2\theta_1))}$. Therefore, the stationary solution is spectrally stable
at least for small values of $\epsilon > 0$. For the branch (2-2-b), the spectral stability problem (\ref{ev_matrix2}) with $\epsilon > 0$
include a pair of real eigenvalues near $\pm 2 \sqrt{|\epsilon \cos(2\theta_1)|}$ , which implies instability of the stationary solutions.
The numerical results shown in Figures \ref{fig2_2} and \ref{fig2_2b}
illustrate the validity of these predictions.

\begin{figure}[!htbp]
\centering
\begin{tabular}{ccc}
\includegraphics[width=7cm]{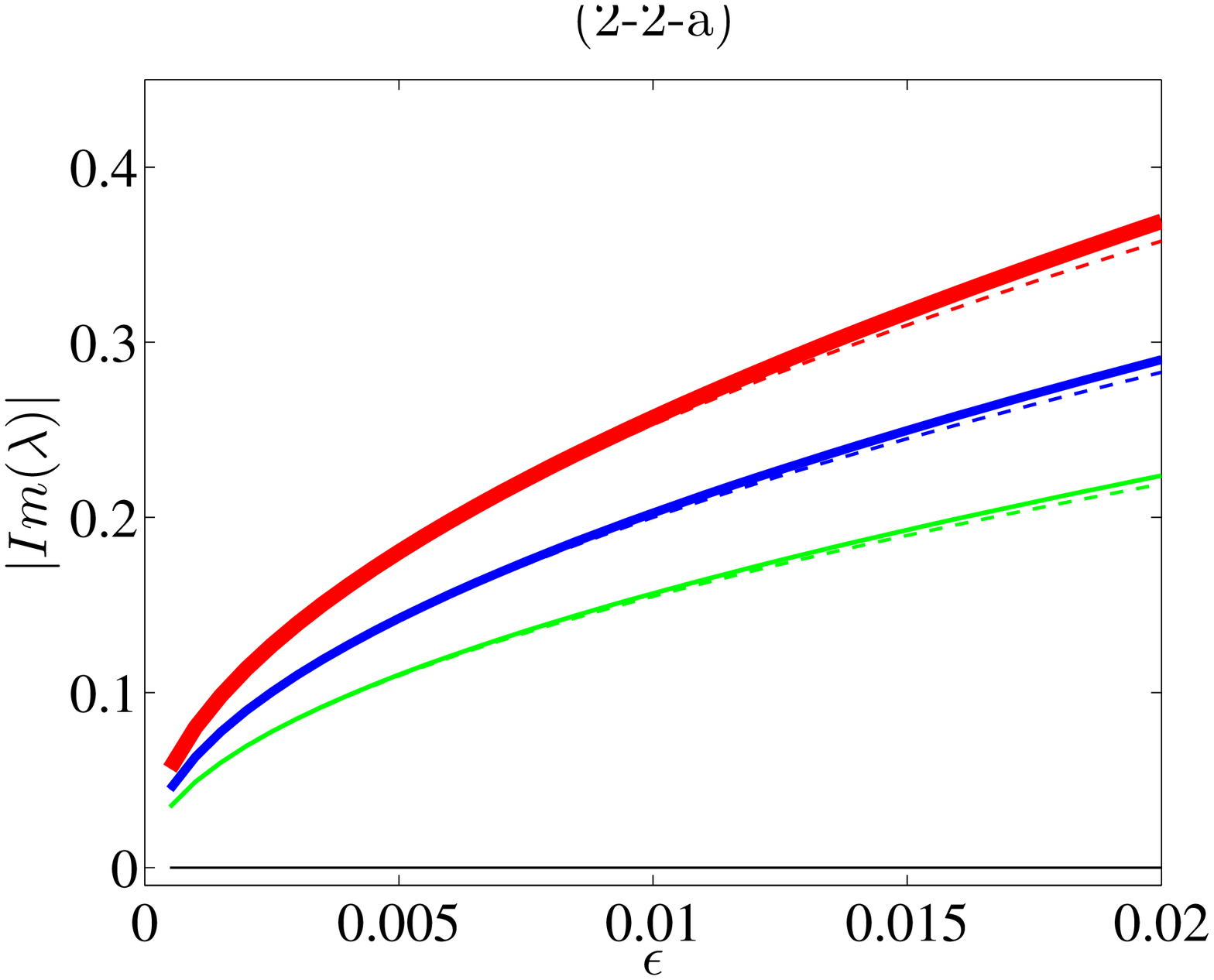}
\end{tabular}
\caption{
The imaginary parts of eigenvalues $\lambda$ as functions of $\epsilon$ at $\gamma = 0.8$ for the
stable branch (2-2-a) in comparison with the analytically
approximated eigenvalues (dash lines). The real parts of eigenvalues are identically zero. }
\label{fig2_2}
\end{figure}

\begin{figure}[!htbp]
\begin{tabular}{ccc}
\includegraphics[width=7cm]{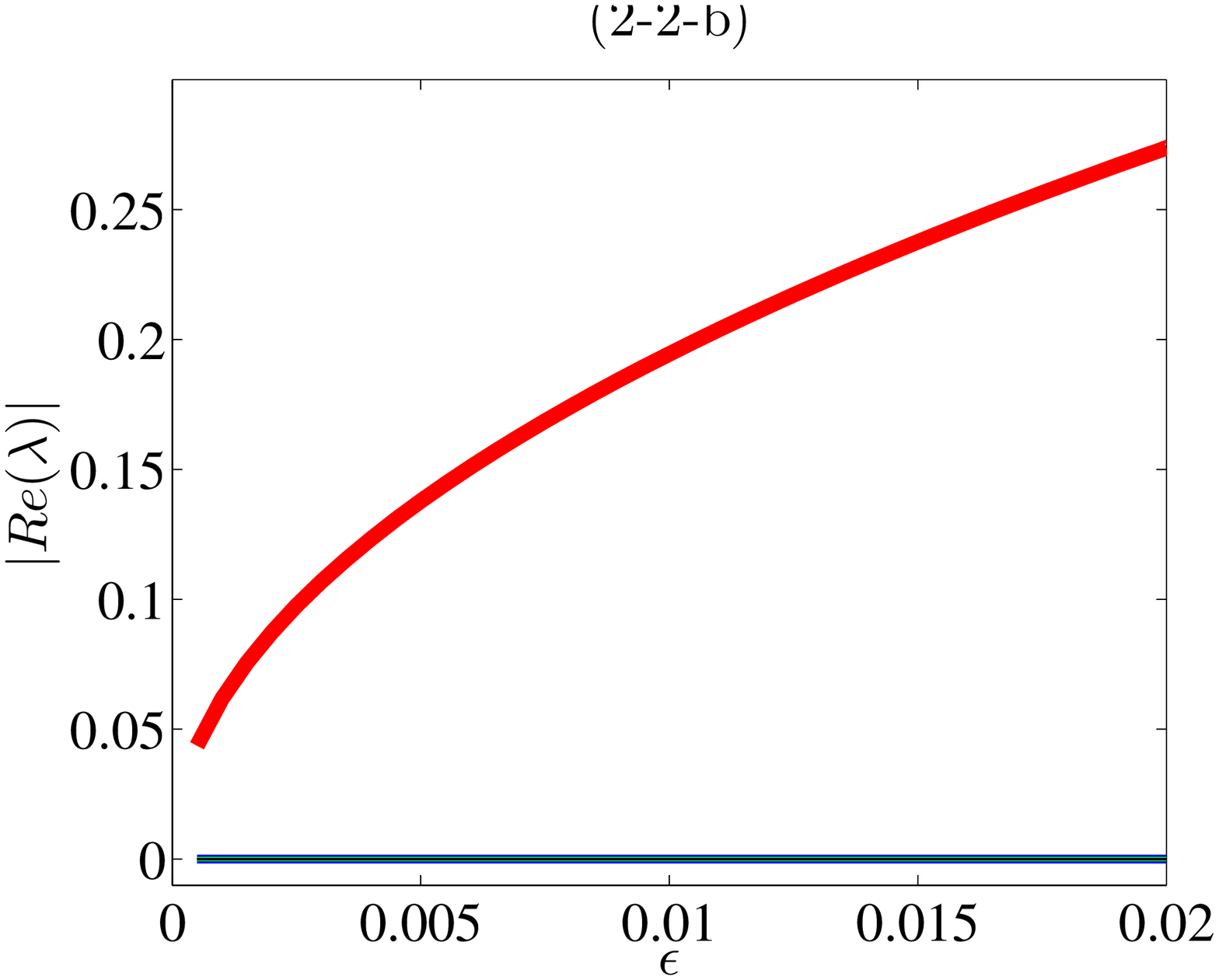}
\includegraphics[width=7cm]{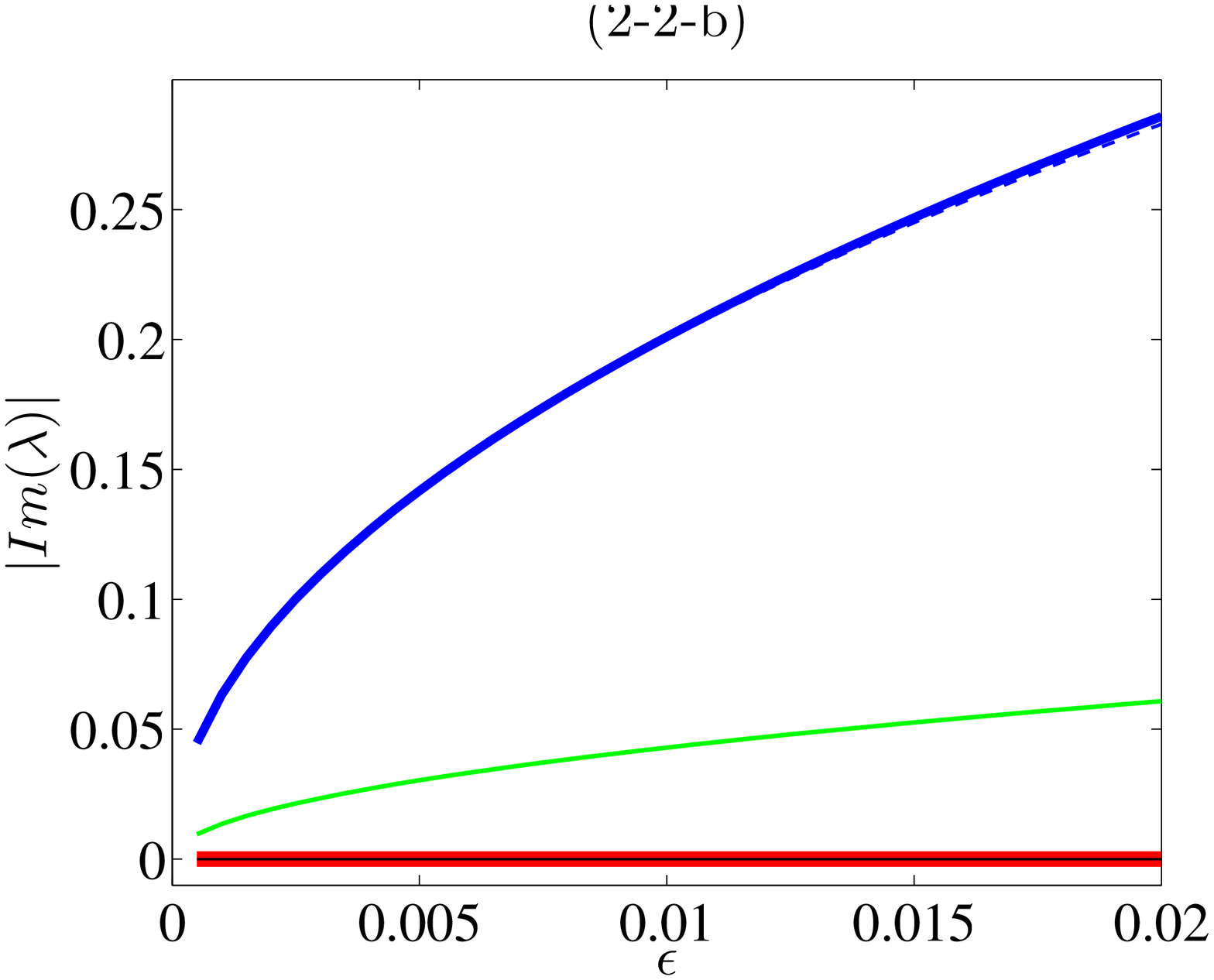}
\end{tabular}
\caption{The real parts of eigenvalues $\lambda$ as functions of $\epsilon$ at $\gamma = 0.3$ for the branch (2-2-b)
in comparison with the approximated eigenvalues (dash lines). In the left panel three thinner (blue, green and black) solid lines are zero while in the right panel the thickest and the thinnest (red and black) solid lines are zero. Here,
one of the three pairs is found to be real, while the other two
are imaginary. In both panels, the solid and dash lines look almost identical.}
\label{fig2_2b}
\end{figure}

\end{itemize}

\section{Stability of $\pt$-symmetric configurations in truncated lattice}
\label{section-stability-extended}

Here we extend the spectral stability analysis of the $\pt$-symmetric configurations
to the setting of the truncated square lattice. Persistence of these configurations
in small parameter $\epsilon$ is obtained in Section \ref{section-full}.

Let $n$ be an even number and consider the $n$-by-$n$
square lattice, denoted by $L_n$. The domain is truncated symmetrically with zero boundary
conditions. The $\pt$-symmetric solution $\{\phi_{j,k}\}_{(j,k) \in L_n}$
to the system (\ref{eqn0_2}) is supposed to satisfy
the limiting configuration (\ref{limiting-config-lattice}), where
the central cell $S$ is now placed at
$$
S = \left\{ \left(\frac{n}{2}, \frac{n}{2} \right),
\left( \frac{n}{2}, \frac{n}{2}+1 \right),
\left( \frac{n}{2}+1, \frac{n}{2} \right),
\left( \frac{n}{2}+1, \frac{n}{2}+1 \right) \right\},
$$
whereas the zero sites are located at $S^* := L_n \backslash S$.

The spectral stability problem is given by (\ref{ev_matrix2}), where
$\boldsymbol{\xi}$ consists of blocks of $(v_{j,k},w_{j,k})^T$,
$\mathcal{G}$ consists of blocks of $\gamma_{j,k} \sigma_3$,
and $\mathcal{H}$ consists of the blocks of
\begin{align}
\label{eqn0_H_2}
\mathcal{H}_{j,k}=
\left(
    \begin{array}{cc}
        1-2|\phi_{j,k}|^2 & -\phi_{j,k}^2\\
        -(\overbar{\phi_{j,k}})^2 & 1-2|\phi_{j,k}|^2
    \end{array}
\right)
-\epsilon(s_{0,+1}+s_{0,-1}+s_{-1,0}+s_{+1,0})
\left(
    \begin{array}{cc}
        1 & 0\\
        0 & 1
    \end{array}
\right),
\end{align}
where $s_{l,m}$ stands for the shift operator such that $s_{l,m} \phi_{j,k} = \phi_{j+l,k+m}$.

The limiting operator $\mathcal{H}^{(0)}$ at $\epsilon = 0$ has two semi-simple eigenvalues
$\mu = 0$ and $\mu = -2$ of multiplicity four and a semi-simple eigenvalue $\mu = 1$ of multiplicity
$2n^2-8$. On the other hand, the spectral stability problem (\ref{ev_matrix2}) for $\epsilon = 0$
has the zero eigenvalue $\lambda = 0$ of geometric multiplicity four and algebraic multiplicity eight
and a pair of semi-simple eigenvalues $\lambda = \pm i$ of multiplicity $n^2 - 4$.
When $\epsilon$ is nonzero but small, the splitting of the zero eigenvalue $\lambda = 0$
is the same as that on the elementary cell $S$ if the splitting occurs
in the first-order perturbation theory. However, unless $\gamma_{j,k}=0$ for all $(j,k) \in S^*$,
the splitting of the semi-simple eigenvalues $\lambda = \pm i$ is non-trivial and
can possibly lead to instability, as is shown in the analysis of \cite{pel3}.

In order to study the splitting of the semi-simple eigenvalues $\lambda = \pm i$ for small but nonzero $\epsilon$,
we expand $\boldsymbol{\xi}=\boldsymbol{\xi}^{(0)}+\epsilon \boldsymbol{\xi}^{(1)}+\mathcal{O}(\epsilon^2)$  and
$\lambda=\lambda^{(0)}+\epsilon \lambda^{(1)}+\mathcal{O}(\epsilon^2)$ where $\lambda^{(0)} = \pm i$,
and obtain the perturbation equations
\begin{eqnarray}
\label{eqn_ev0}
 \mathcal{H}^{(0)} \boldsymbol{\xi}^{(0)} = i\lambda^{(0)}\sigma\boldsymbol{\xi}^{(0)},
\end{eqnarray}
and
\begin{eqnarray}
\label{eqn_ev1}
 (\mathcal{H}^{(1)}+i\mathcal{G}) \boldsymbol{\xi}^{(0)} +  \mathcal{H}^{(0)} \boldsymbol{\xi}^{(1)} = i\lambda^{(0)}\sigma\boldsymbol{\xi}^{(1)}+i\lambda^{(1)}\sigma\boldsymbol{\xi}^{(0)}.
\end{eqnarray}
Since $i\lambda^{(0)}\in\mathbb{R}$, the operator $(\mathcal{H}^{(0)}-i\lambda^{(0)}\sigma)$ is self-adjoint,
and we assume that $\ker(\mathcal{H}^{(0)}-i\lambda^{(0)}\sigma)$ is spanned by $\{\boldsymbol{\psi}_{j,k}\}_{(j,k)\in S^*}$.
Then, $\boldsymbol{\xi}^{(0)}=\sum\limits_{(j,k)\in S^*} c_{j,k} \boldsymbol{\psi}_{j,k}$
satisfies (\ref{eqn_ev0}). Projection of (\ref{eqn_ev1}) to $\ker(\mathcal{H}^{(0)}-i\lambda^{(0)}\sigma)$ yields
the matrix eigenvalue problem
\begin{eqnarray}
\label{eqn_ev1_2}
 i\lambda^{(1)}D\boldsymbol{c} = K \boldsymbol{c},
\end{eqnarray}
where
\begin{eqnarray*}
D_{\mathcal{P}(j_1,k_1),\mathcal{P}(j_2,k_2)} & = & \langle \boldsymbol{\psi}_{j_1,k_1}, \sigma\boldsymbol{\psi}_{j_2,k_2} \rangle, \\
K_{\mathcal{P}(j_1,k_1),\mathcal{P}(j_2,k_2)} & = & \langle \boldsymbol{\psi}_{j_1,k_1},
(\mathcal{H}^{(1)}+i\mathcal{G})\boldsymbol{\psi}_{j_2,k_2} \rangle,
\end{eqnarray*}
for $(j_1,k_1), (j_2,k_2)\in S^*$, and the bijective mapping $\mathcal{P}: S^*\to \{1,2,...,n^2-4\}$ is defined by
\begin{eqnarray}
\label{eqn_K}
\mathcal{P}(j,k)=
\begin{cases}
(j-1)n+k, &(j-1)n+k< (\frac{n}{2}-1)n+\frac{n}{2}\cr
(j-1)n+k-2, &(\frac{n}{2}-1)n+\frac{n}{2}+1< (j-1)n+k< \frac{n^2}{2}+\frac{n}{2}\cr
(j-1)n+k-4, &(j-1)n+k> \frac{n^2}{2}+\frac{n}{2}+1
\end{cases}
\end{eqnarray}
assuming that the lattice $L_n$ is traversed in the order
$$
(1,1),(1,2),...,(1,n),(2,1),(2,2),...,(2,n),...,(n,1),(n,2),...,(n,n).
$$

For $i\lambda^{(0)}=1$, the eigenvector $\boldsymbol{\psi}_{j,k}$ has the only nonzero block
$(1,0)^T$ at $\mathcal{P}(j,k)$-th position which corresponds to position $(j,k)$ on the lattice.
Therefore, we have $D=I_{n^2-4}$ and
\begin{eqnarray}
\label{eqn_M_near1}
K_{\mathcal{P}(j_1,k_1),\mathcal{P}(j_2,k_2)}=
\begin{cases}
i\gamma_{j_1,k_1}, &(j_1,k_1)=(j_2,k_2)\cr
-1, &|j_1-j_2|+|k_1-k_2|=1\cr
0, &{\rm otherwise}
\end{cases}
\end{eqnarray}
For $i\lambda^{(0)}=-1$, the eigenvector $\boldsymbol{\psi}_{j,k}$ has the only nonzero block
$(0,1)^T$ at $\mathcal{P}(j,k)$-th position, then $D = -I_{n^2-4}$ and $K$ is the complex conjugate of
$K$ given by (\ref{eqn_M_near1}). As a result, the eigenvalues of the reduced eigenvalue problem (\ref{eqn_ev1_2})
for $i\lambda^{(0)}=-1$ are complex conjugate to those for $i\lambda^{(0)}=1$.

As an example, matrix $K$ (for $i\lambda^{(0)}=1$) is obtained for $n=4$ in the following form:
\begin{equation}
\label{S-example}
K = \left(
    \begin{array}{cccccccccccc}
        i\gamma_{1,1} & -1 & 0 & 0 & -1 & 0 & 0 & 0 & 0 & 0 & 0 & 0 \\
        -1 & i\gamma_{1,2} & -1 & 0 & 0 & 0 & 0 & 0 & 0 & 0 & 0 & 0 \\
        0 & -1 & i\gamma_{1,3} & -1 & 0 & 0 & 0 & 0 & 0 & 0 & 0 & 0 \\
        0 & 0 & -1 & i\gamma_{1,4} & 0 & -1 & 0 & 0 & 0 & 0 & 0 & 0 \\
        -1 & 0 & 0 & 0 & i\gamma_{2,1} & 0 & -1 & 0 & 0 & 0 & 0 & 0 \\
        0 & 0 & 0 & -1 & 0 & i\gamma_{2,4} & 0 & -1 & 0 & 0 & 0 & 0 \\
        0 & 0 & 0 & 0 & -1 & 0 & i\gamma_{3,1} & 0 & -1 & 0 & 0 & 0 \\
        0 & 0 & 0 & 0 & 0 & -1 & 0 & i\gamma_{3,4} & 0 & 0 & 0 & -1 \\
        0 & 0 & 0 & 0 & 0 & 0 & -1 & 0 & i\gamma_{4,1} & -1 & 0 & 0 \\
        0 & 0 & 0 & 0 & 0 & 0 & 0 & 0 & -1 & i\gamma_{4,2} & -1 & 0 \\
        0 & 0 & 0 & 0 & 0 & 0 & 0 & 0 & 0 & -1 & i\gamma_{4,3} & -1 \\
        0 & 0 & 0 & 0 & 0 & 0 & 0 & -1 & 0 & 0 & -1 & i\gamma_{4,4} \\
    \end{array}
\right).
\end{equation}

The $\pt$-symmetric configurations for the symmetries (S1) and (S2)
correspond to the cases $\gamma_{j,k}=(-1)^{j+k}\gamma$
and $\gamma_{j,k} = (-1)^j \gamma$ for all $(j,k)\in L_n$, where $\gamma \in \mathbb{R}$.
In the first case, we shall prove that the eigenvalues $\lambda^{(1)}$
of the reduced eigenvalue problem (\ref{eqn_ev1_2}) include
a pair of real eigenvalues for every $\gamma \neq 0$.
In the second case, we shall show numerically that the eigenvalues
$\lambda^{(1)}$ of the reduced eigenvalue problem (\ref{eqn_ev1_2})
remain purely imaginary for sufficiently small $\gamma \neq 0$.

\begin{proposition}
\label{proposition5_1}
Let $\gamma_{j,k}=(-1)^{j+k}\gamma$ for all $(j,k) \in L_n$ and $K$ be given by (\ref{eqn_M_near1}).
If we define $A=\re(K)$ and $B=\im(K|_{\gamma=1})$ then $A$ and $B$ are anti-commute, i.e. $AB = -BA$.
\end{proposition}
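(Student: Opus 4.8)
The plan is to re-read both matrices structurally and then reduce the anticommutation identity to a pointwise statement about the bipartite structure of the square lattice. First I would record what $A$ and $B$ actually are. Since $\gamma\in\mathbb{R}$, the diagonal entries $i\gamma_{j,k}$ of $K$ are purely imaginary, so $A=\re(K)$ keeps only the hopping entries: $A$ is (minus) the adjacency matrix of the graph obtained from $L_n$ by deleting the four central sites of $S$, i.e.\ $A_{\mathcal{P}(j_1,k_1),\mathcal{P}(j_2,k_2)}=-1$ when $|j_1-j_2|+|k_1-k_2|=1$ and $0$ otherwise, and in particular the diagonal of $A$ vanishes. Setting $\gamma=1$ makes the diagonal entries of $K$ equal to $i(-1)^{j+k}$ and leaves the off-diagonal hopping entries real, so $B=\im(K|_{\gamma=1})$ is the diagonal matrix $B=\mathrm{diag}\big((-1)^{j+k}\big)_{(j,k)\in S^*}$ — precisely the two-colouring sign matrix of the bipartite lattice.

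Next I would exploit that $B$ is diagonal. For any matrix $A$ and diagonal $B=\mathrm{diag}(b_p)$ one has $(AB)_{pq}=A_{pq}b_q$ and $(BA)_{pq}=b_pA_{pq}$, hence
\[
(AB+BA)_{pq}=A_{pq}\,(b_p+b_q).
\]
So $AB=-BA$ is equivalent to the single requirement that $A_{pq}\neq 0$ forces $b_p+b_q=0$. The diagonal entries contribute nothing because $A_{pp}=0$, and non-adjacent pairs contribute nothing because $A_{pq}=0$; only nearest-neighbour pairs matter.

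The remaining step is to check $b_p=-b_q$ for adjacent $p,q$. If $p=\mathcal{P}(j_1,k_1)$ and $q=\mathcal{P}(j_2,k_2)$ with $A_{pq}\neq0$, then $|j_1-j_2|+|k_1-k_2|=1$, so exactly one coordinate changes by $\pm1$; therefore $j_1+k_1$ and $j_2+k_2$ have opposite parity and $b_p=(-1)^{j_1+k_1}=-(-1)^{j_2+k_2}=-b_q$, giving $(AB+BA)_{pq}=0$. Collecting the three cases yields $AB=-BA$.

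I do not expect a genuine obstacle here: the only point needing care is the bookkeeping through the relabelling $\mathcal{P}:S^*\to\{1,\dots,n^2-4\}$, but this is harmless, since adjacency and the parity $(-1)^{j+k}$ are intrinsic to the lattice coordinates $(j,k)$ and are simply transported by $\mathcal{P}$; deleting the four central sites of $S$ does not spoil the bipartition, because an induced subgraph of a bipartite graph inherits the two-colouring. The argument uses nothing about $\gamma$ beyond $\gamma\in\mathbb{R}$ and works for every even $n$.
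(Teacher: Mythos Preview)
Your proof is correct and follows essentially the same approach as the paper: both identify $A$ as the off-diagonal hopping (adjacency) part and $B$ as the diagonal sign matrix, then use the bipartite parity structure of the square lattice to conclude that every nonzero entry of $A$ connects sites with opposite signs of $B$, so that $AB=-BA$. Your presentation is slightly more explicit in writing the entrywise identity $(AB+BA)_{pq}=A_{pq}(b_p+b_q)$, but the underlying argument is the same.
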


\begin{proof}
By inspecting the definition of matrices in (\ref{eqn_M_near1}) with $\gamma_{j,k}=(-1)^{j+k}\gamma$
for all $(j,k) \in L_n$, we observe that $A$ is symmetric, $B$ is diagonal, and $B^2 = I_{n^2-4}$. In calculating $AB$,
the $i$-th column of $A$ is multiplied by $i$-th entry in the diagonal of $B$, while the $i$-th row of $A$
is multiplied by $i$-th entry in the diagonal of $B$ in calculating $BA$. In the lattice $S^* \subset L_n$,
each site at $(j_1,k_1)$ with $\gamma_{j_1,k_1} = \gamma$ is surrounded by sites $(j_2,k_2)$ with $\gamma_{j_2,k_2} = -\gamma$
and vice verse, so that each nonzero entry in $A$ must sit in a position $(\mathcal{P}(j_1,k_1), \mathcal{P}(j_2,k_2))$, where $\gamma_{j_1,k_1}=-\gamma_{j_2,k_2}$. Based on the facts above, each nonzero entry in $A$ changes its sign either
in $AB$ or in $BA$, hence $AB = -BA$.
\end{proof}

\begin{proposition}
\label{lemma5_2}
Under the assumptions of Proposition~\ref{proposition5_1}, $A$ has a zero eigenvalue of multiplicity at least $n-2$.
\end{proposition}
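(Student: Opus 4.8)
The plan is to recognize $A$ as (minus) the adjacency matrix of the grid graph $L_n$ restricted to the punctured lattice $S^\ast=L_n\setminus S$, and then to manufacture the required null vectors of $A$ by restricting to $S^\ast$ those null vectors of the full grid adjacency matrix that happen to vanish on the four central sites of $S$. The mechanism behind the bound is that, although passing from $L_n$ to $S^\ast$ removes four coordinates, a central reflection symmetry of $L_n$ forces only two of the corresponding four evaluation conditions to be independent, so that the naive count $n-4$ improves to $n-2$.

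Concretely, I would first read off from (\ref{eqn_M_near1}) that $A=\re(K)$ has zero diagonal and entry $-1$ exactly between nearest-neighbour sites of $L_n$ that both lie in $S^\ast$; thus $A=-\mathrm{Adj}(S^\ast)$, the induced-subgraph adjacency matrix. Next I would invoke the standard spectral decomposition of the grid with zero (Dirichlet) boundary conditions: the vectors $v_{p,q}(j,k)=\sin(\tfrac{\pi pj}{n+1})\sin(\tfrac{\pi qk}{n+1})$ for $1\le p,q\le n$ form a basis of $\mathbb{R}^{L_n}$ and satisfy $\mathrm{Adj}(L_n)\,v_{p,q}=(2\cos\tfrac{\pi p}{n+1}+2\cos\tfrac{\pi q}{n+1})\,v_{p,q}$, so that $V_0:=\mathrm{span}\{v_{p,q}:p+q=n+1\}$ is an $n$-dimensional subspace of $\ker\mathrm{Adj}(L_n)$. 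The key observation concerns the behaviour of $V_0$ under the central point reflection $\rho:(j,k)\mapsto(n+1-j,n+1-k)$, a graph automorphism of $L_n$ that preserves $S$ and pairs $(\tfrac n2,\tfrac n2)\leftrightarrow(\tfrac n2+1,\tfrac n2+1)$ and $(\tfrac n2,\tfrac n2+1)\leftrightarrow(\tfrac n2+1,\tfrac n2)$. Using $\sin(\pi p-\theta)=(-1)^{p+1}\sin\theta$ for integer $p$ one finds $(v_{p,q}\circ\rho)(j,k)=(-1)^{p+q}v_{p,q}(j,k)$, and since $n$ is even the exponent $p+q=n+1$ is odd, so $v(n+1-j,n+1-k)=-v(j,k)$ for every $v\in V_0$. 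Hence the evaluation map $E:V_0\to\mathbb{R}^4$, $v\mapsto (v(s))_{s\in S}$, takes all its values in the two-dimensional subspace determined by $v(\tfrac n2,\tfrac n2)$ and $v(\tfrac n2,\tfrac n2+1)$; therefore $\mathrm{rank}\,E\le 2$ and $W:=\ker E=\{v\in V_0:\,v|_S=0\}$ has $\dim W\ge n-2$.

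To finish, I would check that restriction to $S^\ast$ maps $W$ injectively into $\ker A$. For $v\in W$ and $(j,k)\in S^\ast$, writing $(\mathrm{Adj}(L_n)v)_{j,k}=0$ and discarding the neighbours of $(j,k)$ that lie in $S$ (where $v$ vanishes) gives $\sum_{(j',k')\in S^\ast,\ (j',k')\sim(j,k)}v(j',k')=0$, i.e. $\mathrm{Adj}(S^\ast)\,v|_{S^\ast}=0$, so $A\,v|_{S^\ast}=0$; and if $v|_{S^\ast}=0$ then, together with $v|_S=0$, we get $v\equiv 0$ on $L_n=S^\ast\sqcup S$. Consequently $\dim\ker A\ge\dim W\ge n-2$, which is the assertion. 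I expect the only real care needed to be bookkeeping: confirming the sparsity pattern $A=-\mathrm{Adj}(S^\ast)$, tracking the $\rho$-orbits of the four central sites, and keeping the parity of $n$ straight, since it is exactly this parity that produces the crucial sign $(-1)^{p+q}=-1$; no estimate or delicate perturbation argument is involved, so there is no serious analytic obstacle.
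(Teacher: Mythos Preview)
Your argument is correct and follows essentially the same route as the paper's proof: both use the Dirichlet sine eigenbasis of the full grid $L_n$, identify the $n$-dimensional zero eigenspace $\{p+q=n+1\}$, exploit the antisymmetry $v(n+1-j,n+1-k)=-v(j,k)$ to reduce the rank of the evaluation map on $S$ to two, and conclude $\dim\ker A\ge n-2$. Your write-up is in fact slightly cleaner in two respects---you make explicit that the parity of $n$ is what forces $(-1)^{p+q}=-1$, and you verify carefully that restriction to $S^\ast$ carries $W$ injectively into $\ker A$---but the underlying mechanism is identical.
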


\begin{proof}
At first, we assume the lattice is $L_n$ and consider the spectral problem $\tilde{A} u = \lambda u$, where
\begin{eqnarray}
\label{Ln_A_spectral}
\tilde{A}_{(j_1-1)n+k_1,(j_2-1)n+k_2}=
\begin{cases}
-1, &|j_1-j_2|+|k_1-k_2|=1\cr
0, &{\rm otherwise}
\end{cases}
\end{eqnarray}
for $(j_{1,2},k_{1,2})\in L_n$. In fact, this problem represents the difference equations
\begin{equation}
\label{discrete-Laplacian}
-u_{j,k+1} - u_{j,k-1} - u_{j+1,k} - u_{j-1,k} = \lambda u_{j,k}, \quad (j,k) \in L_n,
\end{equation}
which are closed with the Dirichlet end-point conditions.
This spectral problem can be solved with the double discrete Fourier transform, from which
we obtain the eigenvalues
$$
\lambda_{l,m} = -2 \cos(\frac{l\pi}{n+1}) - 2 \cos(\frac{m\pi}{n+1})
$$
and the eigenvectors
$$
u_{l,m}(j,k) = \sin(\frac{j l\pi}{n+1}) \sin(\frac{k m\pi}{n+1})
$$
for $1 \leq l,m \leq n$. Thus, the spectral problem (\ref{discrete-Laplacian})
has $n^2$ eigenvalues, among which $n$ eigenvalues are zero and they correspond to $l + m = n+1$.

Now, the spectral problem $A u = \lambda u$
is actually posed in $S^* := L_n \backslash S$, which is different from the spectral problem
(\ref{discrete-Laplacian}) by adding four constraints $u(j,k) = 0$ for $(j,k) \in S$.
A linear span of $n$ linearly independent eigenvectors for the zero eigenvalue of multiplicity $n$
satisfies the four constraints at the subspace, whose dimension is at least $n-4$. In fact,
it can be directly checked that
$$
u_{l,n+1-l}(j,k) = (-1)^{k+1} \sin(\frac{j l\pi}{n+1}) \sin(\frac{k l\pi}{n+1})=
-u_{l,n+1-l}(n+1-j,n+1-k)
$$
for any $1\leq j,k,l\leq n$. In particular, we notice that the $n\times 4$ matrix consisting of
$u_{l,n+1-l}(j,k)$, where $1\leq l\leq n$ and $(j,k)\in S$, is of rank $2$. Therefore,
linear combinations of $\{u_{l,n+1-l}\}_{l=1}^n$ that satisfy the constraints $u(j,k) = 0$ for $(j,k) \in S$
form a subspace of dimension $n-2$ and the proposition is proved.
\end{proof}

\begin{lemma}
\label{lemma4_3}
Let $\gamma_{j,k}=(-1)^{j+k}\gamma$ for all $(j,k) \in L_n$, $D = I_{n^2-4}$, and $K$ be given by (\ref{eqn_M_near1}).
Then, there is at least one pair of real eigenvalues $\lambda^{(1)}$ of the reduced eigenvalue problem
(\ref{eqn_ev1_2}) for every $\gamma \neq 0$.
\end{lemma}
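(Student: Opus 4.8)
The plan is to produce, for every $\gamma\neq 0$, a nonzero purely imaginary eigenvalue of the matrix $K$ from (\ref{eqn_M_near1}). Since $D = I_{n^2-4}$, the reduced problem (\ref{eqn_ev1_2}) is just $K\boldsymbol{c} = i\lambda^{(1)}\boldsymbol{c}$, so an eigenvalue $\pm i\gamma$ of $K$ corresponds to a real eigenvalue $\lambda^{(1)} = \pm\gamma$ of the reduced problem. I would write $K = A + i\gamma B$, where $A = \re(K)$ is real symmetric and $B = \im(K|_{\gamma=1})$ is a real diagonal matrix with $B^2 = I_{n^2-4}$; by Proposition \ref{proposition5_1}, $AB = -BA$.

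First I would record two discrete symmetries of $K$. Besides $AB=-BA$, introduce the lattice reflection $H\colon (j,k)\mapsto (j,n+1-k)$. A direct check shows that $H$ maps the central cell $S$ onto itself, hence permutes $S^\ast = L_n\setminus S$, and that it commutes with the Dirichlet Laplacian on $L_n$; therefore, as an operator on $\mathbb{C}^{S^\ast}$, $H$ satisfies $H^2 = I$ and $HAH = A$. Because $n$ is even, $(-1)^{j+(n+1-k)} = -(-1)^{j+k}$, so $HBH = -B$. Consequently $HKH = A - i\gamma B = \bar K$, and then $(BH)K(BH)^{-1} = B\bar K B = -(A+i\gamma B) = -K$. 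Thus $K$ is similar to $-K$ (and also to $\bar K$), so its spectrum is invariant under $\mu\mapsto-\mu$; this is the counterpart, for the reduced problem, of the eigenvalue symmetries recorded in Remarks \ref{remark3_1}--\ref{remark3_2}.

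Next I would locate an eigenvector. From $AB=-BA$, the matrix $B$ leaves $\ker A$ invariant: $Av=0$ implies $A(Bv) = -B(Av) = 0$. By Proposition \ref{lemma5_2}, $\dim\ker A\geq n-2\geq 2>0$ (the statement is non-vacuous only for even $n\geq 4$), so $B|_{\ker A}$ is an involution on a nontrivial space and has an eigenvector: there is $v\in\ker A$, $v\neq 0$, with $Bv = \varepsilon v$ for some $\varepsilon\in\{+1,-1\}$. Then $Kv = Av + i\gamma Bv = \varepsilon\, i\gamma\, v$, so $\varepsilon i\gamma\in\mathrm{spec}(K)$, and by the $\mu\mapsto-\mu$ symmetry above, $-\varepsilon i\gamma\in\mathrm{spec}(K)$ as well. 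Translating back through (\ref{eqn_ev1_2}), the reduced eigenvalue problem has the two eigenvalues $\lambda^{(1)} = \gamma$ and $\lambda^{(1)} = -\gamma$, which is the asserted pair of real eigenvalues since $\gamma\neq 0$.

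The main obstacle is recognizing that the argument does not close on the four-site data alone: one needs the auxiliary reflection $H$, and one must verify that it genuinely preserves the central cell $S$ and commutes with the truncated Laplacian $A$, so that $H$ acts on $\mathbb{C}^{S^\ast}$ and can be combined with $B$; the anticommutation $HB=-BH$ is then a one-line parity computation that uses that $n$ is even. Everything else is routine: the decomposition $K = A+i\gamma B$, the invariance of $\ker A$ under $B$, and the bound $\dim\ker A\geq n-2$ are immediate or provided by Propositions \ref{proposition5_1} and \ref{lemma5_2}. (Alternatively, one can bypass the spectral symmetry of $K$ by applying $HB=-BH$ directly on $\ker A$: if $B|_{\ker A}$ were $\pm I$ then $Hv=0$ for all $v\in\ker A$, forcing $\ker A=0$, a contradiction; hence $B|_{\ker A}$ has eigenvectors for both $+1$ and $-1$, again yielding the eigenvalues $\pm i\gamma$ of $K$.)
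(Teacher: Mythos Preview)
Your argument is correct, but the paper reaches the conclusion by a shorter route that avoids the reflection $H$ entirely. From $AB=-BA$ and $B^2=I$, one line gives
\[
K^2 = (A+i\gamma B)^2 = A^2 + i\gamma(AB+BA) - \gamma^2 B^2 = A^2 - \gamma^2 I_{n^2-4},
\]
a real symmetric matrix. Since $A$ has nontrivial kernel by Proposition~\ref{lemma5_2}, the matrix $K^2$ has the eigenvalue $-\gamma^2$, and through the squared version of (\ref{eqn_ev1_2}) this yields $(\lambda^{(1)})^2=\gamma^2$, i.e.\ the real pair $\lambda^{(1)}=\pm\gamma$. Your approach instead restricts $K$ to $\ker A$, observes $K|_{\ker A}=i\gamma B|_{\ker A}$, and uses an eigenvector of the involution $B|_{\ker A}$ to exhibit $\varepsilon i\gamma\in\mathrm{spec}(K)$ directly; you then bring in the lattice reflection $H$ (with $HAH=A$, $HBH=-B$) to force the $\mu\mapsto-\mu$ symmetry and hence the second sign. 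The paper's squaring trick is more economical, but your argument is more explicit about why \emph{both} signs $\pm i\gamma$ actually occur as eigenvalues of $K$ (the paper's phrasing ``corresponds to a pair $\lambda^{(1)}=\pm\gamma$'' takes this for granted). Your alternative at the end --- using $H$ only on $\ker A$ to show $B|_{\ker A}$ cannot be $\pm I$ --- is the cleanest way to make that point and would let you drop the global spectral-symmetry digression.
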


\begin{proof}
By Proposition \ref{proposition5_1}, we can write $K=A+i\gamma B$ where $B^2=I_{n^2-4}$ and $AB=-BA$.
Squaring the reduced eigenvalue problem (\ref{eqn_ev1_2}), we obtain
$$
K^2\boldsymbol{c} = \left( A^2 - \gamma^2 I_{n^2-4}\right) \boldsymbol{c} = -(\lambda^{(1)})^2\boldsymbol{c}.
$$
Since $A$ is symmetric, eigenvalues $\lambda^{(1)}$ are all purely imaginary for $\gamma = 0$. Nonzero
eigenvalues $\lambda^{(1)}$ remain nonzero and purely imaginary for sufficiently small $\gamma \neq 0$,
since $K^2$ is symmetric and real. On the other hand, by Proposition \ref{lemma5_2},
$A$ always has a zero eigenvalue ($n\geq 4$), hence
$K^2$ has a negative eigenvalue $-\gamma^2$, which corresponds
to a pair of real eigenvalues $\lambda^{(1)} = \pm \gamma$.
\end{proof}

Coming back to the branches (1-1), (1-2), and (1-3) of the $\pt$-symmetric configurations (S1) studied in Section 2.1,
they correspond to the case $\gamma_{j,k}=(-1)^{j+k}\gamma$ for all $(j,k)\in L_n$. By Lemma \ref{lemma4_3},
the reduced eigenvalue problem (\ref{eqn_ev1_2}) has at least one pair of real eigenvalues $\lambda^{(1)}$.
Therefore, all $\pt$-symmetric configurations are unstable on the truncated square lattice $L_n$ because of the
sites in the set $S^*$. In addition, all the configurations are also unstable because of the sites in
the central cell $S$, as explained in Section 4.1.

For the branches (2-1) and (2-2) of the $\pt$-symmetric configurations (S2) studied in Section 2.2,
they correspond to $\gamma_{j,k}=(-1)^{j}\gamma$ for all $(j,k) \in L_n$. We have checked numerically
for all values of $n$ up to $n = 20$ that the reduced eigenvalue problem (\ref{eqn_ev1_2}) has
all eigenvalues $\lambda^{(1)}$ on the imaginary axis,
at least for small values of $\gamma$.

Let $\gamma_c(n)$ be the largest value of $|\gamma|$, for which the eigenvalues of
$S$ are purely real. Figure \ref{fig4_split} shows
how $\gamma_n$ changes with respect to even $n$ from numerical computations. It is seen from the figure that
$\gamma_c(n)$ decreases towards $0$ as $n$ grows.
The latter property agrees with the analytical predictions in \cite{lett2} for unbounded $\pt$-symmetric lattices
with spatially extended gains and losses.

\begin{figure}[!htbp]
\centering
\begin{tabular}{ccc}
\includegraphics[width=7cm]{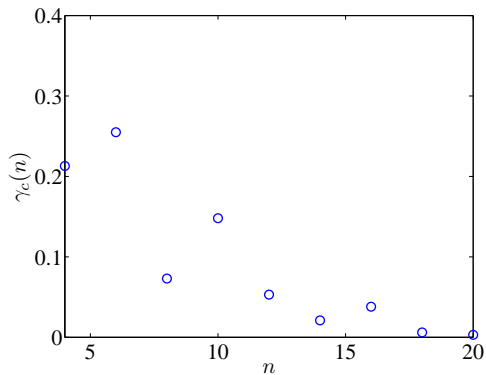}
\end{tabular}
\caption{The threshold $\gamma_c$ versus even $n$ for existence of stable eigenvalues in the reduced eigenvalue
 problem (\ref{eqn_ev1_2}) with $\gamma_{j,k}=(-1)^{j}\gamma$ for all $(j,k) \in L_n$. Accounting also for numerical errors,
here we call a numerically computed eigenvalue stable
if the absolute value of its real part is less than $10^{-7}$.}
\label{fig4_split}
\end{figure}

Besides the eigenvalue computations on the sites of $S^*$, one needs to add eigenvalue computations
on the sites of $S$ performed in Section 4.2. It follows from the results reported there that
only the branch (2-2-a) is thus potentially stable, see Figure \ref{fig2_2}.
The branch of $\pt$-symmetric configurations
remain stable on the extended square lattice $L_n$, provided $|\gamma| < \gamma_c(n)$.
In Figure \ref{fig4_1}, we give an example of the stable $\pt$-symmetric solution on
the $20$-by-$20$ square lattice that continues from the branch (2-2-a) on the elementary cell
$S$. When $\epsilon=0$, the phases of the solution on $S$ in (2-2-a) are
$\{ \theta_1, \pi-\theta_1, -\theta_1, \theta_1-\pi \}$,
where $\theta_1 =\frac{1}{2}\arcsin(\gamma)$, representing a discrete
soliton in the form of an
anti-symmetric (sometimes, called twisted) localized
mode~\cite{dnlsbook} if $\gamma = 0$. When $\epsilon = 0.02$ is small,
we can observe on Figure \ref{fig4_1} that the solution
is still close to the limiting solution at $\epsilon=0$ in terms of
amplitude and phase. Besides, the spectrum in the bottom right panel
of Figure \ref{fig4_1} verifies our expectation about its spectral stability,
bearing eigenvalues solely on the imaginary axis.

\begin{figure}[htbp]
\begin{tabular}{cc}
\includegraphics[width=7cm]{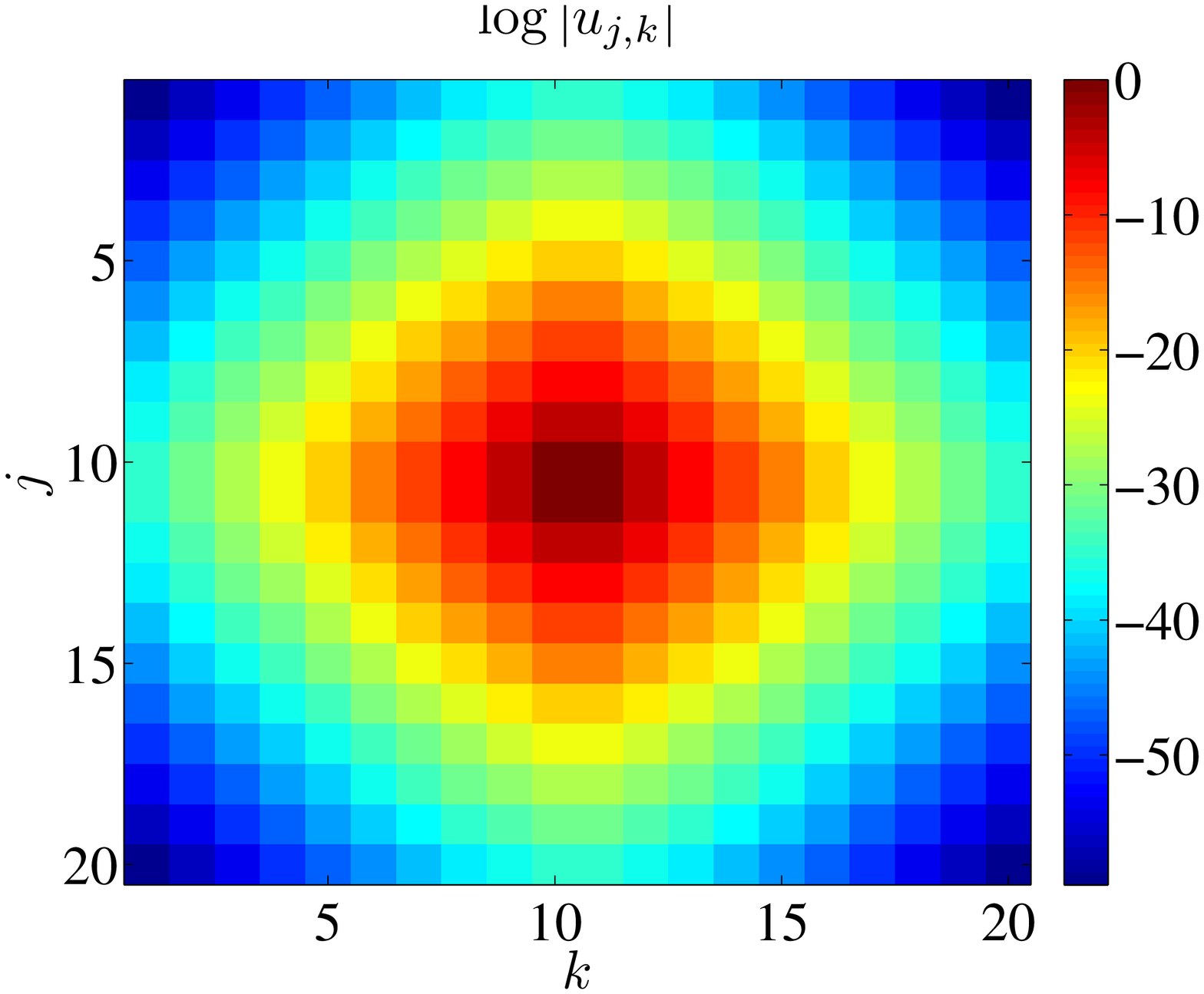}
\includegraphics[width=7cm]{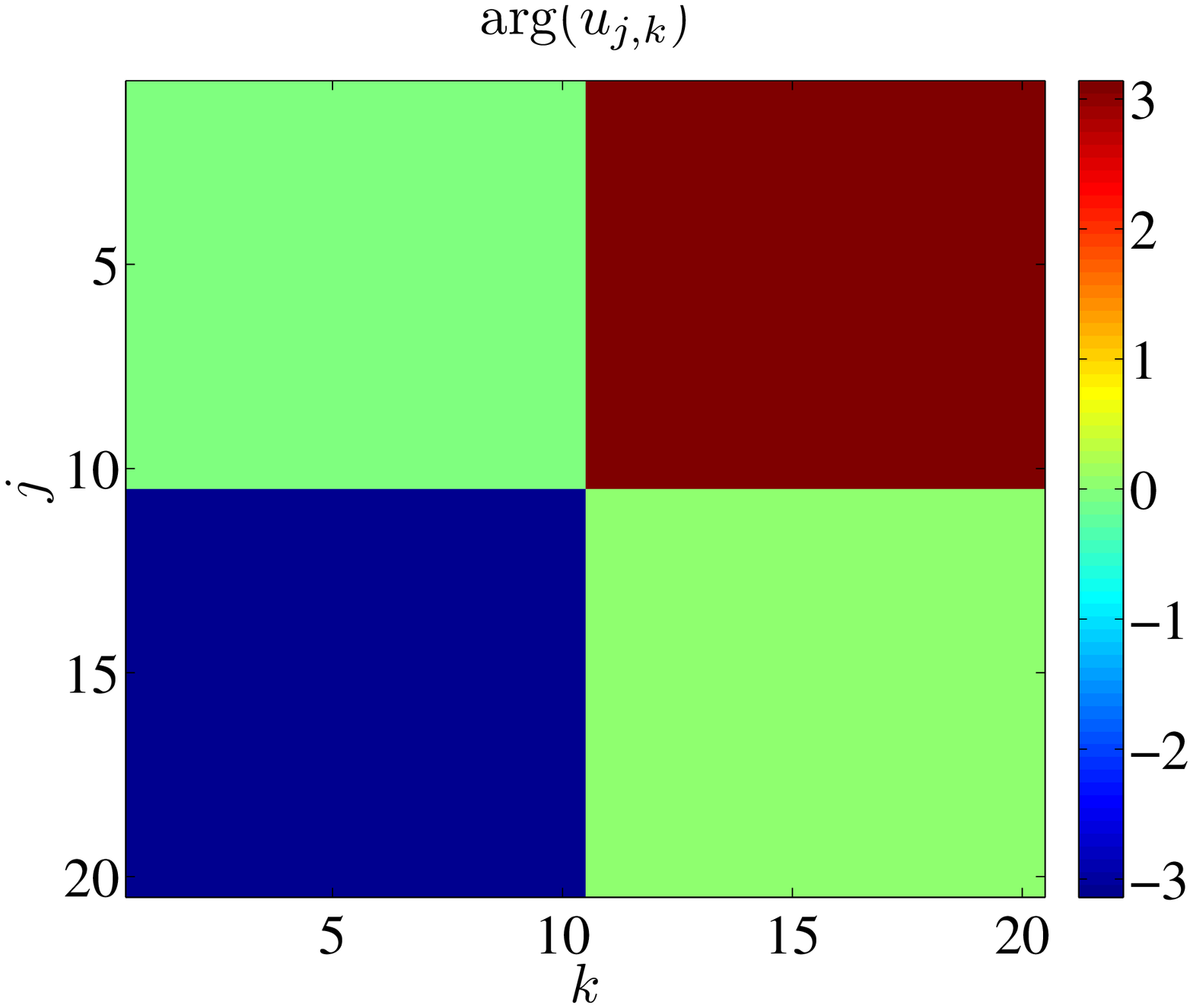}\\
\includegraphics[width=7cm]{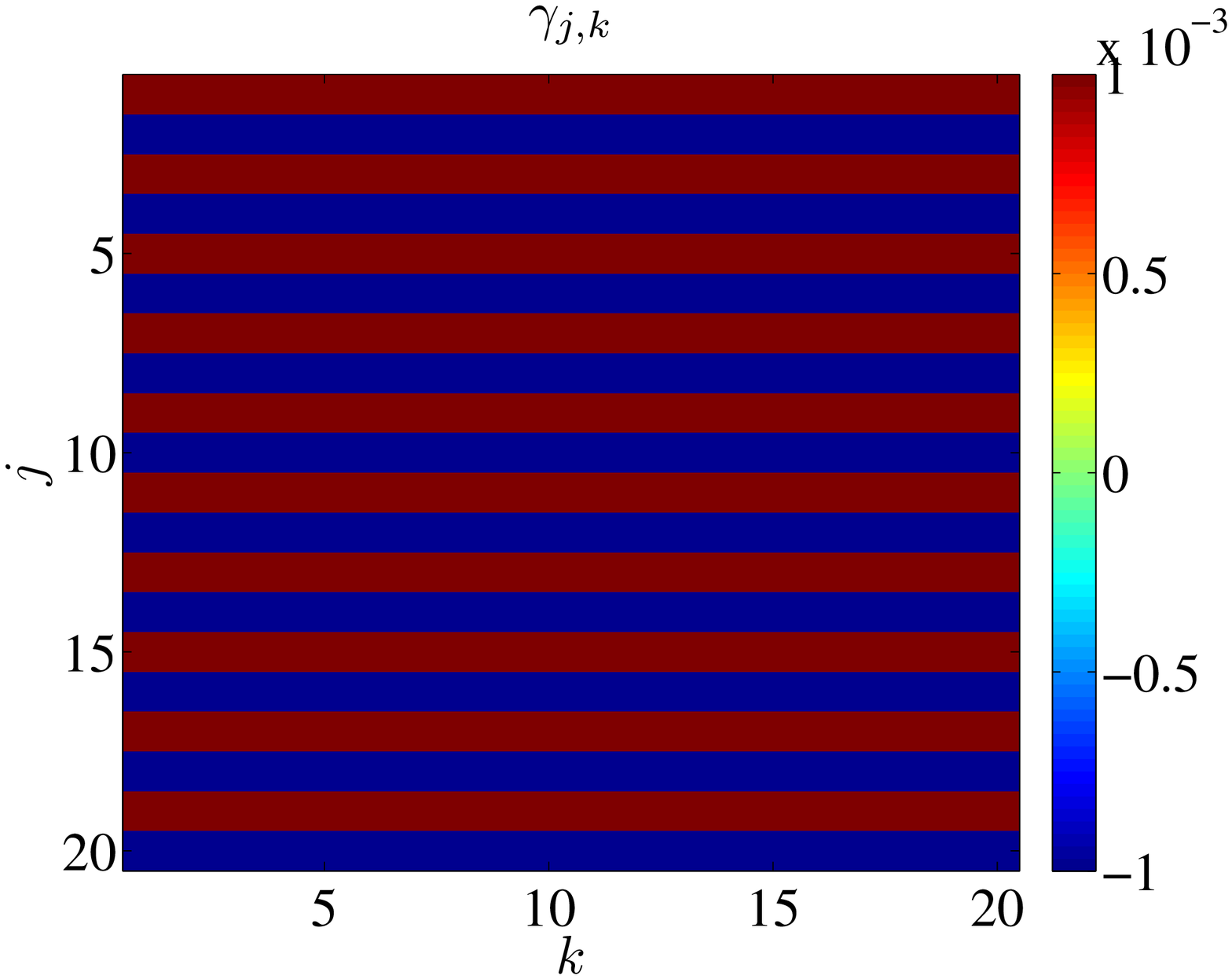}
\includegraphics[width=7cm]{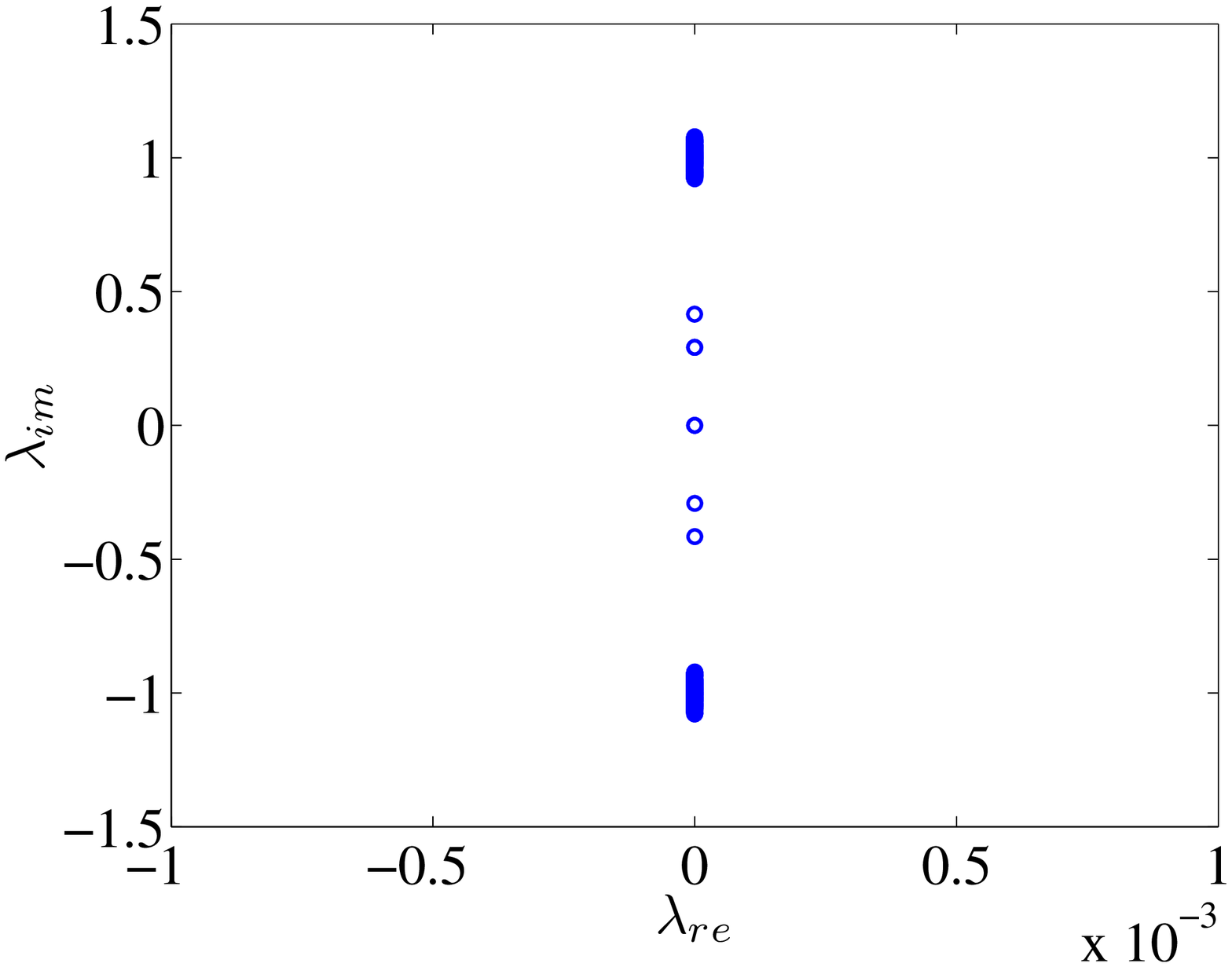}
\end{tabular}
\caption{An example of the branch (2-2-a) on the $20$-by-$20$ square lattice
with $\gamma_1=-\gamma_2=0.0001<\gamma_c(20)\approx 0.0003$ and $\epsilon=0.02$.
In the bottom right panel, we see eigenvalues $\lambda$ of the spectral problem (\ref{ev_matrix2})
are all on the imaginary axis, implying spectral stability of the stationary solution.}
\label{fig4_1}
\end{figure}

Finally, Figure \ref{fig4_2} provides some prototypical
examples of the numerical evolution of the branches (1-1-a) and (2-1-a),
which correspond to the $\pt$-symmetric configurations of Figure \ref{fig3_1}. Both
configurations are
unstable, therefore, the figure illustrates the development of these instabilities. In the (1-1-a) example (top panels),
the amplitudes of sites on $(11,11)$ and $(10,10)$ increase rapidly but those on $(10,11)$ and $(11,10)$ decrease.
In the (2-1-a) example (bottom panels), the situation is reversed and
sites on $(11,11)$ and $(11,10)$ grow rapidly while sites on $(10,10)$ and $(10,11)$ gradually decay.
Both of these time evolutions are also intuitive, as the
growth reflects the dynamics of the central sites bearing gain, while the
decay reflects that of the ones bearing loss.

\begin{figure}[!htbp]
\begin{tabular}{cc}
\includegraphics[width=7cm]{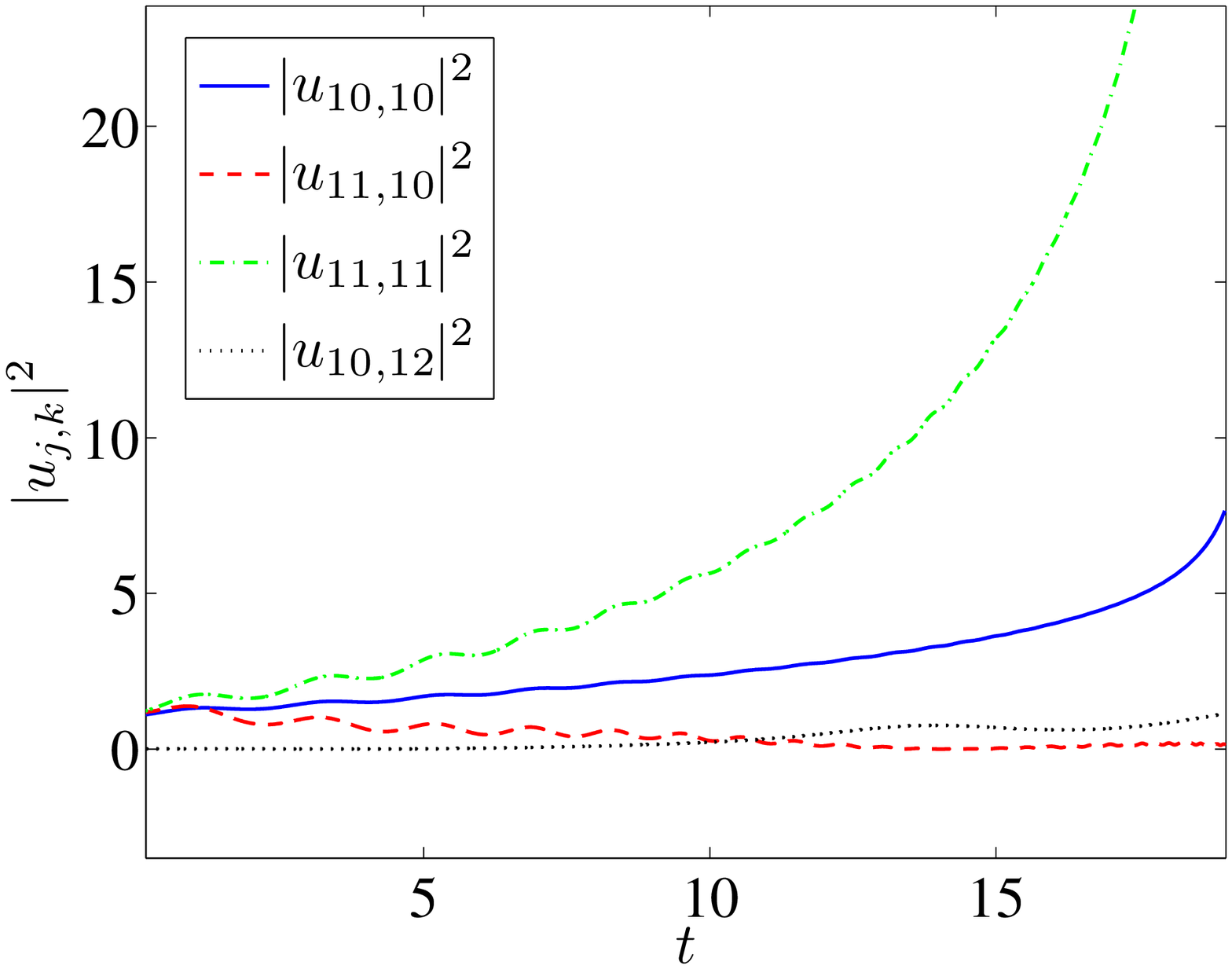}
\includegraphics[width=7cm]{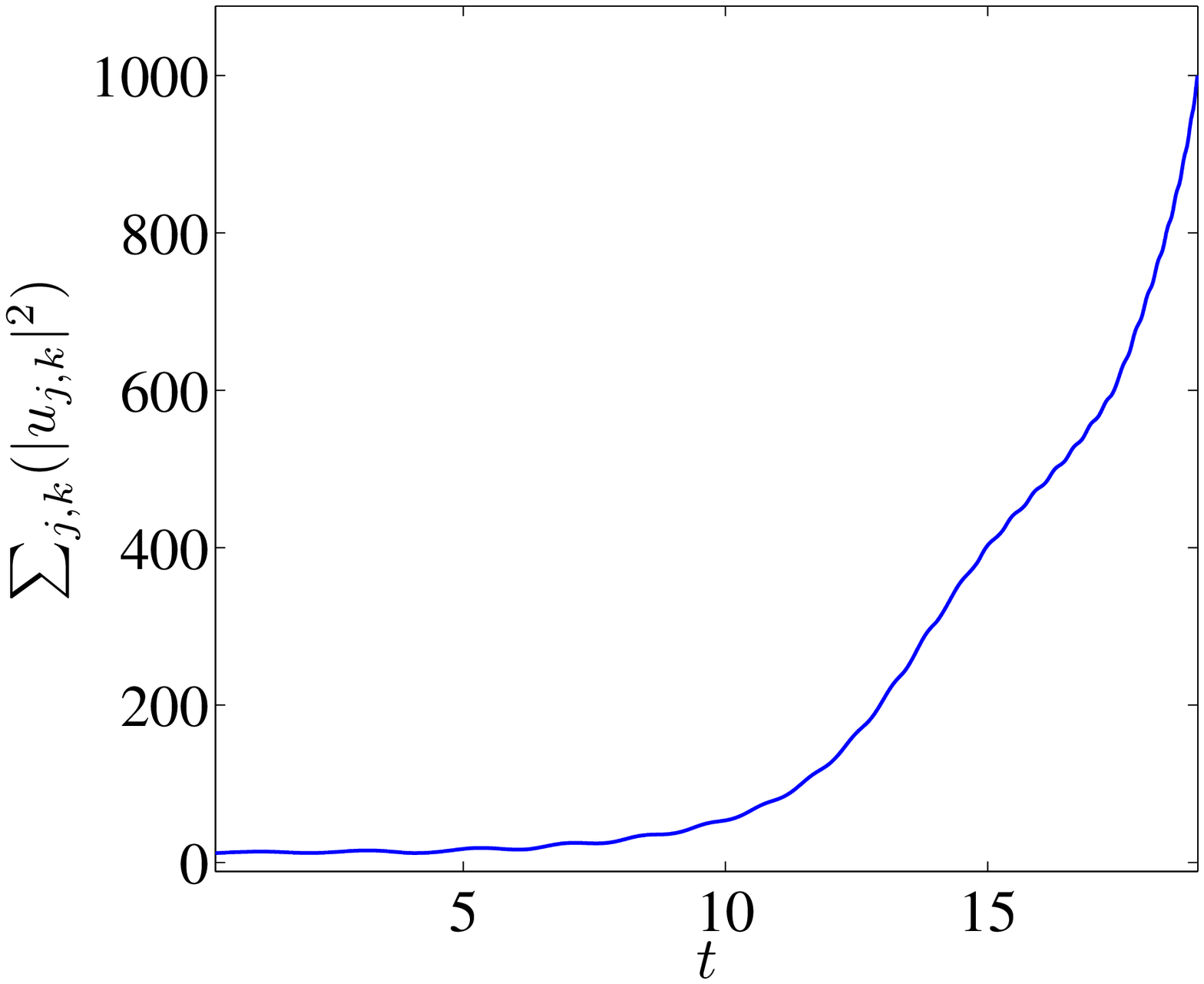}\\
\includegraphics[width=7cm]{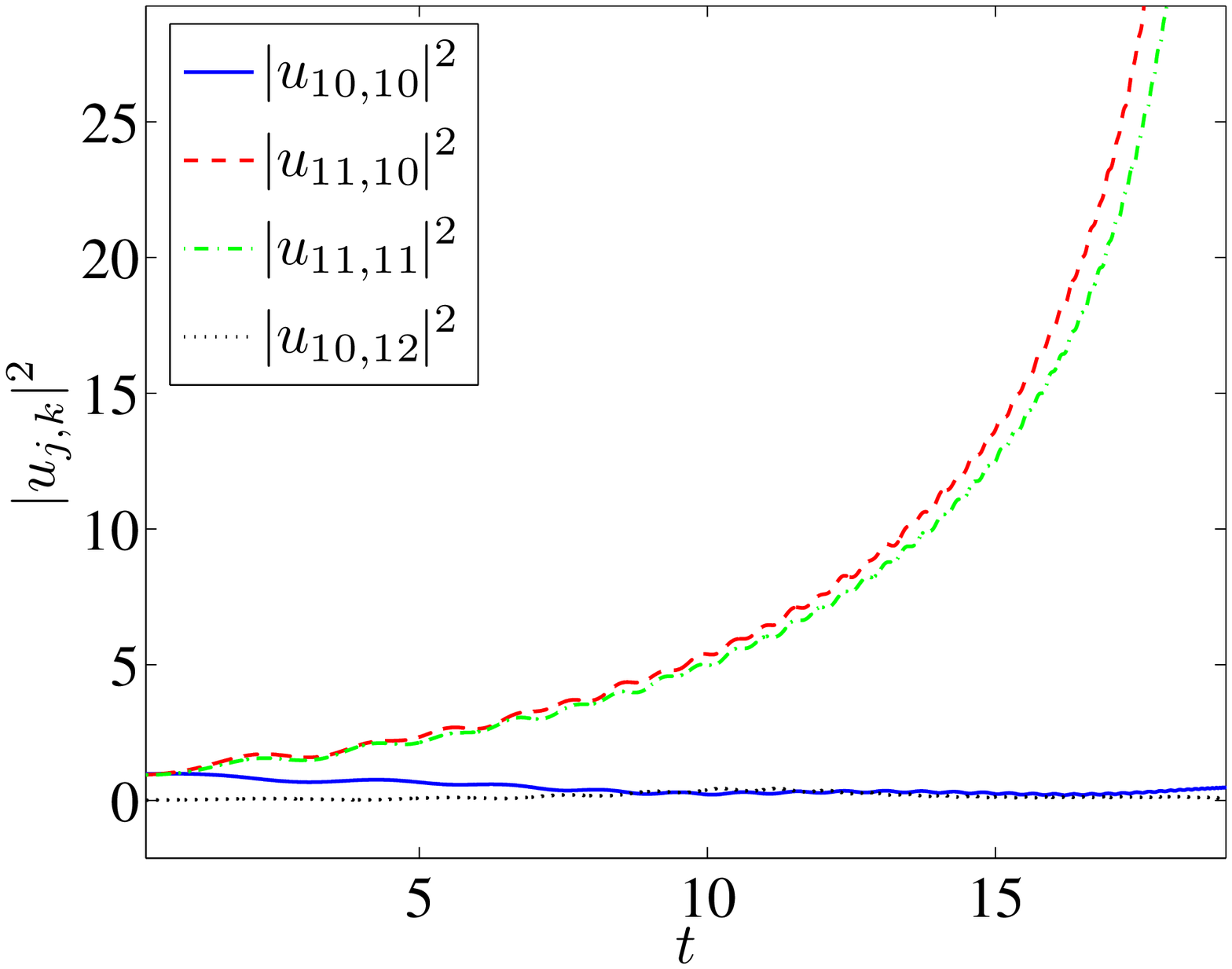}
\includegraphics[width=7cm]{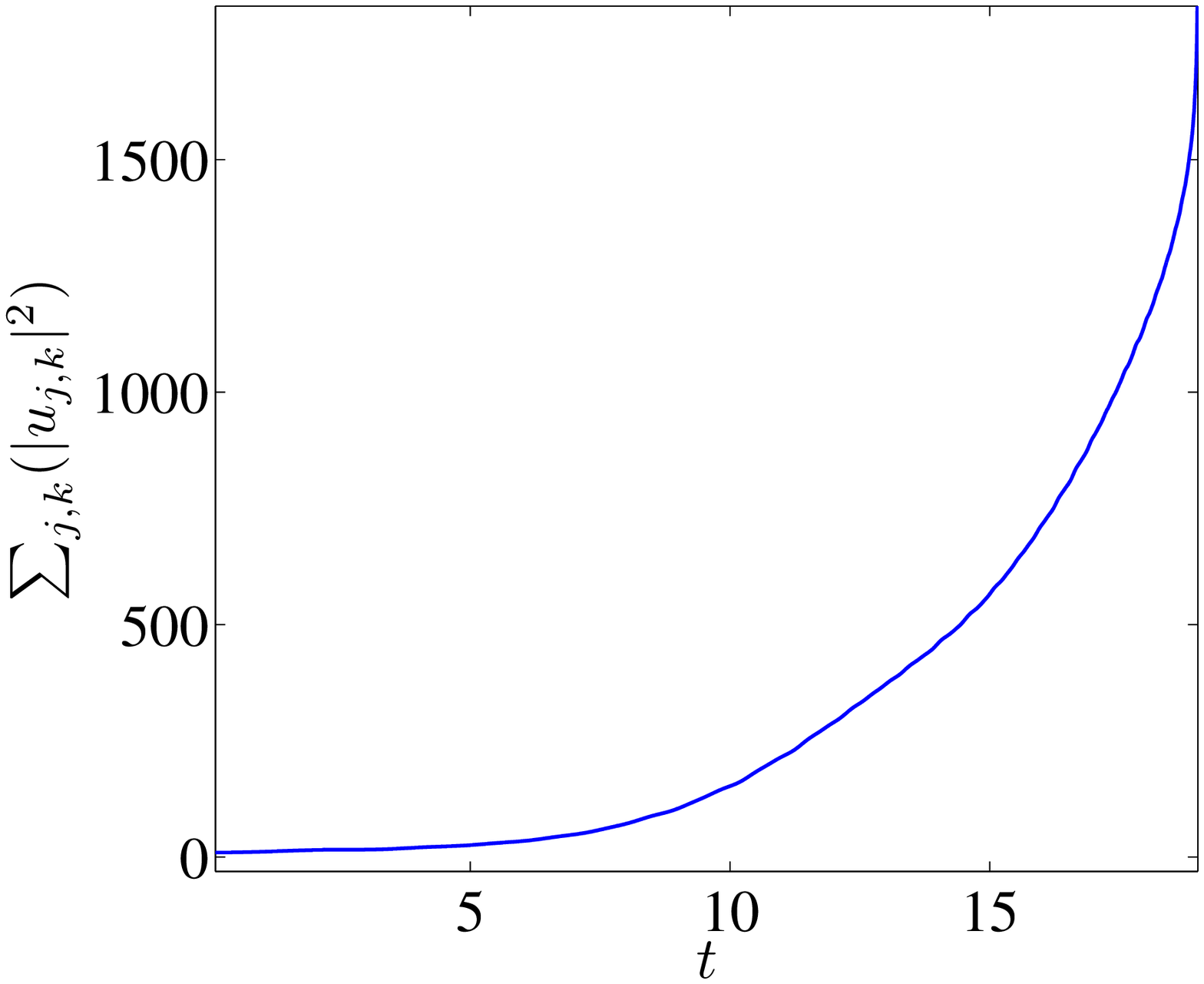}
\end{tabular}
\caption{The top panels show an example of the dynamics of the branch (1-1-a) on the 20-by-20 square lattice
with $\gamma_1=-\gamma_2=0.7$ and $\epsilon=0.1$. The bottom panels show similar dynamics for the branch
(2-1-a) with $\gamma_1=-\gamma_2=0.8$ and $\epsilon=0.1$.}
\label{fig4_2}
\end{figure}

\section{Conclusion}
\label{conclu}

In the present work, we have provided a systematic perspective on discrete soliton and vortex configurations
in the two-dimensional square lattices bearing $\pt$-symmetry. Both the existence and the stability features
of the $\pt$-symmetric stationary states were elucidated in the vicinity of a suitable anti-continuum limit,
which corresponds to the large propagation constant in optics. Interestingly, while discrete vortex solutions
were identified and a suitable family thereof could be continued
as a function of the gain-loss parameter $\gamma$, it was
never possible to stabilize the $\pt$-symmetric vortex configurations.
On the other hand, although states
extending discrete soliton configurations were found
generally to be also unstable, we found one exception of
generically stable (including under continuation) states.

This work paves the way for numerous additional explorations.
For instance, it may be relevant to extend the considerations
of the square lattice to those of hexagonal or honeycomb
lattices. Prototypical configurations in the $\pt$-symmetric settings
has been explored in~\cite{leykam}.
Moreover, it has been shown that the $\pt$-symmetric states may manifest unexpected
stability features, e.g., higher charge vortices
are more robust than the lower charge ones~\cite{terhalle}.
Another relevant possibility is to extend the present
consideration to a dimer lattice model, similarly to
what was considered e.g. in~\cite{malomchina}. The numerical
considerations of~\cite{malomchina} suggest that vortices
may be stable in suitable parametric intervals in such a setting.
Lastly, it may be of interest to extend the present
considerations also to three-dimensional settings,
generalizing analysis of the corresponding Hamiltonian model
of~\cite{kev3} (including cubes and diamonds)
and exploring their stability properties.

\vspace{1cm}

{\bf Acknowledgments:} P.G.K.~gratefully acknowledges the support of
NSF-DMS-1312856, as well as
the ERC under FP7, Marie Curie Actions, People, International Research Staff
Exchange Scheme (IRSES-605096). The work of D.P.
is supported by the Ministry of Education
and Science of Russian Federation (the base part of the
state task No. 2014/133).


\begin{thebibliography}{99}







\bibitem{malomchina} Z. P. Chen, J. F. Liu, S. H. Fu, Y.Y. Li, and B. A. Malomed,
Discrete solitons and vortices on two-dimensional lattices
of PT-symmetric couplers, Opt. Express 22, 29679–
29692 (2014).


\bibitem{kartash} Y.V. Kartashov, Vector solitons in parity-time-symmetric
lattices, Opt. Lett. {\bf 38}, 2600–2603 (2013).

\bibitem{dnlsbook} P.G. Kevrekidis,
{\it Discrete Nonlinear Schrodinger Equation: Mathematical Analysis, Numerical Computations and Physical Perspectives},
Springer-Verlag (Berlin, 2009).

\bibitem{kev2} P.G. Kevrekidis and D.E. Pelinovsky, Discrete vector on-site vortices,
Proceedings of the Royal Society A 462, 2671-2694 (2006)

\bibitem{pel1} P.G. Kevrekidis, D.E. Pelinovsky, and D.Y.Tyugin, Nonlinear dynamics in PT-symmetric lattices,
Journal of Physics A: Mathematical Theoretical 46, 365201 (17 pages) (2013)

\bibitem{pel2} P.G. Kevrekidis, D.E. Pelinovsky, and D.Y.Tyugin, Nonlinear stationary states in PT-symmetric lattices,
SIAM Journal of Applied Dynamical Systems 12, 1210-1236 (2013)

\bibitem{lett1} V.V. Konotop, D.E. Pelinovsky, and D.A. Zezyulin,  Discrete solitons in PT-symmetric lattices,
European Physics Letters 100, 56006 (6 pages) (2012)

\bibitem{moti} F. Lederer, G.I. Stegeman, D.N. Christodoulides,
G. Assanto, M. Segev and Y. Silberberg,
Discrete solitons in optics,
Phys. Rep. {\bf 463}, 1--126 (2008).


\bibitem{leykam} D. Leykam, V.V. Konotop and A.S. Desyatnikov,
Discrete vortex solitons and parity time symmetry,
Opt. Lett. {\bf 38}, 371--373 (2013).

\bibitem{guenther} K. Li, P.G. Kevrekidis, B.A. Malomed
and U. Guenther,
Nonlinear $\pt$-symmetric plaquettes,
J. Phys. A. {\bf 45}, 444021 (2012).

\bibitem{kev3} M. Lukas, D. Pelinovsky, and P.G. Kevrekidis, Lyapunov-Schmidt reduction algorithm
for three-dimensional discrete vortices, Physica D 237, 339-350 (2008)

\bibitem{mussli1} Z. H. Musslimani, K. G. Makris, R. El Ganainy, and D. N.
Christodoulides, Optical solitons in PT periodic potentials,
Phys. Rev. Lett. {\bf 100}, 030402–4 (2008).

\bibitem{mussli2} Z. H. Musslimani, K. G. Makris, R. El Ganainy, and D. N.
Christodoulides, Analytical solutions to a class of nonlinear
Schr¨odinger equations with PT-like potentials, J.
Phys. A {\bf 41}, 244019–12 (2008).


\bibitem{kev1} D.E. Pelinovsky, P.G. Kevrekidis, and D. Frantzeskakis,
Persistence and stability of discrete vortices in nonlinear Schrodinger lattices,
Physica D,  212, 20-53 (2005)

\bibitem{lett2} D.E. Pelinovsky, P.G. Kevrekidis, and D.J. Frantzeskakis,
PT-symmetric Lattices with extended gain/loss are generically unstable, European Physics Letters 101, 11002 (6 pages) (2013)

\bibitem{pel3} D.E. Pelinovsky, D. A. Zezyulin, and V.V. Konotop,  Nonlinear modes in a generalized PT-symmetric discrete nonlinear Schrodinger equation,  Journal of Physics A: Math. Theor. 47, 085204 (20pp) (2014)

\bibitem{kip} C. R{\"u}ter, K.G. Makris, R. El-Ganainy,
D.N. Christodoulides, M. Segev and D. Kip,
Observation of parity-time symmetry in optics,
Nature Phys. {\bf 6}, 192--195 (2010).



\bibitem{dmitriev} S.V. Suchkov, B. A. Malomed, S.V. Dmitriev, and Y. S.
Kivshar, Solitons in a chain of parity-time-invariant
dimers, Phys. Rev. E 84, 046609–8 (2011).

\bibitem{yura} S.V. Suchkov, A.A. Sukhorukov, J. Huang, S.V.
Dmitriev, C. Lee, Yu.S. Kivshar,
Nonlinear switching and solitons in PT-symmetric photonic
systems, arXiv:1509.03378.

\bibitem{terhalle} B. Terhalle, T. Richter, K.J.H. Law, D. G{\"o}ries,
P. Rose, T.J. Alexander, P.G. Kevrekidis, A.S. Desyatnikov, W. Krolikowski,
F. Kaiser, C. Denz and Yu.S. Kivshar,
Observation of double-charge discrete vortex solitons in hexagonal photonic lattices,
Phys. Rev. A {\bf 79}, 043821 (2009).

\bibitem{miri} M. Wimmer, A. Regensburger, M.-A. Miri, C. Bersch,
D.N. Christodoulides and U. Peschel,
Observation of optical solitons in PT-symmetric lattices,
Nature Communications {\bf 6}, 7782 (2014).


\bibitem{jianke} J.  Yang, Partially PT symmetric optical potentials with
all-real spectra and soliton families in multidimensions,
Opt. Lett. 39, 1133–1136 (2014).

\bibitem{zhu} X. Zhu, H. Wang, H. G. Li, W. He, and Y. J. He, Twodimensional
multipeak gap solitons supported by paritytime-
symmetric periodic potentials, Opt. Lett. 38, 2723–
2725 (2013).



\end{thebibliography}
\end{document}